\newcommand{\commentout}[1]{}
\newcommand{\LA}{\mathit{LA}}
\title{Computational Extensive-Form Games}
\author{
Joseph Y. Halpern \qquad \qquad Rafael Pass \qquad \qquad  Lior Seeman\\ 
Computer Science Dept.\\
Cornell University\\
Ithaca, NY\\
E-mail: halpern$|$rafael$|$lseeman@cs.cornell.edu
}
\newtheorem{theorem}{Theorem}[section]
\newtheorem{lemma}[theorem]{Lemma}
\newtheorem{definition}[theorem]{Definition}
\newtheorem{exm}[theorem]{Example}
\newtheorem*{theorem*}{Theorem}
\newtheorem*{corollary*}{Corollary}
\newtheorem*{conjecture*}{Conjecture}
\newtheorem*{lemma*}{Lemma}
\newtheorem*{thm*}{Theorem}
\newtheorem*{prop*}{Proposition}
\newtheorem*{obs*}{Observation}
\newtheorem*{rem*}{Remark}
\newtheorem*{definition*}{Definition}
\newtheorem*{exm*}{Example}
\newcommand{\G}{\mathcal{G}}
\newcommand{\F}{\mathcal{F}}
\newcommand{\I}{\mathcal{I}}
\newcommand{\<}{\langle}
\renewcommand{\>}{\rangle}
\newif\ifstateless
\begin{document}

\maketitle

\begin{abstract}
We define solution concepts appropriate for computationally bounded
players playing a fixed finite game.  To do so, we need to define what it
means for a \emph{computational game}, which is a 
%joe18
%sequence of games that are getting larger in
sequence of games that get larger in some appropriate sense, to
represent a single finite underlying 
extensive-form game.  
%joe8*: I shortened this; we could even cut this sentence
Roughly speaking, we require all the games in the sequence to have essentially
the same structure as the underlying game, except that 
two histories that are indistinguishable (i.e., in the same
information set) in the underlying game may correspond to histories
that are only computationally
indistinguishable in the computational game.  
We define a computational version of both Nash equilibrium
and sequential equilibrium for computational games, and show that
every Nash (resp., sequential) equilibrium in the underlying game corresponds
to a computational Nash (resp., sequential) equilibrium in the computational 
game.  
%joe8*: do we really show this?
One advantage of our approach is that 
if a cryptographic protocol represents an abstract game,
then we can analyze its strategic behavior in the abstract game, and
thus separate the cryptographic analysis of the protocol from the
strategic analysis.  
%lior24
\ifstateless
Finally, we use our approach to study the power of
having memory in a TM.  Specifically, we show that there is a gap between what
can be done with stateful strategies (ones that make use of memory)
and what can be done with stateless strategies. 
\fi
\end{abstract}

\section{Introduction}

%joe6*: I largely rewrote the intro.  I made so many changes that it
%didn't even make sense to mark them.
%lior25
%Game-theoretic models assume that the player are completely rational.
Game-theoretic models assume that the players are completely rational.
This is typically interpreted as saying that payers act optimally 
given (their beliefs about) other players' behavior. However, 
%joe18* changed all refs to joe.bib/z.bib tag
%as was first pointed out by Simons~\cite{simon1955behavioral}, acting
%joe27
%as was first pointed out by Simon~\cite{Simon55}, acting
as was first pointed out by Simon~\citeyear{Simon55}, acting
optimally may be hard.  Thus, there has been a great deal of
interest in capturing \emph{bounded rationality}, and finding solution
concepts appropriate for resource-bounded players.

%joe7
%One explanation of bounded rationality is to say that the players have
One explanation of bounded rationality is that players have
limits on their computational power.
%lior26
%joe27
%For example, the players might only be able to use strategies that can
For example, the players might be able to use only strategies that can
be implemented by a polynomial-time TM. 
While there has been a great deal of work%
%lior26
%~\cite{UV04,DHR00,GLR13,hubavcek2013limits,hubavcek2014cryptographically,HPS14,HPS14b} 
~\cite{DHR00,GLR13,HPS14,HPS14b,hubavcek2013limits,hubavcek2014cryptographically,UV04} 
%lior26
%on solving game-theoretic problems using polynomial-time TMs, 
on solving game-theoretic problems using computationally bounded players, 
there has not really been a careful study of the solution concepts
appropriate for such players.  What does it mean, for
example, to say that a fixed finite game played by polynomial-time
players has a Nash equilibrium?   

%lior26: Moved here from section 3
Consider for example the following two-player extensive-form game $G$:  At
the the empty history, player $1$ secretly chooses one of two
alternatives and puts her choice inside a sealed envelope.  
Player $2$ then also chooses one of these
two alternatives.  
Finally, player $1$ can either open the envelope and
reveal her choice or destroy the envelope.  
If she opens the envelope and she chose a different alternative than
player $2$, player 1 wins and gets a utility of 1; otherwise (i.e., if
player 1 either chose the same alternative as player $2$ or she 
destroyed the envelope) player 1 loses and gets a utility of $-1$.   
Player's 2's utility is the opposite of player 1's.
The game tree for this game is given in Figure~\ref{fig:extensiveGame}. 
Since player 2 acts without knowing
1's choice, the two histories where 1 made different choices are in
the same information 
set of player 2. 

\begin{figure}
                \centering
                \includegraphics[width=0.5\textwidth]{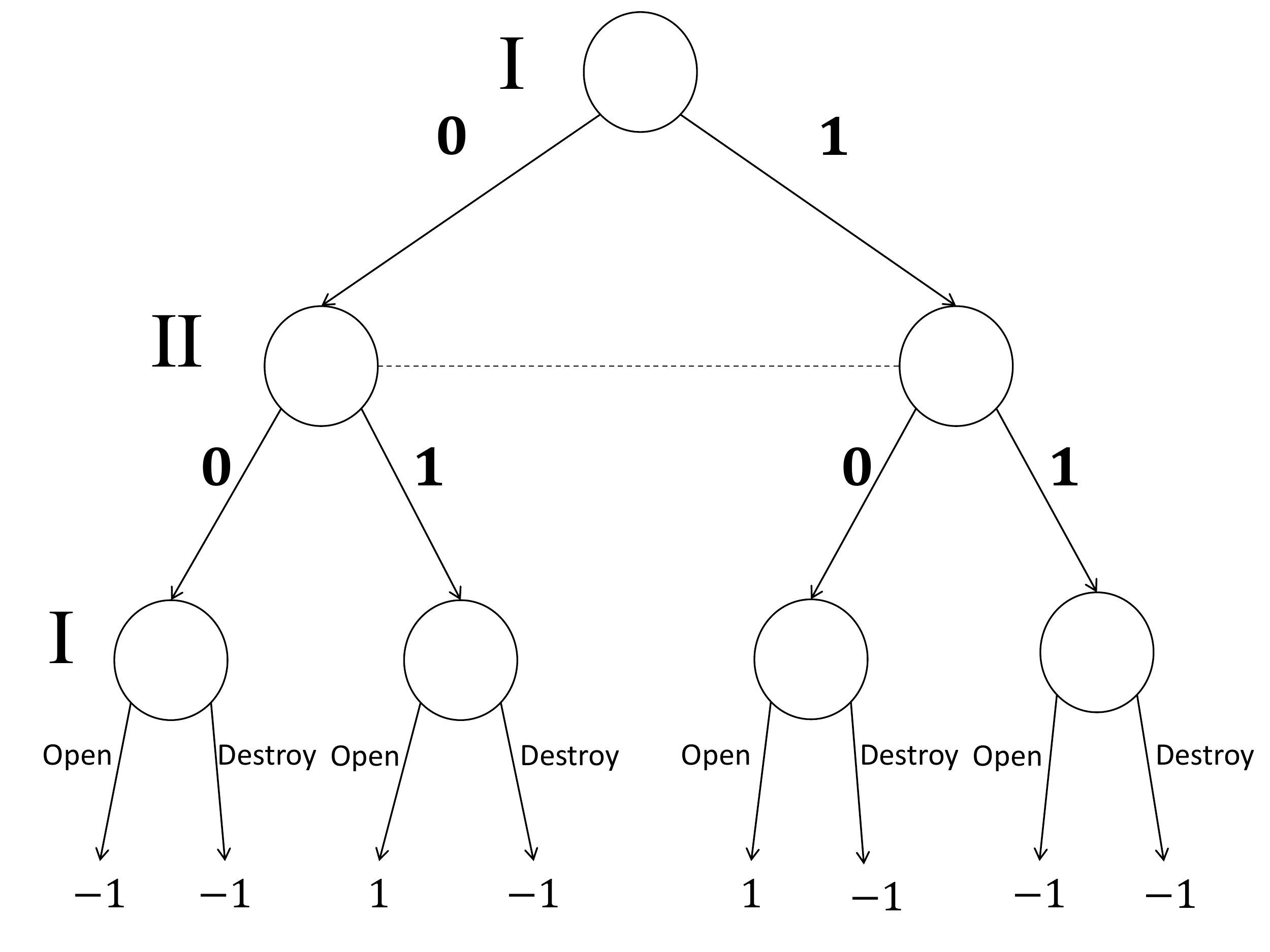} 
%joe18
% 	\caption{An example of a game with a computational analogue  sequence}
 	\caption{A game that can be represented by a computational game.}
	\label{fig:extensiveGame}
\end{figure}

%joe3
%This game can be easily modeled by the extensive-form game in
%figure~\ref{fig:extensiveGame}. This game has player $1$ choosing one of
%two action, and then player two acts while both possible histories are
%in the same information set). Player $1$ then has an option  
%joe4: most of this is unnecessary, given the description above.
%This description is  modeled by the extensive-form game in
%Figure~\ref{fig:extensiveGame}. In the game, player $1$ chooses one
%of two actions; then player 2 acts without knowing
%1's choice (so that the 
%two histories where 1 made different choices are in the same information
%set of playe 2). Player $1$ can then either open the envelope or
%destroy it. The fact that player $1$ choses his action secretly is
%modeled by the information structure of the game. 

%joe3
%If we focus on players which are computationally bounded, there is a
%joe4: moved material from above: you more or less say the same thing twice
Resource-bounded players can implement this game even without access
to envelopes, using what is called a \emph{commitment scheme}.
%If we focus on players who are computationally bounded, there is a
%different (and perhaps even more natural) interpretation of this
%game. We can think of player $1$ secretly selecting his action as
%playing an action which is an encryption of his choice. If player $2$ is
%unable to break that encryption, then he has to play his action without
%knowing what player $1$ did, as described above. In the last step, player $1$
%can decrypt his original action and reveal the choice he made. If it is
%hard for him to lie in this last step, then we get effectively the same
%game as the one described earlier.   (We can implement this ``difficulty
%in lying'' using \emph{Commitment Scheme}.) 
A commitment scheme is a two-phase two-party
protocol involving a sender (player 1 above) and a receiver (player
2).  The sender sends the receiver a message in the first phase that
commits him to a bit without giving the receiver information
%lior26
%joe27
%(at least no information he can efficiently compute from the message) 
%the message); this is the computational analogue of putting the
about the bit (at least no information that he can efficiently compute from
the message); this is the computational analogue of putting the
bit in an envelope.  In the second phase, the sender ``opens the
envelope'' by sending the receiver some information that allows the receiver to
confirm what bit the sender committed to in the first phase.
%lior26
Thus, we can talk about a game $\G$ (actually a sequence of games as
discussed later) where instead of player 1 using an abstract envelope
%joe27
%to send its choice to player 2, she uses a commitment scheme to do
to send her choice to player 2, she uses a commitment scheme to do
so. 

%lior26
Intuitively, we would like to say that the two games represent the
same underlying game. However, there are many subtleties in doing so. 
To get a sense of the problems, note
%lior26:
that to use commitment schemes we need the players to be
computationally bounded. But to 
talk about computation bounds (for instance, polynomial-time TMs), we
%joe27
%need to have a set of inputs  
need to have a sequence of inputs  
that can grow as a function of $n$.  
%lior26: not relevant
%When considering polynomial-time players in repeated games, we can
%consider longer and longer 
%repetitions of the game (this was done, for example, in
%\cite{BCIKMP08,HPS14}),  
%joe27
%Then how do we proceed if we want to
So how do we proceed if we want to
%lior26
%talk about equilibria for polynomial-time players in a fixed finite game?
talk about computationally bounded players in a fixed finite game?
%lior26
%joe27
%We therefor need to talk about a sequence of games instead of a single game. 
The idea is that we will have a sequence of games, potentially
increasing in size, that represents the single game. 
%joe27
%Another complication is that the information structure of these games
%might be different. 
As we shall see, the information structure of the games in the
infinite sequence might differ from that of the underlying game.
For example, in the games described before, while
%joe27
%a commitment scheme gives no information for a computationally bounded
%player, in an information theoretic sense it has perfect information
%-it uniquely identifies which bit was committed. Moreover, an
%unbounded player can learn that from the message. Thus, unlike in $G$,
%it uniquely identifies which bit was committed. Moreover, an unbounded
a commitment scheme gives no information to a computationally bounded
player, an unbounded player has  complete information; the encrypted
string uniquely identifies the bit that was committed. 
Thus, unlike in $G$, 
commitments to different bits in $\G$ are in different information
sets for player $2$. 

%lior26
%joe27
%Additional complications arise when we consider sollution concepts for
Additional complications arise when we consider solution concepts for
such games.  
%lior26
%NE 
Traditional notions of equilibrium
involve all players
making a best response.  But if we restrict to computationally bounded
players, there may not be a best response, especially for the kinds of
cryptographic problems that we would like to consider.  For example, for every
polynomial-time TM, there may be another TM that does a little better
by spending a little longer trying to do decryption.
%lior26: Is this still relevant?
%joe27: I think so.
(See \cite{HPR15} for an example of this phenomenon.)
%lior26
Moreover, when considering sequentially rational solution concepts it
%joe27
%is unclear what information structure should we consider, since as we
%discussed, the information structure of the computational games, does
%not seem to capture the computational limitations of the players. 
is unclear what information structure should be considered since, as we
discussed, the information structure of the computational games does
not capture the knowledge of computationally bounded players.

%lior26: I think this break helps the reader
\paragraph{Our contributions.}
As a first step to capturing these notions, 
%joe18
%lior26: not needed
%joe27: why is it not needed?
%after reviewing some relevant background in game theory and
%cryptography in Section~\ref{sec:review}, 
%in Section~\ref{sec:definitions}
in Section~\ref{sec:definitions},
we define 
what it means
for a sequence $\G = (G_1, G_2, \ldots)$ of games to represent a
single game $G$.  Intuitively, all the games in the sequence $\G$
%lior26
%represent $G$, but
have the same basic structure as $G$, but
%lior20
% they use some cryptographic primitives; in $G_k$,
%these primitives are used with a security parameter of length $k$.  
%joe18
%might use longer and longer strings to represent actions in $G$ (for
%example, might use an encryption of the action using longer and longer
%joe24
%might use increasingly longer strings to represent actions in $G$ (for
might use increasingly longer strings to represent actions in $G$ (e.g.,
an action $a$ in $G$ might be represented in $G_n$ by an encryption of
$a$ that uses a security parameter of length $n$).
%joe18
%Thus, in a sense, the games $G_1$, $G_2$, $G_3$, \ldots are growing
%lior26
%Thus, in a sense, the games $G_1$, $G_2$, $G_3$, \ldots grow
%%joe8
%%larger, and thus can be a suitable input for a polynoial-time TM. 
%%joe18
%%larger, so can be a suitable input for a polynoial-time TM. 
%%joe25
%%larger, so serve as the input to a polynomial-time TM. 
%larger, so can serve as the input to a polynomial-time TM. 
%lior26
%To make sense of the idea of the games in the sequence all representing $G$, 
More precisely, we require a mapping from histories in the games $G_n$
to histories in $G$,
%lior26:added
as well as a mapping from strategies in $G$ to strategies in $\G$,
and impose what we argue are reasonable
%lior26
%conditions on the mapping.%
conditions on these mappings.%
%joe6: is this right?
\footnote{The idea of games that depends on a
%joe22
%  security parameter goes back to Dodis, Halevi and
  security parameter goes back to Dodis, Halevi, and
%lior23
%  Rabin~\cite{DHR00}
Rabin~\citeyear{DHR00}.
Hub{\'a}{\v{c}}ek and
%lior23
%  Park~\cite{hubavcek2014cryptographically} also consider a mapping
  Park~\citeyear{hubavcek2014cryptographically} also consider a mapping
  between histories in a  
  computational game and histories in an abstract game, 
  although they do not
  consider the questions in the same generality that we do here.}
%Thus, each history in a game 
%present and motivate both structural conditions (for example how to
%map histories between $G$ and games in $\G$) and conditions on the
%players' strategies (for example how to map strategies in $G$ to
%strategies in $\G$ and back) that define when a sequence $\G$
%represents a single game $G$.  
%lior26
%joe27
%In section~\ref{sec:commitGame}, we show how this definition play out
In Section~\ref{sec:commitGame}, we show how this definition play out
in the example discussed above. 

%joe6
%Interestingly, this conditions do not force the same information
%lior26
%Interestingly, our conditions do not force the same information
%Joe27
%As hinted before, our conditions do not force the same information
%structure on both $G$ and $\G$. While two histories in the same
As hinted before, our conditions do not force the games in $\G$ to
have the same information
structure as $G$. While two histories in the same
information set in $G_n$ must map to two histories in the same
%lior24
%information set in the underlying game $G$, it may also be the case
information set in $G$, it may also be the case
%joe7
%that two histories in different information sets in $G_n$ are also
that two histories in different information sets in $G_n$ are 
mapped to the same information set in $G$.%
%lior26: Talked about this already
\commentout{
To understand why we want
to allow this, suppose that in $G_n$, there are histories $h_1$ and
%lior20
%$h_2$ where a bit 0 in encrypted in two different ways by agent 1.
$h_2$ where a bit 0 in encrypted using two different keys by agent 1.
Agent 2 can distinguish $h_1$ and $h_2$, because the
encryptions are two different strings; thus, $h_1$ and $h_2$ are in different
information sets for agent 2.  And they are both in a different
information set from $h_3$, where a bit 1 is encrypted. 
Nevertheless, both $h_1$ and $h_2$ are
mapped to the same history $h$ in $G$, where an unencrypted bit 0 is
put in an envelope, while $h_3$ is mapped to a history $h'$, where an
%joe7
%unencrypted by 1 is put in an envelope, which is in the same
unencrypted bit 1 is put in an envelope, which is in the same
information set as $h$.  
(See the example in Section~\ref{sec:commitGame} for more intuition.)
}
%Intuitively, lack of information can
%be replaced by computational indistinguishability. 
%lior26
%Although agent 2 can distinguish histories $h_1$, $h_2$, and $h_3$ above,
%joe27
%Although a player can distinguish two such histories (for example a
Although a player can distinguish two histories in different
information sets (for example a
commitment to $0$ and a commitment to $1$ in the example are two
different strings), 
at a computational level, she cannot tell them apart.  The encodings
just look like random strings to her.  There is a sense in which she,
as a computationally bounded player,
does not understand the ``meaning'' of these histories (although a
computationally unbounded player could break the commitment and tell
them apart).  
%Although a player
%might have perfect information about the history she is in, she does
%not really understand the "meaning" of this history, and thus has
%uncertainty about the strategies being played. 
In Section~\ref{sec:compInfSet}, we make this intuition precise, 
%joe6: As I said, I don't think the definition is interesting
%and define a notion of \emph{computational information sets}. 
showing that our requirements force all histories that map to the same
information set in $G$ to be \emph{computationally indistinguishable}, even if they are in different information sets
in $\G$.%

%lior11:moved late
%%joe7
%%We can also view the sequence $\G$ as an implementation of an
%%abstract game $G$. This notion is, in some sense, similar to the analogy
%We can view the sequence $\G$ as an implementation of an abstract
%game $G$. Under this view, the relationship between $\G$ and $G$ is
%similar in spirit to the relationship
%between ideal and real worlds often used in describing cryptographic
%protocols. We can view the ideal protocol as an abstract game $G$ and
%the sequence $\G$ as implementation of it using growing security
%%joe6
%%parameters. We exemplify this with an example using a simple game that
%%can be understood as a coin tossing protocol throughout the paper. 
%parameters. Our example in Section~\ref{sec:example} is
%motivated by this intuition.

Once we have defined our model of computational games, 
%lior26
%We focus on two solution concepts
%%joe18
%%here, Nash equilibrium and sequential equilibrium, and define
%here, Nash equilibrium and sequential equilibrium; in
%Section~\ref{sec:NE}, we define
%computational analogues of both.
%We then show that 
%if a strategy profile is a Nash (resp., sequential) equilibrium in the
%underlying game $G$, 
%then there is a corresponding strategy profile of polynomial time TMs
%that is a computational Nash (resp., sequential) equilibrium in
%$\G$. 
%joe27
%we next focus on defining analogues of two solution concepts, Nash
we focus on defining analogues of two solution concepts, Nash
equilibrium (NE) and sequential equilibrium. In
Section~\ref{sec:compNash}, we define a computational analogues of NE,
which considers only deviations that can be implemented by
polynomial-time TMs. It handles previously mentioned complications by
allowing for the strategy to be an $\epsilon$ best response for some
%joe27
%negligible function $\epsilon$ (Our definition of NE is very similar
negligible function $\epsilon$.  (Our definition of NE is similar in spirit
to the definition in Dodis, Halevi, and 
%joe27: what other places
%Rabin~\citeyear{DHR00} and other places in the literature).
Rabin~\citeyear{DHR00}.)
We show that if a strategy profile is a NE in the underlying game $G$, then there is a corresponding strategy profile of polynomial time TMs that is a computational NE in
$\G$. 
%lior26
Thus, we provide conditions that guarantee the existence of a
%joe27
%computational NE which addresses an open question in~\cite{Katz08}. 
computational NE, addressing an open question of Katz~\citeyear{Katz08}. 

%lior26
%joe27
%In Section~\ref{sec:compSeq}, we define a computational analog of
In Section~\ref{sec:compSeq}, we define a computational analogue of
sequential equilibrium. 
It is notoriously problematic to define sequentially rational solution
concepts in cryptographic protocols. For example,  Gradwohl, Livne,
and Rosen~\citeyear{GLR13} 
provide a general discussion of the issue, and give a partial solution
in terms of avoiding what they call ``empty 
threats'', which applies only to two-player games of
perfect information, and discuss possible extensions.  Our notion of
computational sequential equilibrium, which is quite different in
spirit from the solution concept of Gradwohl, Livne, and Rosen (and
arguably conceptually much simpler and much closer in spirit to the
standard game-theoretic definition), 
%lior26
%applies to arbitrary finite games; it thus may give further insight into
%issues as incredible threats.
applies to arbitrary sequence of games that represent a finite game,
and uses the intuitions we develop on the connection between
information sets in the underlying finite game and computational
indistinguishability in the sequence. 
We again show that if a strategy profile is a sequential equilibrium in the underlying game $G$, then there is a corresponding strategy profile of polynomial time TMs that is a computational sequential equilibrium in $\G$.
%lior20
%joe18
%We show in the last part of the paper that our approach leads to an
%lior22: not relevant here I think
%We show in Section~\ref{sec:correlated} that our approach leads to an
%arguably much simpler and more natural analysis of a protocol for
%implementing a correlated equilibrium without a mediator. 
%lior22
%joe22*: let's discuss.  This seems to me misplaced, and not really
%necessary.   In the part that
%you cut, I was trying to give an overview of what we did.
%Our notion of computational sequential equilibrium applies to
%computational games that represent a finite game, and leverages our
%intuitions on the computational indistinguishability of histories that
%map to the same information set in the underlying game. 

%lior24:
\ifstateless
Our work also gives insight into one other issue: the power of state
in a TM.  
In our model, we assume that TMs have state, 
%joe18*: I found this confusing.  While I understand why you changed
%it, I don't think it's a good idea at this point.
%lior21*: I just thought this is not true anymore. The TM can't store
%anything. It is stored for it. But if you think this is not an issue
%at this point I am fine with leaving it this way. 
%joe19*: maybe we should discuss this.  But this seems like the
%obvious definition of TM with state.  Your definition is weaker than
%mine.  In the end, it doesn't matter, because we're using your definition.
%lior20
%a separate memory tape in which they can store information, which can
%be used for computation in later rounds.  
%they can access in each round the random tape used in previous
%rounds.
%lior22*
%that is, a separate memory tape in which they can store information (such as
%the randomness used in earlier rounds), which can
%then be used in later rounds.
by which we mean the TM has access to a tape that records the
%joe22
%randomness it used so far. Note that with access to the randomness the
randomness it used so far. Note that with access to the randomness, the
TM can reconstruct any other state it might have had by rerunning the
computation done in previous rounds. 
%joe18: 
%For example, a TM might 
For example, a TM can 
%lior20
%store the encryption key it used in the first round so it could
%decrypt the message in a later round. 
%joe18
%reconstruct the encryption key used in a previous round
reconstruct the encryption key used in an earlier round by looking at the randomness used in creating it. 
%joe18
%However, since storing a state is usually expensive, a desirable property
Since storing information may be expensive, a desirable property
for a protocol is that it be 
stateless,
%lior22
%(where a \emph{stateless TM} is one whose next action can depend only
%on the history of play).
%joe22
%where a \emph{stateless TM} is one whose next action can depend only
where a \emph{stateless TM} is one which does not have access to the
randomness it used previously, so its next action can depend only
on the history of play.

%joe14: I'm not sure that ``analogy'' is the right work.  I think we
%can just remove this sentence without losing anything
%There is a strong analogy between the role of mixed
%vs. behavioral strategies and stateful vs. stateless TMs in the
%context of resource-bounded computation.
Stateful TMs seem necessary in order to implement mixed strategies,
that is, distributions over pure (deterministic) strategies.  A TM
plays a mixed strategy by initially tossing some coins, whose outcome
determines which pure strategy it plays.  It must 
%lior20
%store 
be able to access
the outcome of the initial coin tosses 
%in memory 
so that it can know what 
strategy to use in later rounds.  
%joe25
%On the other hand, a stateless TM can implement \emph{behavioral
%  strategies} (which are functions 
%from information sets to distribution over actions) but, intuitively,
% cannot implement mixed strategies. 
On the other hand, while a stateless TM can implement a \emph{behavioral
  strategy} (i.e., a function
from information sets to distribution over actions), as can a
stateful TM, a stateless TM cannot in general implement
a mixed strategy.
%joe14*: do we want to do anything with the example of the distinction
%between mixed and behavioral strategies that started off all this
%research.  
%lior17: That is section 5. Unless I am missing something.
%joe15: sorry ...

%lior11:paragraph break 
Kuhn~\citeyear{Kuhn53} proved that for every mixed strategy
profile in a finite extensive-form game with \emph{perfect recall} 
(where agents recall all the actions that they have performed and all
the information sets that they have gone through), there is a
behavioral strategy profile in the game that is equivalent in the
sense of inducing the same distribution over terminal histories. 
This is not necessarily the case in games of imperfect recall 
\cite{Wichardt08}.  
%lior11: added
%joe18: slowed down a little; it's not so easy!
%It is easy to show that if we restrict to polynomial-time
There is an analogy between perfect vs.~imperfect recall and mixed
vs.~behavioral strategies on the one
hand, and polynomial-time vs.~unrestricted computation and 
stateful vs.~stateless TMs on the other.  
%lior21: changed this a bit.
%In Section~\ref{sec:stateful}, we show that if we restrict to polynomial-time
If we restrict to polynomial-time
players, then in computational games, not every
strategy profile with stateful TMs is equivalent to a profile with
stateless TMs, at least 
under standard cryptographic assumptions;
%joe18*: 
however, it is not hard to show that for computationally unrestricted
players, stateful and stateless TMs are equivalent.
For example, a stateful TM
%lior22
%can use a random key to commit to a bit and later always open the commitment
%joe22
%can use a random key to commit to a randomly chosen bit and later
can use a random key to commit to a randomly chosen bit, and later
always open the commitment 
correctly. If there exists a stateless TM that implements the same
distribution, it must be the case that it is able to break the
commitment,
%joe18: added
which a polynomial-time player cannot do.
%lior11
%joe18
%We note that this is not the case if the players are unbounded. In
%such a setting a stateless TM can simulate a stateful one by
%joe22: we may want to explain a bit more what ``resampling a
%consistent random string'' means. It still seems mysterious to me.
%If it's too complicated to explain in 1-2 sentences, then just cut it
%I added the prhase
%On the other hand, with unrestricted computation, a 
On the other hand, as we show in Section~\ref{sec:stateful}, with unrestricted
computation, a  
stateless TM can simulate a stateful TM by
%joe8*: do we make this precise later in the paper?  If not, we should
resampling a consistent random string. 
%joe18*: see joe* immediately above!
%lior21: Added this is section 5.

%lior11:added
%joe8*: what is too strong a requirement?  Moving up a level, I found
%this story awkward.
%However, perhaps this is a too strong requirement. A reasonable
%minimal requirement is that if the game $G$ has a NE then a 
%sequence that represents $G$ should have a computational NE as well.  
%computational game that represents $G$ should have a computational NE
%as well.   
%lior21
%We actually prove an even stronger result.
In Section~\ref{sec:stateful}, we actually prove an even stronger result.
As we said above, if $\G$ represents a game $G$, then to every NE in
$G$, there is a corresponding computational NE in $\G$.  This
computational NE is a mixed strategy, which we model as a (stateful) TM.
%joe7*: added following
%joe8
%We show that, if a standard 
We show that, under a standard 
%lior10: our example uses the binding commitment scheme that uses this stronger assumption
%cryptographic assumption---namely, that one-way functions
cryptographic assumption, namely, that exponentially hard one-way
permutations 
%joe8
%exist---holds,
exist,
%lior11
%and players have no
%computational restrictions, then  
%stateless and stateful TMs have the same computational power, and
%Kuhn's theorem holds: mixed strategies are outcome equivalent to
%behavioral strategies.  
%lior11: moved earlier
%However, if we restrict to polynomial-time
%players, then in computational games, not every
%strategy profile with stateful TMs (which can be thought of as implementing
%mixed strategies) is equivalent to a profile with stateless TMs (which
%can be thought of as implementing behavioral strategies), at least
%under these standard cryptographic assumptions.
%joe6*: If we have unbounded computational power, are stateful
%equivalent to stateless, even in games of imperfect information?
%lior10: Yes. You can always resample just as in Kuhn's theorem. You
%just check all possible strings the mixed strategy might have used
%and resample from those that are consistent with the history. The
%more interesting question for which we never formalized what we know
%is what happens if no OWF exist. I believe the answer is also yes,
%and we should formalize this for the full paper. I don't want to say
%that though since it got us in to trouble last time.  
%lior10:added
%joe7
%We actually prove a stronger result that shows there are sequences for
%joe8
%then 
there are computational
games with perfect recall for which there is no computational NE 
%joe8
%with stateless TMs.  
using stateless TMs.  
%lior11
The key step in the proof is to construct a game where, by using the
%joe8*: I can't follow this.  (I'm having trouble understanding the
%English).  I'm also not sure that we
%need to go into all this technical detail in the introduction.
%exponentially hard assumption, for any fixed TM for the second player
%joe18: something's not quite right here:
%exponentially hard permutation, for any fixed TM for the second player
%a stateless TM deviation, can choose an encryption key that is just 
%long enough to ``fool" that TM, 
exponentially hard permutation, given a TM $M_2$ for the second player,
a stateless TM can deviate by choosing an encryption key that is just 
long enough to ``fool" $M_2$,
while making sure it is short enough so it can itself reconstruct the
%joe18
%state later. On the other hand, for any fixed stateless TM for the
%first player, the second player's TM can just simulate this TM on the
%history of the next round and thus make it output the encrypted
state later. On the other hand, for any stateless TM $M_1$ for the
%joe18*: I don't understand the next line
%lior21
%first player, the second player's TM can just simulate $M_1$ on the
%history of the next round, and thus make it output the encrypted
%message. 
%joe20*: sorry; I was confused by what's written; ``act again'' is
%far too mysterious.  Tried to rewrite
%first player, the second player's TM can just simulate $M_1$ on a
%history in which player 1 acts again (since the history is its only
%input), and thus learn its output and use that to break the
first player, the second player's TM can just simulate $M_1$ up to the
point where it reveals the commitment (since the history is $M_1$'s
input); thus, $M_2$ learns $M_1$'s output, and can use it to break the
encryption. 
%joe18: I'm not sure what ``problematic'' means
%Thus, stateless TMs can be problematic if one wishes to
%implement a protocol whose behavior is rational. 

This distinction between stateful and stateless TMs has already arisen
%joe22
%in other contexts.  Borgs et al.~\cite{BCIKMP08} showed that, in
in other contexts.  Borgs et al.~\citeyear{BCIKMP08} showed that, in
general, we cannot compute a NE in a repeated game in polynomial time;
in \cite{HPS14}, we showed that, under standard cryptographic
assumptions, we could compute a NE (indeed, even a sequential
equilibrium \cite{HPS14b}) in polynomial time.   The reason that we
were able to obtain our positive result was that (1) we restricted to
only polynomial-time deviations (as we do in this paper as well) and
(2) we assumed stateful TMs, while Borgs et al.~assumed stateless TMs.
%joe22
%These results show that there are some subtle issues that must be
%addressed when modeling polynomial-time players.
These results show that making clear whether stateful or stateless TMs
are used is a critical issue when modeling polynomial-time players.

%lior11:added
%joe22
%The result on stateless TMs is an example of the subtleties that arise
More generally, our results show that a number of  subtleties arise
%joe8
%when trying to analyze cryptographic protocols with a game theoretic
%perspective. Using our model one can separate the game theoretic
when trying to analyze cryptographic protocols from a game-theoretic
perspective. 
%lior24:
\fi

%lior26
%Using our approach, we can separate the game-theoretic
%analysis from the cryptographic analysis. 
An important benefit of our approach is that it separates the game-theoretic
analysis from the cryptographic analysis. 
%lior11:moved from above
%joe7
%We can also view the sequence $\G$ as an implementation of an
%abstract game $G$. This notion is, in some sense, similar to the analogy
We can view the sequence $\G$ as an implementation of an abstract
game $G$. 
%lior26: I think this deserves its own section later on or in another
%version of this paper. It is not really helpful as is. 
%joe27: I agree that we should say more.  Let's do that, and see
%whether it's worth a separate paper.
\commentout{
Under this view, the relationship between $\G$ and $G$ is
similar in spirit to the relationship
between ideal and real worlds often used in describing cryptographic
protocols. We can view the ideal protocol as an abstract game $G$ and
%joe8
%the sequence $\G$ as implementation of it using growing security
the sequence $\G$ as implementation of it, using increasing security
%joe6
%parameters. We exemplify this with an example using a simple game that
%can be understood as a coin tossing protocol throughout the paper. 
parameters.
}
%lior11
%joe8
%Given this view one can first prove that a protocol is a good
%implementation of an abstract game and then analyze the strategic
Given this view, we can first prove that a protocol is a good
implementation of an abstract game, and then analyze the strategic
aspects in that simple abstract game.  
For example,
to show a prescribed cryptographic protocol is a Nash (resp.,
%joe8: what is a valid implementation?  
%sequential) equilibrium, one can first show it is a valid
%implementation of some abstract ideal game, and then 
sequential) equilibrium, we can first show it represents
an abstract ideal game; it then 
%joe7
%sequence, it suffices to 
%show it is a mapping of a strategy profile that is a Nash (resp.,
%joe8
%it suffices to show
suffices to show that
the protocol corresponds to a strategy profile that is a Nash (resp.,
%joe7
%sequential) equilibrium in the much simpler game
%it represents. For example, it can be argued that a
sequential) equilibrium in the much simpler underlying game.
%lior11
%joe8: cut this.  They haven't seen the example, so it doesn't really
%carry much information.
%Our example in Section~\ref{sec:example} is
%%joe8
%%motivated by this intuition, and we use our result on Nash and
%%Sequential equilibrium to argue that our proposed strategies are
%motivated by this intuition; we use our result on Nash and
%sequential equilibrium to argue that our proposed strategies are
%indeed rational. 
%joe18
%lior26:Uncomment
We give an exmaple of this idea in Section~\ref{sec:correlated}, where
we show how our approach can be 
used to analyze a protocol for implementing a correlated equilibrium
(CE) without a mediator using cryptography, in the spirit of the work
of Dodis, Halevy, and Rabin \citeyear{DHR00}.

%joe6: fold this into the discussion above
\commentout{
\subsection{related work}
Our notion of NE is a generalization to that originally considered by
Dodis, Halevi and Rabin~\citeyear{DHR00} who described a
solution concept that depends on a security parameter. One example of
a sequence is having the games grow in the security parameter some
protocol uses. 

Gradwohl, Livne and Rosen~\citeyear{GLR13} address the problem of
sequential rationality by defining what they call an ``empty threat"
and a solution concepts that avoids strategies that include such empty
threats. Our solution concept is much more similar to traditional
solution concepts and is much simpler. However, to apply it on
cryptographic protocols requires the protocol to be represented by
some simple abstract protocol. While this is true in many cases, it is
not hard to describe protocols for which this is not the case. One
more major difference is that they only consider two player perfect
information games, while our solution concept applies to any number of
players and allows the games to be of imperfect information. 

Halpern, Pass and Seeman~\citeyear{HPS14b} also consider the problem
of sequential rationality but only in the context of infinitely
repeated games. The setting there allows for the TMs to be
\emph{non-uniform} and the games to be very different from one
another. Moreover, it allows for an inverse polynomial error and not
just a negligible one. 

Finally,  Hub{\'a}{\v{c}}ek and Park~\citeyear{hubavcek2014cryptographically} also describe a mapping between actions in a computational game and actions in an abstract game. In their mapping an action is mapped to its encryptions under some encryption scheme. Our mapping can represent many other abstractions of actions.
}

\section{Preliminaries}\label{sec:review}
%joe3: added hyphen everywhere
%\subsection{Extensive form games}

\subsection{Extensive-form games}
%joe4
%We begin by giving a formal definition of an extensive-form game. A
%finite extensive-form game $G$ is a tuple $([c], H, P,\vec{u})$ (\cite{OR94}),
%such that: 
We begin by reviewing the formal definition of an extensive-form game \cite{OR94}. A
%lior25
%finite extensive-form game $G$ is a tuple $([c], H, P,\vec{u})$, where
finite extensive-form game $G$ is a tuple $([c], H, P, \vec{u},
%joe25
%\vec{\I})$, where 
\vec{\I})$ satisfying the following conditions:
\begin{itemize}
\item $[c]=\{1,\ldots,c\}$ is the set of players in the game. 
\item $H$ is a set of history sequences that satisfies the following
%joe18: I see only two properties~
%  three properties: 
two properties: 
\begin{itemize}
%joe3*: although I didn't make the relevant changes, I think that we
%should restrict to finite games.  New issues arise in infinite games
%(for example, what is the utility of an infinite history) that we don't
%want to deal with here.
%\item The empty sequence is a member of $H$.
%\item If $(a^k)_{k=1\ldots K}\in H$ (where $K$ might be infinite) and
%joe25
%\item the empty sequence is a member of $H$.
\item the empty sequence $\<\,\>$ is a member of $H$;
%joe4
%\item if $(a^k)_{k=1\ldots K}\in H$ 
\item if $\<a_1, \ldots, a_K\>\in H$ 
%(where $K$ might be infinite) 
and
%Joe4
%$L<K$ then $(a^k)_{k=1\ldots L}\in H$ 
$L<K$ then $\<a_1, \ldots, a_L\> \in H$. 
%joe4: added
The elements of a history $h$ are called \emph{actions}. 
%joe3
%\item If an infinite sequence  $(a^k)_{k=1\ldots \infty }$ satisfies
%\item if an infinite sequence  $(a^k)_{k=1\ldots \infty }$ satisfies
%$(a^k)_{k=1\ldots L }\in H$ for every positive integer $L$ then
%$(a^k)_{k=1\ldots \infty }\in H$. 
\end{itemize}
%lior2: added the footnote
%joe4
%A history  $(a^k)_{k=1\ldots K}\in H$ is \emph{terminal} 
A history  $\<a_1, \ldots, a^K\>\in H$ is \emph{terminal} 
%if it is infinite or 
%joe4
%if there is no $a$ such that $((a^k)_{k=1\ldots K},a)\in
if there is no $a$ such that $\<a^1, \ldots, a^K, a\> \in
%joe18
%H$. The set of actions available after the nonterminal history $h$ is
H$. The set of actions available after a nonterminal history $h$ is
%joe4: footnote goes after period
%denoted $A(h)=\{a:(h,a)\in H\}$\footnote{We assume $|A(h)|\geq 2$ for
denoted $A(h)=\{a:h\cdot a)\in H\}$ (where $h \cdots a$ is the result
%joe5
%of concatenating $a$ to the end of $h$.\footnote{We assume $|A(h)|\geq 2$ for
of concatenating $a$ to the end of $h$.\footnote{For technical
%joe18
%  convenience, we assume $|A(h)|\geq 2$ for
%every $h$. If this is not the case, then that step of the game is not
  convenience, we assume that $|A(h)|\geq 2$ for
all histories $h$. If this is not the case, then that step of the game is not
%joe4: footnotes also need a period
%interesting and can be encoded in the other parts of the game}. 
interesting, and can essentially be removed.} 
%joe3
%Let the set of terminal histories be denoted as $H_T$. Let the set of
%histories after which player $i$ plays be denoted as $H_i=\{h|
%P(h)=i\}$. 
%joe18: changed H_T and H_i to H^T and H^i globally, for reasons that
%will become clear
%lior21
%Let $H^T$ denote the set of terminal histories and let
%joe20
Let $H^T$ denote the set of terminal histories, let $H^{NT}$ denote
%$H\setminus H^T$ and
$H\setminus H^T$, and let  
$H^i$ denote the histories after which player $i$ plays.
%joe25
%\item A function $P: H\setminus H^T\to [c]$. $P(h)$ specifies the player
\item $P: H\setminus H^T\to [c]$. $P(h)$ specifies the player
%joe3
%that takes the action at history $h$. 
that moves at history $h$. 
\item $\vec{u}: H_{T}\to\mathbb{R}^c$ specifies for each terminal
history the utility of the players at that history ($u_i(h)$ is 
the utility of player $i$ at terminal history $h$). 
%joe3
%\item For each player $i\in [c]$ a partition $\mathcal{I}_i$ of $H_i$
%joe25
%\item For each player $i\in [c]$, a partition $\mathcal{I}_i$ of $H^i$
\item for each player $i\in [c]$, $\mathcal{I}_i$ is a partition of $H^i$
with the property that $A(h)=A(h')$ whenever $h$ and $h'$ are in the
%joe6: replace I_i by I in this paragraph
%same member of the partition. For $I_i\in\mathcal{I}_i$ we denote by
%joe25
%same member of the partition. For $I \in\mathcal{I}_i$ we denote by
same member of the partition. For $I \in\mathcal{I}_i$, we denote by
%joe3
%$A(I_i)$ the set $A(h)$ for any $h\in I_i$. We call $\mathcal{I}_i$ the 
%joe4: it's actually a singleton
%$A(I_i)$ the set $A(h)$ for $h\in I_i$ (note that $A(h) = A(h')$ if $h$
%lior7: This is a set. The set of possible actions at information set I. I am not sure what was wrong here.
%$A(I_i)$ the action $A(h)$ for $h\in I_i$ (recall that $A(h) = A(h')$ if $h$ 
%joe5: Sorry; ignore this comment
$A(I)$ the set $A(h)$ for $h\in I$ (recall that $A(h) = A(h')$ if $h$  
and $h'$ are two histories in $I$). 
%joe18: I added this, then decided we didn't need it
%lior21:returned it
We assume without loss of generality that if $I \ne I'$, then $A(I)$
and $A(I')$ are disjoint (we can always rename actions to ensure that
this is the case).
We call $\mathcal{I}_i$ the 
%joe3
%information partition of player $i$ and a set $I_i\in\mathcal{I}_i$ an
%information set of player $i$. We say that a game has \emph{perfect}
%information if all the information sets are singletons. 
%information if all the sets are singletons. 
\emph{information partition} of player $i$; a set $I\in\mathcal{I}_i$
is an \emph{information set} of player $i$; 
%lior21
$\vec{\I}=(\I_1,\ldots,\I_c)$ is the \emph{information partition
structure} of the game.  
%joe18
%A game has \emph{perfect information} if all the information sets are
A game of \emph{perfect information} is one where all the information sets are
singletons.  
\end{itemize}

This model can capture situations in which players forget what they knew
%joe3
%earlier. We say that a game has \emph{perfect recall} if the
%information earlier. Roughly speaking, a game has \emph{perfect recall} if the
%lior25: this was commented out
earlier. Roughly speaking, a game has \emph{perfect recall} if the
information  
structure is such that the players remember everything they knew in the
%joe3
%past. Let $X_i(h)$ be the record of player $i$ experience along history
%$h$ - that is all the actions he plays and all the information sets he
past.
%joe18: we use X_i for lots of other things
%Let $X_i(h)$ be the record of player $i$'s experience in history
\begin{definition}
 Let $\mathit{EXP}_i(h)$ be the record of player $i$'s
 \emph{experience} in history 
$h$, that is, all the actions he plays and all the information sets he
encounters in the history. 
%A game has perfect recall if for each player $i$ we have
%joe18
%A game has perfect recall if, for each player $i$, we have
%$X_i(h)=X_i(h')$ whenever the histories $h$ and $h'$ are in the same
%lior21
%A game \emph{perfect recall} if, for each player $i$, we have 
A game has \emph{perfect recall} if, for each player $i$, we have 
$\mathit{EXP}_i(h)=\mathit{EXP}_i(h')$ whenever the histories $h$ and
$h'$ are in the same 
information set for player $i$. 
\end{definition}

%We say that a game has a \emph{fixed order} if its longest history is
%of finite length and the function $P$ is only dependent on the length
%of the history and not on the actual actions played. 
%lior7: added some discussion of what a view is.
%A deterministic strategy for player $i$ is a function $s$ from $I_i$ to $A(I_i)$.  
%joe5: slowed down here
%A deterministic strategy for player $i$ is a function $s$ from a
%player $i$'s view $v_i$ at information set $I_i$ (some encoding of %the
%information set. For example, if the information set is a singleton
%this can just be the history) to $A(I_i)$.   
%joe6*: This is the wrong place to define view.  Game theorists don't
%worry about this issue, and just take a strategy to be a function
%from information sets to actions.  That's what we should do too
%here.  We can talk about views later.
%A deterministic strategy for player $i$ is a function $s$ from 
%$i$'s \emph{view} to actions for $i$, where for each information set $I$ for
%player $i$, $i$'s view $v_I$ of $I$ is just some encoding of $I$ 
%(for example, if $I$ is a singleton, then $v_I$ can just 
%just be the history) to $A(I_i)$.   In general, for a finite game $G$,
%we can assume that of $v_I$ is no worse than the size of $G$.  Of
%course, we require that $s(v_I) \in A(I)$.%
%joe6: added
A deterministic strategy $s$ for player $i$ is a function from $\I_i$
to actions, where for $I \in \I_i$, we require that $s(I) \in A(I)$.
%lior24
\ifstateless
We also consider randomized strategies. 
%joe3
%It is reasonable to consider two variants of randomized strategies: 
In the literature, two types of randomized strategies have been considered:
\begin{itemize}
%joe3
%\item \emph{mixed} strategy - A mixed strategy $\sigma^m$ for player $i$
\item \emph{mixed} strategies: a mixed strategy $\sigma^m$ for player $i$
is a probability distribution over deterministic strategies. 
%joe3: avoid using ``any''; it's ambiguous (whether it means ``some'' or
%``all'' depends on context)
%\item \emph{behavioral} strategy - A behavioral strategy $\sigma^b$ for
%player $i$ maps $I_i$ to distributions over actions such that for any
%action $a$ in the support of $\sigma(h)$ $a\in A(I_i)$.  
\item \emph{behavioral} strategies: a behavioral strategy $\sigma^b$ for
player $i$ maps $I_i$ to distributions over actions such that for all
action $a$ in the support of $\sigma(h)$, $a\in A(I_i)$.  
A profile of strategies (mixed or behavioral)
\end{itemize}
\else
We also consider mixed strategies which are probability distribution over deterministic strategies.
A profile of strategies
\fi
%joe3
%Note that a profile of strategies (mixed or behavioral)
%induces a distribution $\rho_\sigma$ on terminal histories.
%lior25
%$\sigma=\{\sigma_1,\ldots,\sigma_c\}$ 
%induces a distribution denoted $\rho_\sigma$ on terminal histories.
$\vec{\sigma}=\{\sigma_1,\ldots,\sigma_c\}$ 
induces a distribution denoted $\rho_{\vec{\sigma}}$ on terminal histories.
%lior22
We say that a strategy profile is completely mixed if $\rho_{\vec{\sigma}}$
%joe22: ``any'' is ambiguous (it sometimes means ``some'' and
%sometimes ``all'') so you should try to avoid it
%assigns positive probability to any history $h\in H^T$. 
assigns positive probability to every history $h\in H^T$. 
The expected value of player $i$ given $\vec{\sigma}$ is
then $\sum_{h\in H^T}\rho_{\vec{\sigma}}(h)u_i(h)$. 
%joe3
%Kuhn \cite{kuhn1953extensive} shows that for any mixed strategy for a
%joe22
%Kuhn~\cite{Kuhn53} shows that for every mixed strategy
%lior24
\ifstateless 
Kuhn~\citeyear{Kuhn53} shows that for every mixed strategy
profile for a 
player in a finite extensive-form game with perfect recall there is a
%joe3
%behavioral strategy for the player in the game that induces the same
behavioral strategy profile for the players in the game that induces the same
distribution over terminal histories.
%joe4
This is not necessarily the case in games of imperfect recall (see,
%joe18
%for example, \cite{Wichardt08}.)
for example, \cite{Wichardt08}).
\fi

We use the standard notation $\vec{x}_{-i}$ to denote the vector $\vec{x}$ with its $i$th
element removed and $(x',\vec{x}_{-i})$ to denote $\vec{x}$ with its
$i$th element replaced by $x'$.

\begin{definition}[Nash Equilibrium]
%joe3
%$\vec{\sigma}=\{\sigma_1,\ldots,\sigma_c\}$  is an $\epsilon$-NE of $G$
$\vec{\sigma}=\{\sigma_1,\ldots,\sigma_c\}$  is an \emph{$\epsilon$-Nash
equilibrium} (NE)  of $G$
%joe3
if, for all players $i\in[c]$ and for all strategies $\sigma'$ for
player $i$, 
$$\sum_{h\in H^T}\rho_{\vec{\sigma}}(h)u_i(h)\geq \sum_{h\in
H^T}\rho_{\sigma',\vec{\sigma}_{-i}}(h)u_i(h)-\epsilon.$$ 
\end{definition}

%lior21:Moved here
We now recall the notion of \emph{sequential equilibrium} \cite{KW82}.  
A sequential equilibrium is a pair $(\vec{\sigma},\mu)$ consisting of a
strategy profile $\vec{\sigma}$ and a \emph{belief system} $\mu$, where
$\mu$ associates with each information set $I$ a probability $\mu(I)$ on
the nodes in $I$.  Intuitively, if $I$ is an information set for player $i$,
$\mu(I)$ describes $i$'s beliefs about the likelihood of being in each
of the nodes in $I$.
Then $(\vec{\sigma}, \mu)$ is a sequential equilibrium if, for each
player $i$ and each information set $I$ for player $i$, $\sigma_i$ is a
best response to $\vec{\sigma}_{-i}$ given $i$'s beliefs $\mu(I)$. 
%joe19: removed paragraph break
%
%joe5: this is really due to Kreps and Wilson, not me
%An equivalent definition (see~\cite{halpern2009nonstandard}
%Proposition 1.1) that does not require beliefs and is more suitable
%joe19
%An equivalent definition (see~\cite[Proposition 6]{KW82}), 
An equivalent definition 
that does not require beliefs and is more suitable
%joe19
%for our setting is the following: 
for our setting is given by the following theorem: 

%lior21: This is written as a theorem not a definition. Is that OK? I
%originally just wrote the second definition but  
%you thought that is not OK. Can't we just make this a definition and
%say (as we do before) that they show an  equivalence?
%joe19: I don't mind stating the modified definition but you don't do
%that; you state it as a theorem.  You could say $\sigma$ is a
%sequential equilibrium if ..., and then say that Kreps and Wilson
%show that $\sigma$ is a sequential equilibrium as we have defined it
%iff \exists \mu such that (\sigma,\mu) is a sequential definition as
%they defined it.  I'd prefer just to call this a theorem
%joe19
%\begin{definition}
\begin{theorem}
\cite[Proposition 6]{KW82} 
Let $G$ be an extensive-form game with perfect recall. 
%joe5: not quite
%$\vec{\sigma}$
%lior8: Where is the belief comes in? I don't understand. You don't
%need it anymore. This is defining seq. equilibrium without beliefs. 
%joe6: As I said yesterday, what I said is essential for standard
%games.  The notion of belief is there in the standard definition of
%sequential equilibrium
There exists a belief system $\mu$ such that 
$(\vec{\sigma},\mu)$
is a sequential equilibrium of $G$ iff there exists a sequence of
%joe19: we haven't defined completely mixed
completely mixed strategy profiles
$\vec{\sigma}^1,\vec{\sigma}^2,\ldots$ converging to $\vec{\sigma}$
and a sequence $\delta_1,\delta_2,\ldots$ of nonnegative real numbers
%joe18
%converging to $0$ such that for each player $i$ and each information
converging to $0$ such that, for each player $i$ and each information
%joe5
%set $I$ for player $i$, $\vec{\sigma}^n_i$ is an $\delta_n$ best
set $I$ for player $i$, $\vec{\sigma}^n_i$ is a $\delta_n$-best
response to $\vec{\sigma}^n_{-i}$ conditional on having reached $I$. 
%joe19
%\end{definition}
\end{theorem}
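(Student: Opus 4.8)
The plan is to prove each direction by reformulating the two defining conditions of sequential equilibrium---consistency and sequential rationality---as statements about limits of completely mixed profiles, exploiting that all the relevant payoffs are continuous functions of the strategy profile and of the belief held at an information set. Fix the following notation. Given a strategy profile $\vec{\sigma}'$, a player $i$, an information set $I \in \I_i$, a distribution $\nu$ over the nodes of $I$, and a continuation strategy $\tau_i$ for $i$ specifying behavior at $I$ and at all of $i$'s information sets lying after $I$ (well defined given perfect recall), let $R_{i,I}(\tau_i, \vec{\sigma}'_{-i}, \nu)$ be $i$'s expected payoff when play starts at a node of $I$ chosen according to $\nu$, $i$ plays $\tau_i$, and everyone else plays $\vec{\sigma}'_{-i}$; and let $g_{i,I}(\vec{\sigma}', \nu) = \sup_{\tau_i} R_{i,I}(\tau_i, \vec{\sigma}'_{-i}, \nu) - R_{i,I}(\sigma'_i, \vec{\sigma}'_{-i}, \nu)$ be the maximum gain $i$ can obtain at $I$ over following $\sigma'_i$. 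Since $R_{i,I}$ is a polynomial in its arguments it is jointly continuous, and since the continuation strategies $\tau_i$ range over a fixed compact set, a standard continuity argument (the maximum theorem) shows $g_{i,I}$ is a continuous, nonnegative function of $(\vec{\sigma}', \nu)$. The two conditions we need to relate then translate as follows: sequential rationality of $(\vec{\sigma},\mu)$ at $I$ is exactly $g_{i,I}(\vec{\sigma},\mu(I)) = 0$; and, writing $\mu^n_I$ for the conditional distribution over the nodes of $I$ induced by a completely mixed $\vec{\sigma}^n$ (well defined, and unaffected by unilateral deviations by $i$ under perfect recall), the statement ``$\vec{\sigma}^n_i$ is a $\delta_n$-best response to $\vec{\sigma}^n_{-i}$ conditional on having reached $I$'' is exactly $g_{i,I}(\vec{\sigma}^n, \mu^n_I) \le \delta_n$.

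For the ``only if'' direction, assume $(\vec{\sigma},\mu)$ is a sequential equilibrium. By consistency there is a sequence of completely mixed profiles $\vec{\sigma}^n \to \vec{\sigma}$ whose induced belief systems $\mu^n$ satisfy $\mu^n(I) \to \mu(I)$ for every $I$. By sequential rationality $g_{i,I}(\vec{\sigma},\mu(I)) = 0$, so continuity of $g_{i,I}$ gives $g_{i,I}(\vec{\sigma}^n, \mu^n(I)) \to 0$. As there are only finitely many information sets, $\delta_n := \max_{i}\max_{I \in \I_i} g_{i,I}(\vec{\sigma}^n, \mu^n(I))$ is a sequence of nonnegative reals tending to $0$, and $\vec{\sigma}^n_i$ is a $\delta_n$-best response to $\vec{\sigma}^n_{-i}$ conditional on having reached $I$ for every $i$ and every $I \in \I_i$, as required.

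For the ``if'' direction, assume such sequences $\vec{\sigma}^n \to \vec{\sigma}$ and $\delta_n \to 0$ are given, and let $\mu^n$ be the belief system induced by $\vec{\sigma}^n$ via Bayes' rule (using that $\vec{\sigma}^n$ is completely mixed). The set of belief systems is a finite product of simplices, hence compact, so some subsequence $\mu^{n_k}$ converges to a belief system $\mu$; then $(\vec{\sigma},\mu) = \lim_k (\vec{\sigma}^{n_k}, \mu^{n_k})$ is consistent by definition. For sequential rationality, fix $i$ and $I \in \I_i$. By hypothesis $g_{i,I}(\vec{\sigma}^{n_k}, \mu^{n_k}(I)) \le \delta_{n_k} \to 0$, so by continuity $g_{i,I}(\vec{\sigma},\mu(I)) = \lim_k g_{i,I}(\vec{\sigma}^{n_k}, \mu^{n_k}(I)) \le 0$; since $g_{i,I} \ge 0$ always, $g_{i,I}(\vec{\sigma},\mu(I)) = 0$, i.e.\ $\sigma_i$ is a best response to $\vec{\sigma}_{-i}$ given $\mu(I)$. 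Hence $(\vec{\sigma},\mu)$ is both consistent and sequentially rational, so it is a sequential equilibrium.

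I expect the main obstacle to be the two bookkeeping facts packaged into the notation above: first, the continuity of $g_{i,I}$---in particular that the supremum over continuation strategies varies continuously with $(\vec{\sigma}',\nu)$; and second, the precise formalization of ``$\delta_n$-best response conditional on having reached $I$'' and the verification that it coincides with $g_{i,I}(\vec{\sigma}^n,\mu^n(I)) \le \delta_n$, which is where the perfect-recall hypothesis does its work (guaranteeing that ``$i$'s continuation strategy from $I$'' and ``the belief at $I$ induced by $\vec{\sigma}^n$'' are well defined and behave correctly under $i$'s own deviations). Once these are in place, both directions reduce to the elementary facts that a continuous function is preserved under limits and that a convergent subsequence of belief systems can be extracted by compactness.
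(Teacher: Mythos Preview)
The paper does not prove this theorem; it simply cites it as Proposition~6 of Kreps and Wilson~\citeyear{KW82} and then uses it as an equivalent, belief-free characterization of sequential equilibrium. So there is no ``paper's own proof'' to compare against.

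Your argument is essentially the standard one and is correct. The only direction uses consistency to obtain a sequence of completely mixed profiles whose induced beliefs converge to $\mu$, and then continuity of the gain function $g_{i,I}$ together with finiteness of the set of information sets to produce the common $\delta_n$. The if direction extracts a subsequentially convergent belief system by compactness of the product of simplices, then uses continuity of $g_{i,I}$ again to pass the $\delta_{n_k}$-best-response inequalities to the limit. You have correctly identified where perfect recall is doing work: it ensures that a ``continuation strategy from $I$'' is well defined (every node in $I$ shares the same set of subsequent information sets for $i$) and that the conditional distribution over nodes of $I$ induced by a completely mixed $\vec{\sigma}^n$ is unaffected by $i$'s own deviation, so that ``best response conditional on reaching $I$'' coincides with maximizing $R_{i,I}(\cdot,\vec{\sigma}^n_{-i},\mu^n_I)$. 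One small point worth making explicit: in the if direction you only obtain the belief system $\mu$ along a subsequence, but that is all the definition of consistency requires, so the conclusion stands.
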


%lior1: This is and the following subsection are taken from the other paper. Is that OK?
\newcommand{\eps}{{\epsilon}}
\newcommand{\Gen}{{\text{Gen}}}
\newcommand{\Enc}{{\text{Enc}}}
\newcommand{\Dec}{{\text{Dec}}}
\newcommand{\PPT}{{\text{PPT}}}
\newcommand{\N}{{\mathbb{N}}}

%joe9: let's be consistent: caps in \sections, no caps
%in \subsections.  (We can use a different rule, but we should be consistent)
%\subsection{Computational Indistinguishability}
\subsection{Computational indistinguishability}

%joe20: bit string -> bitstring globally
%For a probabilistic algorithm $A$ and an infinite bit string $r$,
For a probabilistic algorithm $A$ and an infinite bitstring $r$,
$A(x; r)$ denotes the output of $A$ running on input $x$ with randomness
$r$; 
$A(x)$ denotes the distribution on outputs of $A$ induced by considering
$A(x;r)$, where $r$ is chosen uniformly at random. 
A function $\epsilon :  \mathbb{N} \rightarrow [0,1]$ is\emph{ negligible}
if, for every constant $c \in \mathbb{N}$,  $\epsilon(k) < k^{-c}$ for sufficiently large $k$. 
\begin{definition} A \emph{probability ensemble} is a
sequence $X=\{X_n\}_{n\in\mathbb{N}}$  of probability distribution indexed by
$\mathbb{N}$. 
(Typically, in an ensemble $X=\{X_n\}_{n\in\mathbb{N}}$, the support of $X_n$
consists of strings of length $n$.) 
\end{definition}

We now recall the definition of computational indistinguishability
\cite{goldwasser1984probabilistic}.

\begin{definition} Two probability
ensembles 
%joe2: much better!
$\{X_n\}_{n\in\mathbb{N}}, \{Y_n\}_{n\in\mathbb{N}}$ are \emph{computationally indistinguishable} if, for
all 
%lior4
%lior7: removing non-uniform
%non-uniform
PPT TMs $D$, there
exists a negligible function $\epsilon$ such that, for all $n \in
\mathbb{N}$,  
%lior1
%$$|\Pr[D(1^n, X_n) = 1] - \Pr[D(1^n, Y_n) = 1]| \leq \epsilon(n).$$
%lior7
%$$|\Pr_{x\leftarrow X_n}[D(1^n, x) = 1] - \Pr_{y\leftarrow Y_n}[D(1^n, y) = 1]| \leq \epsilon(n).$$
$$|\Pr[D(1^n, X_n) = 1] - \Pr[D(1^n, Y_n) = 1]| \leq \epsilon(n).$$
To explain the $\Pr$ in the last line, recall that $X_n$ and $Y_n$
%are random variables. Although we write
are probability distributions. Although we write  
$D(1^n, X_n)$, $D$ is a randomized
algorithm, so what $D(1^n, X_n)$ returns depends on the outcome
of random coin tosses.  To be a little more formal, we should write
$D(1^n, X_n,r)$, where $r$ is an infinitely long random bit strong (of
which $D$ will only use a finite initial prefix).  More
%lior1
%formally, taking $\Pr$ to be the uniform distribution on bit-strings
%and over the value of $X_n$ (or $Y_n$),  
%we want  
%$$|\Pr \left[\{r:D(1^n, X_n,r) = 1\}\right]-Pr \left[\{r:D(1^n,
%Y_n,r) = 1\}\right]|\leq \epsilon(n).$$   
%joe4: I"m afraid that I don't find this formal at all.  If Pr is the
%uniform distribution over strings, what is \Pr_{x <- X)n}.  I suspect that
%what you've really got here, for each fixed n, is a distribution over pairs of n-bit
%numbers, (r,x), where we have the uniform distribution on r, and
%distribution X_n on x.  This should be clarified. 
%lior7: We've going back and forth on this. Originally this was
%Pr[D(1^n,X_n)] with some discussion after which is kind of what you
%are offering now, but you said that wasn't good. (this is all taken
%from the previous paper). I'll try and change it again and we should
%maybe discuss this over the phone so we can converge on this. 
%lior7
%formally, taking $\Pr$ to be the uniform distribution on bit-strings, 
formally, taking $\Pr_{X_n}$ to be the joint distribution over strings
%joe18
%$(x,r)$ where $x$ is chosen according to $X_n$ and r is chosen
$(x,r)$, where $x$ is chosen according to $X_n$ and r is chosen
according to the uniform distribution on bit-strings,  
we want  
%lior7: I made the change here as well.
$$|{\Pr}_{X_n}\left[\{(x,r):D(1^n, x,r) =
1\}\right]-{\Pr}_{Y_n}\left[\{(y,r):D(1^n, 
y,r) = 1\}\right]|\leq \epsilon(n).$$   
We similarly abuse notation elsewhere in writing $\Pr$.
%joe5: this is fine.  We may want to put it in a footnote.
\end{definition}

%joe14
%We often call a TM that is
We often call a TM $M$ that is
supposed to distinguish between two probability ensembles a
\emph{distinguisher}. 
%lior16
%joe14
%We say that $M$ distinguishes two distributions with overwhelming
%probability if it succeeds with probability greater than
%joe18
%We say that it distinguishes two distributions \emph{with overwhelming
%lior24: we don't use that
%We say that it distinguishes two ensembles \emph{with overwhelming
%probability} if it distinguishes them with probability greater than
%$1-\epsilon(n)$ for some negligible function $\epsilon$. 

%joe4*: moved this to make it part of the game; this should make it
%easier for economists to appreciate
%lior26: Uncommented
\subsection{Commitment schemes}
%joe27
%We now define a cryptographic commitment scheme which will be used in
%our examples. Informally, such a scheme is a two phase two party
%protocol of a sender and a receiver that allows the sender to send a
%protocol for a sender and a receiver that allows the sender to send a
%message to the receiver at the first phase that commits hit to a bit
%without letting the receiver get any information about that bit, and
%in the second phase reveal the commitment in a way that guarantees
%this is the bit he committed to. 
We now define a cryptographic commitment scheme that will be used in
our examples. Informally, such a scheme is a two-phase two-party
protocol for a sender and a receiver. In the first phase, the sender sends a
message to the receiver that commits the sender to a bit
without giving the receiver any information about that bit; 
in the second phase, the sender reveals the bit to which he committed
in a way that guarantees that this really is the bit he committed to.

%joe4 moved material from here below
\begin{definition}
%joe18
%A secure commitment scheme with perfect bindings is a pair of PPT
A \emph{secure commitment scheme with perfect bindings} is a pair of PPT
%joe4
%Algorithms $C$ and $R$ such that: 
algorithms $C$ and $R$ such that: 
\begin{itemize}
%lior7
%\item $C$ takes as input a security parameter $1^k$ and a bit $b$ and
%%joe4
%%  outputs (c,s), where $c$ is a string of length $k$ (We denote it as
%%  $C_1$ and call it the commitment string) and $s$  is a string of
%%  length $k-1$ (We denote it as $C_2$ and call it the commitment   key). 
%outputs $C(1^k,b),C_2(1^k,b)$, where $C_1(1^k,b)$, called the
%\emph{commitment string} is a $k$-bit string, and $C_2(1^k,b)$, called
%the \emph{commitment key}, is a $(k-1)$-bit string.
%\item $R$ is a deterministic algorithm that gets as input two strings
%%joe4: for consistency; again, subject-verb agreement problems (outputs")
%%  $s$ and $c$ and output $o\in\{0,1,f\}$. 
%  $c$ and $s$ and outputs $o\in\{0,1,f\}$. 
%joe18
%\item $C$ takes as input a security parameter $1^k$, a bit $b$ and a
\item $C$ takes as input a security parameter $1^k$, a bit $b$, and a
%joe20
%  bit string $r$ and 
  bitstring $r$, and 
outputs $C(1^k,b,r),C_2(1^k,b,r)$, where $C_1(1^k,b,r)$, called the
\emph{commitment string}, is a $k$-bit string, and $C_2(1^k,b,r)$, called
the \emph{commitment key}, is a $(k-1)$-bit string. We use $C(1^k,b)$ to denote the output distribution of algorithm $C(1^k,b,r)$ when $r$ is chosen uniformly at random. 
\item $R$ is a deterministic algorithm that gets as input two strings
  $c$ and $s$ and outputs $o\in\{0,1,f\}$. 
%joe5: this reads well now
%joe4*: we need to talk about one *ensemble* being indistinguishable
%from another
%\item The distribution $C_1(1^k,0)$ is computationally
%  indistinguishable from $C_1(1^k,1)$
%lior4: matched the usual notation for ensembles
%\item The ensemble $\{C_1(1^k,0): k = 1, 2, 3,\ldots\}$ is computationally
%from $\{C_1(1^k,1): k = 1, 2, 3,\ldots\}$.%
%lior22
%\item The ensemble $\{C_1(1^k,0)\}_{k\in\mathbb{N}}$ is
%computationally indistinguishable 
%lior24
%\item(Hiding) The ensemble $\{C_1(1^k,0)\}_{k\in\mathbb{N}}$ is
%  computationally indistinguishable 
%from $\{C_1(1^k,1)\}_{k\in\mathbb{N}}$.
\item(Hiding) $\{C_1(1^k,0)\}_{k\in\mathbb{N}}$ and
  $\{C_1(1^k,1)\}_{k\in\mathbb{N}}$ are  computationally
  indistinguishable. 
%lior7: removed footnote.
%\footnote{Since $C$ is a randomized algorithm, it is really outputting
%  $C_1(1^k,b,r)$, where $r$ is a $k$-bit string chosen uniformly at
%  random.   We can thus think of   $C_1(1^k, b)$ as a distribution.}
%joe4: I suspect this isn't quite right; check with I wrote
%\item $R(c,s)=b$ and for all $c$ there does not exists $s'\neq s$ such
%  that $R(c,s')\in\{0,1\}$. 
%lior22
%\item $R(C_1(1^k,b,r),(C_2(1^k,b,r))=b$ and for all $k$ and $r$;
%lior25
%\item (Perfect binding) $R(C_1(1^k,b,r),(C_2(1^k,b,r))=b$ and for all $k$ and $r$;
\item (Perfect binding) $R(C_1(1^k,b,r),(C_2(1^k,b,r))=b$ for all $k$ and $r$;
  moreover, if $s \ne C_2(1^k,b,r)$, then $R(C_1(1^k,b,r)),s) \notin \{0,1\}$.
\end{itemize}
\end{definition}
\noindent
Cryptographers typically assume that secure commitment schemes with
perfect bindings exist.  (Their
existence would follow from the existence of \emph{one-way permutations};  see
\cite{goldreich01} for further discussion and formal definitions.)

%\section{Computational Analogue of an extensive form game}
%joe3
%\section{The computational analogue of an extensive-form game}
%joe9
%\section{Computational extensive-form games}
\section{Computational Extensive-Form Games}
%lior25: I added a subsection before the definitions
%\subsection{Motivation and definitions}

%lior26: Cut from here
\commentout{
\subsection{Motivation}
Consider the following two-player extensive-form game $G$:  At
the the empty history, player $1$ secretly chooses one of two
%joe3
%alternatives by putting his choice inside a sealed envelope.  
%On any following history, player $2$ also needs to choose one of those
%joe4
%alternatives by putting her choice inside a sealed envelope.  
alternatives and puts her choice inside a sealed envelope.  
%joe5
%Player $2$ then also chooses one of those
Player $2$ then also chooses one of these
two alternatives.  
%joe3
%On any history after that, player $1$ can either open the envelope and
Finally, player $1$ can either open the envelope and
%joe3
%reveal his choice or destroy the envelope.  
%If he opens the envelope and he chose a different alternative than
%player $2$ he wins. Otherwise, he loses (either he chose the same
%(either he chose the same alternative chosen by player $2$ or he
%destroyed the envelope).   
reveal her choice or destroy the envelope.  
If she opens the envelope and she chose a different alternative than
player $2$, player 1 wins and gets a utility of 1; otherwise (i.e., if
player 1 either chose the same alternative as player $2$ or she 
destroyed the envelope) player 1 loses and gets a utility of $-1$.   
%joe3: added
Player's 2's utility is the opposite of player 1's.
%Joe4: moved from elow and shortened
The game tree for this game is given in
Figure~\ref{fig:extensiveGame}. Since player 2 acts without knowing
1's choice, the two histories where 1 made different choices are in the same information
set of player 2.

%joe5: Lior, can you send me this figure?  Also, why do you label the
%information sets I and II, when you're talking about players 1 and 2?
\begin{figure}
                \centering
                \includegraphics[width=0.5\textwidth]{figures/game-extensive.pdf} 
%joe18
% 	\caption{An example of a game with a computational analogue  sequence}
 	\caption{A game that can be represented by a computational game.}
	\label{fig:extensiveGame}
\end{figure}

%joe3
%This game can be easily modeled by the extensive-form game in
%figure~\ref{fig:extensiveGame}. This game has player $1$ choosing one of
%two action, and then player two acts while both possible histories are
%in the same information set). Player $1$ then has an option  
%joe4: most of this is unnecessary, given the description above.
%This description is  modeled by the extensive-form game in
%Figure~\ref{fig:extensiveGame}. In the game, player $1$ chooses one
%of two actions; then player 2 acts without knowing
%1's choice (so that the 
%two histories where 1 made different choices are in the same information
%set of playe 2). Player $1$ can then either open the envelope or
%destroy it. The fact that player $1$ choses his action secretly is
%modeled by the information structure of the game. 

%joe3
%If we focus on players which are computationally bounded, there is a
%joe4: moved material from above: you more or less say the same thing twice
Resource-bounded players can implement this game even without access
to envelopes, using what is called a \emph{commitment scheme}.
%If we focus on players who are computationally bounded, there is a
%different (and perhaps even more natural) interpretation of this
%game. We can think of player $1$ secretly selecting his action as
%playing an action which is an encryption of his choice. If player $2$ is
%unable to break that encryption, then he has to play his action without
%knowing what player $1$ did, as described above. In the last step, player $1$
%can decrypt his original action and reveal the choice he made. If it is
%hard for him to lie in this last step, then we get effectively the same
%game as the one described earlier.   (We can implement this ``difficulty
%in lying'' using \emph{Commitment Scheme}.) 
A commitment scheme is a two-phase two-party
protocol involving a sender (player 1 above) and a receiver (player
2).  The sender sends the receiver a message in the first phase that
commits him to a bit without giving the receiver any information about
the bit (this is the computational analogue of putting the
bit in an envelope).   In the second phase, the sender ``opens the
envelope'' by sending the receiver some information that allows the receiver to
confirm what bit the sender committed to in the first phase.
}

\commentout{
%joe4*: added discussion and intuition.
We can model the simple game in Figure~\ref{fig:extensiveGame}
using a commitment scheme.  The point is that now we get, not one
game, but a sequence of games, one for each choice of security
parameter.  Rather that putting a bit $b$ in an envelope, player 1 sends 
$C(1^k,b)$.  More precisely, he sends $C(1^k,b,r)$, for a string $r$
chosen chosen uniformly at random.  The fact that player 2 can't tell
what bit player 1 sent is modeled by the indistinguishability of the
ensembles $C_1(1^k,0)$ and $C_1(1^K,1)$.  
%lior7: I suspect this is what you meant?
%This is an information-theoretic indistinguishability.  
%joe5: oops!  Sorry.  Yes, that's what I meant.
%joe18
%This is not an information-theoretic indistinguishability.  
%Only a polynomial-time player cannot tell the ensembles apart. Thus,
This is not information-theoretic indistinguishability;  
only a polynomial-time player cannot tell the ensembles apart.   Thus,
$C(1^k,0,r)$ and  
$C(1^k,1,r)$ are actually \emph{not} in the same information set.  We 
%joe6
%will 
need to capture their computational indistinguishability another way.  
}

%joe3*: we need to slow down here.  Specifically, we need to explain why
%we need to consider a sequence.  Also, what does it mean that ``any
%fixed game size is easy''.  
%In order to consider a computational version of this game we can not
%focus on a single game, but rather we need to consider a sequence of
%games that grow in size, as any fixed size game is easy in the
%computational sense. 
%joe4

%lior7*: I am really not sure what you meant this paragraph to be so I
%am not touching it but something is wrong here. 
%joe5: what's the problem?  This paragraph hardly says anything!
%lior8: We say twice "first". In the first line we say that "the
%definition is a first step towards" and in the last line we again say
%"we first start with" I think they are both saying the same thing,
%no? 
%joe6: cut
%The following definition is a first step towards modeling
%computational games.
%joe6: uncommented out following 
%lior25: I think this breaks helps
\subsection{Definitions}\label{sec:definitions}
Statements of computational difficulty typically say that there is no
(possibly randomized) polynomial-time algorithm for solving a problem.
To make sense of this, we need to consider, not just one input, but a
sequence of inputs, getting progressively larger.  Similarly, to make
sense of computational games, we cannot consider a single game, 
but rather must consider a sequence of games that grow in size. 
The games in the sequence share the same basic structure.
%joe3
%(the same set of players).  
%The order of play is computed by a single function in all of these
%games. We want the game to be simulated efficiently and thus require
%this function from histories to players to be efficiently (polynomial
%time) computable. In addition, the utility of a player in all the games
%of the sequence is represented by a single function, and the utility
%function in a specific game is this function restricted to that
%specific game's terminal histories. We also want the utilities to be
%efficiently computable in each terminal history and thus require this
%function to be polynomially computable.  
%We call such a sequence a \emph{uniform} sequence of games.
This means that, among other things, they involve the same set of
players, playing in the same order, with corresponding utility functions.
To make this precise, we first start with a more general notion, which
we call a \emph{computable uniform sequence of games}.

\begin{definition}
%joe2: Why are you starting at j?  That seems strange.
%lio3: Originally we didn't have the polynomial p and instead in game
%$G_n$ the set of action was $\{0,1}^n$ and thus you can't just start
%with 1 since you the underlying game in the next definition might have
%more than two actions. I guess that is not necessary any more. 
%joe3
%A \emph{uniform} sequence of games $\mathcal{G}=\{G_1,G_{2},\ldots\}$
A \emph{computable uniform sequence $\mathcal{G}=\{G_1,G_{2},\ldots\}$
of games} 
%joe3: ``exists'' if only one thing follows; ``exist'' if it's plural
%is an infinite set of games such that there exist a polynomial p such that:
%joe19: moved the polymomial to the one place it is used:
%is an infinite set of games such that there exists a polynomial $p$ 
%satisfying the following conditions:
%lior25
%a sequence that satisfies the following conditions:
is a sequence that satisfies the following conditions:
\begin{itemize}
%joe4
%\item All the games in the sequence share the same set of players.
\item All the games in the sequence involve the same set of players.
%joe2: this seems like a strange way of putting things, since we have not
%pre-specified some set of potential histories that H_n is a subset
%of.  Why can't we just say the following
%\item Let $H_n$ be the set of histories such that for any non-terminal
%history $A(h)=\{0,1\}^{p(n)}$. The set of histories in game
%$G_n\in\mathcal{G}$ is a subset of $H_n$.  
%joe18: why define nonterminal if we've already deifned terminal
%\item Let $H_n$ be the set of histories in $G_n$, and let $H_n'$ be the
%nonterminal histories in $H_n$.  For 
%joe19
%\item Let $H_n$ be the set of histories in $G_n$. For
\item Let $H_n$ be the set of histories in $G_n$. There exists a
  polynomial $p$ such that, for 
%joe18: here's why I put the T in the superscript.  If we want
%separate notation for nonterminals, use H^{NT}
%all histories $h \in H_n'$, $A(h) \subseteq \{0,1\}^{p(n)}$.
%lior21
%all nonterminal histories $h \in H_n - H_n^T$, $A(h) \subseteq \{0,1\}^{p(n)}$.
all nonterminal histories $h \in H_n^{NT}$, \hbox{$A(h) \subseteq
%joe19
%  \{0,1\}^{\leq p(n)}$}.\footnote{biststring of length at most p(n).} 
  \{0,1\}^{\leq p(n)}$}.\footnote{$\{0,1\}^{\le p(n)}$ denotes the
  language consisting of bitstrings of length at most $p(n)$.} 
%lior10
%joe7: is this what you meaant?
%lior11: yes.
%joe18
%In addition, there is a PPT algorithm that on input $1^n$ and a
In addition, there is a PPT algorithm that, on input $1^n$ and a
%history $h$ returns if $h\in H_n$. 
%joe18
%history $h$ determines whether $h\in H_n$. 
history $h$, determines whether $h\in H_n$. 
%joe2: verb agreement again
%\item There exist a polynomial time computable function $P'$ from
\item There exists a polynomial-time computable function $P'$ from
%joe18: H_i is overloaded
%$\bigcup_{i=1}^{\infty}H^i'$ to $[c]$. The function $P_n$ in game
%$G_n\in\mathcal{G}$ is then $P'$ restricted to $H_n'$. 
%lior21
%$\bigcup_{n=1}^{\infty}(H_n - H_n^T)$ to $[c]$. The function $P_n$ in game
%$G_n\in\mathcal{G}$ is then $P'$ restricted to $H_n - H_n^T$. 
$\bigcup_{n=1}^{\infty}(H_n^{NT})$ to $[c]$. The function $P_n$ in game
$G_n\in\mathcal{G}$ is then $P'$ restricted to $H_n^{NT}$. 
%joe2: subject agreement again.  More importantly, you're implicitly
%identifying a history h with the sequence of actions played.  That's
%not unreasonable, but you have to say so.
%lior3: not sure what you mean here.
%\item For each player there exist a  polynomial time computable
%function $u:\{0,1\}^*\to\mathbb{R}$, such that the utility function of
%the player in game $G_n$ is $u$ restricted to the terminal histories of
\item For each player $i$, there exists a  polynomial-time computable
%lior20
%function $u_i:\{0,1\}^*\to\mathbb{R}$ such that the utility function of
%joe18
%function $u_i:\bigcup_{i=1}^{\infty}(H^i\setminus H^i')\to\mathbb{R}$
function $u_i:\bigcup_{n=1}^{\infty}H_n^T \to\mathbb{R}$
such that the utility function of 
%joe18
%player $i$ in game $G_n$ is $u$ restricted to the terminal histories
%of $G_n$.  
%joe20: oops
%player $i$ in game $G_n$ is $u_n$ restricted to $H_n^T$.
player $i$ in game $G_n$ is $u_i$ restricted to $H_n^T$.
%lior21: not needed anymore
%(the terminal histories of $G_n$).  
\end{itemize}
%joe4: I'd prefer to avoid this.
We sometimes call a computable uniform sequence of games a 
\emph{computational game}.
\end{definition}

%joe4: out of order
%An important special case of a computable uniform sequence of games is
%one where the games in the sequence are all derived from the same
%underlying game. The games differ by having longer and longer
%instantiations of the actions of the underlying game.
Computable uniform sequences of games already suffice to allow us to
talk about polynomial-time strategies.  
%talk about strategies, we mean strategies that can be 
%implemented using a polynomial time TM $M$.  Formally, 
%joe6: removed paragraph break
%
A strategy $M$ for player $i$
%joe4: added
in a computable uniform sequence $\G = (G_1, G_2, \ldots)$
%lior22
is
%joe6
%lior21*: added this. Otherwise the rest does not make much sense I think
%is a TM that
%joe19: what does it mean that ``it changes from round to round'')?
%
%is a probabilistic TM (has access to an infinite random tape that
%changes from round to round) that it's traditional to say that it has
%a random tape (which is viewed as an infinite bitstring as part of
%its input).  Intuitively, this tape can be read only from left to
%right; no backtracking is allowed.  So it can't recall what it's seen
%unless it is saved. 
a probabilistic TM
%lior24
\ifstateless
%lior22
\footnote{We assume the TM has access to an infinite random tape in
%joe22
%  which every bit can only be read once.}  
%lior23: should this be "can be read"? 
%joe23
%  which every bit can read only once.}  
of which every bit can be read only once.}  
%lior24
\fi
that 
%joe6
%takes as input a pair $(1^n, v)$, where $v$ is the
%joe18*: we're inconsistent about what the view includes the
%randomness.  I think it should.  Here it should be an infinite string
%takes as input a pair $(1^n, v_I)$, where $v_I$ is a
takes as input a pair $(1^n, v)$, where $v$ is a view for player $i$
in $G_n$ (discussed below),
%representation of an information set $I$ in $G_n$
%(it can be, for example, just one of the histories in the information set),
and outputs
an action in $A(I)$.
\ifstateless
As we said in the introduction, we assume that the TMs have state 
(a tape on which the random bits used in previous rounds are recorded).
We say that such a TM is
\emph{stateful}. 
\else
We assume that the TMs are \emph{stateful}; they have a tape on which the random bits used in previous rounds are recorded.
\fi
The \emph{view} of a stateful TM $M$ 
%joe18
for player $i$ in $G_n$ 
is a tuple $(v_I, r)$, where $v_I$ is the
%joe18*: it seems strange (and a bad idea) to have the whole (infnite)
%random tape be in the state.  It should just be the prefix of the
%random state that's been used thus far.  I think this is more
%standard.  On the other hand, the TM also needs access to the whole
%random string.  That should be a separate part of the input, and
%mentioned above.  
%representation of information set $I$ and $r$ is $M$'s random tape
%(which does not change from round to round).
%lior21*: If M has access to the same random string why is it
%important that it saves the randomness used so 
%far?how will a stateless TM have less power? see next comment.
representation of information set $I$ 
%lior2*1: This should not be related to the random tape then. We
%should just assume it gets a new random tape at each round but
%stateful TM also have the previous randomness recorded. 
%joe19*: see my earlier comment on radnomness
%and $r$ is the initial part of $M$'s random tape that has been used
%thus far (so is nondecreasing 
%from round to round).  
%joe19:
and $r$ contains the randomness that has been used thus far (so is nondecreasing
from round to round).  
%lior22:
%joe22
%This can be viewed as having perfect recall on randomness, as the TMs
This can be viewed as having perfect recall of randomness, as the TMs
are not allowed to ``forget" the randomness they used. It is
considered part of their experience so far in the same way as the
%joe22
%actions they played and the information sets they visited 
actions that they played and the information sets that they visited.%
%lior22*: I think this is not relevant to this discussion. It is
%necessary only in the section we discuss stateless. 
%Of course, a stateless TM's view does not include the $r$ component.
%joe22: we already defined stateless in the intro, so we can say this
%(We later show this is not the case when we consider stateless TMs).
%lior24
\ifstateless
This is not the case for stateless TMs.
\fi
%lior26: making this a footnote. It is not crucial for the flow.
\footnote{
%lior22
This assumption is equivalent to allowing the TM to have an
%joe27
%additional tape in which it can save an arbitrary state. For any TM
additional tape on which it can save an arbitrary state. For any TM
%joe22
%$M$ that does that there is another TM $M'$ that can reconstruct that
%state, by simulating $M$'s computation from scratch using its view. 
$M$ that does this, there is an equivalent TM $M'$ that has no
additional tape, but simply reconstructs $M$'s state by 
simulating $M$'s computation from scratch using its view. 
%joe6: no need for this
%that is persistent from round to round 
%(for example, it can store a secret key it used in the first
%lior20
%the memory can be used, for example, to store a secret key that was
%joe18*: Here is where I introduced the idea that all that is needed
%is the randomness
%it can be used, 
%lior22
%It suffices for our purposes in this paper that all that is written on
%this extra 
%tape is the randomness that was used in earlier rounds; 
This suffices,
for example, to reconstruct a secret key that was 
generated in the first 
%joe6
%round and use it in later rounds). We say that this TM is
%joe18
%round, so  it can be used in later rounds). 
round, so  it can be used in later rounds. 
}
%%lior20
%%, and $s$ is a string (which we think of as the contents of $M$'s memory tape).
%%We later discuss another approach for modeling 
%%joe6: no need for this, since it's already in the intro
%%the players, in which we do not have persistent memory, which we call
%%\emph{stateless}. 
%%joe14
%%A polynomial-time strategy $M$
%$M$ is a \emph{polynomial-time TM for $\G$} if
%%joe4
%%is one where there exists a polynomial $q$ such that $M$ 
%%lior15
%%is one such that there exists a polynomial $q$ such that $M$ 
%%joe13
%%Qis one such that there exists a polynomial $q$ such that in $G_n$ $M$ 
%%joe14
%%is one such that there exists a polynomial $q$ such that, in $G_n$, $M$ 
%there exists a polynomial $q$ such that, in $G_n$, $M$ 
%%lior4
%%computes its next action using at most $q(n)$ steps.
%%lior15
%%computes its next action using at most $q(n|v|)$ steps (where $|v|$ denotes
%%the length of view $v$).
%computes its next action using at most $q(n)$ steps.
%joe4: this doesn't seem helpful
%\footnote{We will usually just assume that $|v|$ is polynomial in $n$}

%joe4: moved from above, slowed down, and added more intuition
%joe18
%We next want to talk about a uniform sequence $\G = (G_1, G_2,
We next define what it means for a uniform sequence $\G = (G_1, G_2,
\ldots)$ of games to
%joe18
%\emph{representing} an underlying game $G$.  For example, we want to
\emph{represent} an underlying game $G$.  
%lior26
%joe27
%To explain different aspects of this definition it is useful to go
To explain different aspects of this definition, it is useful to go
back to the example in the introduction and 
discuss what it means for a sequence $\G$ to represent the game $G$ in
Figure~\ref{fig:extensiveGame}.    
%lior26: moved form before
As discussed before, we can implement this game
using a commitment scheme.  The point is that now we get, not one
game, but a sequence of games, one for each choice of security
parameter.  Rather that putting a bit $b$ in an envelope, in $G_n$ player 1 sends 
$C_1(1^n,b)$.  More precisely, he sends $C_1(1^n,b,r)$, for a string $r$
chosen chosen uniformly at random.  To then open the envelope, player
$1$ can just send $C_2(1^n,b,r)$ and any other string to destroy it. 

Roughly speaking, we want all the
games in $\G$ to have the same ``structure'' as $G$.  We formalize
this by requiring a surjective mapping $f_n$ from histories in each game 
$G_n$ in the sequence to histories in $G$.  
%joe14
%Note that the mapping is surjective; not one-to-one.  There may be
Note that $f_n$ is not, in general, one-to-one.  There may be
many histories in $G_n$ 
representing a single history in $G$.  This can already be seen in our
example; each of the histories in $G_n$ where player 1 sends
$C_1(1^n,1,r)$ get mapped to the history in $G$ where player 1 puts 1
in an envelope.   Moreover, although $C_1(1^n,0,r)$ and 
$C_1(1^n,1,r)$ get mapped to histories in the same information set
in $G$, they are \emph{not} in the same information set in $G_n$; an
exponential-time player can break the encryption and tell that they
correspond to different bits being put in the envelope.  Thus, the
mapping $f_n$ does not completely preserve the information structure.  
%lior26:removed. seems too mysterious here.
%We later discuss a sense in which the mapping does preserve the
%information structure. 
%joe4*: added; this requires *much* more discussion
%lior25
%For now, we require that $h$ and $f_n(h)$ have the same length.
We require that $h$ and $f_n(h)$ have the same length (same number of actions).
%lior7: I don't think we are going to this here.
% We later consider weakening this requirement.  Intuitively, it may take
%several computational steps to implement a single step in the
%underlying game.
%joe5: how about the following reinstatement
%lior8: I really haven't spent time thinking about this and what this
%will actually mean for our model. Last time we made such a promise it
%caused the paper to get withdrawn in the end. But I am fine with this
%sentence. 
%joe6*: Rafael, what do you think? 
%lior11: moved to conclusion.
%In the full paper, we consider weakening this requirement.  Intuitively, it may take
%several computational steps to implement a single step in the
%underlying game.
Of course, the utility associated with a terminal
history $h$ in $G_n$ is the same as that associated with history
$f_n(h)$ in $G$.  
%We want all the games in the sequence to have the same utility structure
%as the underlying game, so a terminal 
%history of a game in the sequence has the same utility as the history it
%is mapped to.   

%joe4*: slowed down and added intuition; this is critical
%lior7
%The first three conditions below capture the relatively
%straightforward requirements above.  
The first three conditions below capture the relatively
straightforward structural requirements above.  
%lior7
%We have a fourth requirement (and
%later consider yet others) that is a bit more subtle.  
%joe5
%Our next two requirements capture condition on the players possible
%strategies in the game and are a bit more subtle.  
%joe18
%Our next two requirements impose condition on the players' 
%strategies, and are a bit more subtle.  
The final requirement imposes conditions on the players' 
strategies, and is somewhat more complicated.
%lior7: added
%joe18*: rewrote this.  I think we need to be really careful here
%Informally, the fourth requirement is that every strategy in $G$ has 
Informally, the fourth requirement is that there is a mapping $\F$ 
%lrio25
%from strategies in $G$ to a strategies in $\G$, where $\F(\sigma)$
from strategies in $G$ to strategies in $\G$, where $\F(\sigma)$
``corresponds'' to $\sigma$ in some appropriate sense.  
%joe5: this notion of similarity is different from the next one, so I
%want to use a different
%a strategy in $\G$ that acts in a similar way. The fifth requirement is
%joe18: we might want better notation
%a ``corresponding'' strategy in $\G$. 
%joe5: cut; this breaks the flow, and you more or less say it again
%anyway when you talk about the fifth requirement two paragraphs later
%a partial converse. It requires that for every strategy in $\G$, if
%it is played with strategies that act as strategy in $G$ (the result of
%the previous requirement), has a strategy in $G$ that acts similarly
%in this situation. 
%lior7 
%Roughy speaking, it says that for every strategy in $G$, there is a
%Roughy speaking, the fourth requirement says that for every strategy
%in $G$, there is a 
%corresponding polynomial-time strategy in $\G$.  
%joe18
%But what should  ``corresponding'' mean?
But what should  ``correspond'' mean?
%joe4: moved from below: note that you talked about a ptime strategy
%profile; I think that "ptime" is irrelevant for this discussion.
Let $\vec{M}$ be a strategy profile for $\mathcal{G}$. 
For each game $G_n \in \mathcal{G}$, $\vec{M}$ induces a distribution
denoted $\psi^{G_n}_{\vec{M}}$ on the terminal histories in $G_n$.
By applying $f_n$, we can push this forward to a distribution
$\phi^{G_n}_{\vec{M}}$ on the terminal histories in $G$.  
A mixed strategy profile $\vec{\sigma}$ in $G$ also induces a
distribution on the terminal histories in $G$, denoted
$\rho_{\vec{\sigma}}$.    

%joe18
%We say that $\vec{\sigma}$
\begin{definition}\label{def:coresponds}
A strategy profile $\vec{\sigma}$
%lior7: changed this to be only statistically close.
%\emph{corresponds} to $\vec{M}$ if, for all $n$, 
%$\phi^{G_n}_{\vec{M}} = \rho_{\vec{\sigma}}$: that is, 
%the distribution over outcomes in $G$ induced by each specialization
%of $\vec{M}$ to $G_n$ is the same as that induced by $\vec{\sigma}$.
%joe6
%\emph{corresponds} to $\vec{M}$ if,  
%lior24
%\emph{corresponds} to $\vec{M}$ if (1)  
\emph{corresponds} to $\vec{M}$ if 
%joe5: you're back to the arrow notation, which you removed above.  We
%should be consistent.
%$\{\phi^{G_n}_{\vec{M}}\}_{n\in\mathbb{N}}$ is statistically close to
$\{\phi^{G_n}_{\vec{M}}\}_{n\in\mathbb{N}}$ is \emph{statistically close} to
$\{ \rho_{\vec{\sigma}}\}_{n\in\mathbb{N}}$: that is, if $H^T$ are the
%joe5: added three commas
%terminal historiess of $G$ then 
%there exists a negligible function $\epsilon$ such that for all $n$
terminal histories of $G$, then 
there exists a negligible function $\epsilon$ such that, for all $n$,
%joe5: again, using arrow notation; displayed
%lior8
%$\sum_{h\in H^T}|Pr_{x\leftarrow
%$$\sum_{h\in H^T}|Pr_{x\leftarrow
%  \phi^{G_n}_{\vec{M}}}[x=h]-Pr_{y\leftarrow
%%joe5
%%  \rho_{\vec{\sigma}}}[y=h]|\leq \epsilon(n)$. 
%  \rho_{\vec{\sigma}}}[y=h]|\leq \epsilon(n).$$ 
$$\sum_{h\in H^T}|Pr_{
%joe14*: I'm not sjure what [x: x=h] is meant to denote.  There's
%exactly one x such that x=h, and that's x itself.  If x is meant to
%range over histories here, ddid you mean \Pr..(\{x: f(x) = h\}), and
%siilarly for y?
%lior17: that was a bad edit of a copy paste...
%  \phi^{G_n}_{\vec{M}}}[x: x=h]-Pr_{
 % \rho_{\vec{\sigma}}}[y: y=h]|\leq \epsilon(n);$$ 
  \phi^{G_n}_{\vec{M}}}[h]-Pr_{
  \rho_{\vec{\sigma}}}[h]|\leq \epsilon(n).$$ 
%lior9:added
%joe6*: 
%In addition, if $\sigma$ is a completely mixed strategy then $M$
%should be completely mixed in $G_n$ for every value of $n$.  
%lior11*: this is so that in some sense I could explain any deviation
%in the sequence as a deviation in the abstract game. 
%and (2) if  $\sigma$ is a completely mixed strategy, then $M$
%is completely mixed in $G_n$ for all $n$.
%joe18
%and (2) if $\sigma$ plays action $a$ in history $h$ with positive
%lior21: will add this only when needed for sequential equilibrium.
%and (2) if $h$ is a history in $G$ where $i$ moves,
%$\sigma_i$ plays action $a$ in history $h$ with positive
%%joe8: map-> maps twice
%%probability, then $M_{\sigma}$ at a history $h'$ that map to $h$
%%plays all the actions $a'$ such that $h',a'$ map to $h,a$ with positive
%%joe18*: we need to be careful here
%%probability, then $M_{\sigma}$ at a history $h'$ that maps to $h$ plays
%probability, $h'$ is a history in $G_n$ such that $f_n(h') = h$,
%and $v$ is $i$'s view in $h$, then $M_i(1^n,v)$ plays
%%joe18: missed this one
%%all the actions $a'$ such that $\langle h',a' \rangle$ map to $\langle
%all the actions $a'$ such that $\langle h',a' \rangle$ maps to $\langle
%h,a \rangle$ with positive probability. 
%joe18: 
\end{definition}

%lior13
%In addition, we intuitively require that this strategy ``knows" which
%joe18*: 
So one requirement we will have is that, for all strategy profiles
$\vec{\sigma}$ in $G$, $\vec{\sigma}$ corresponds to $(\F(\sigma_1),
\ldots, \F(\sigma_n))$, which we abbreviate as $\F(\vec{\sigma})$.
%lior21: I think we don't need this.
%, in the sense of Definition~\ref{def:corresponds}.
In addition, we require that the strategy profile
$\F(\vec{\sigma})$  ``knows" which
underlying action it plays. We formalize this by requiring that, for
%joe11*: what does the TM get as input?  A history?  I think we still
%need a little more here.  
%every such strategy, there is a corresponding polynomial time TM that
%outputs the underlying action being played, whose output is
%statistically close to the mapping of the actual action being played
%lior14
%every such strategy, there is a corresponding polynomial-time TM that
%joe18: it's not clear which strategy you mean here.  Also, is 
%it really a different TM for each strategy, or one uniform TM. I
%think we can assume the latter.
%every such strategy, there is a corresponding polynomial-time TM 
%that given the strategy's view 
%joe19*: do we still need this condition?  Where do we use it?
%lior22: We use it to argue that the TM "knows" in which underlying
%information set it is, by combining this with  
%the fact that the actions in each information set are different
%joe20
%strategy $\sigma$ in the underling game, there is a TM 
strategy $\sigma$ in the underlying game, there is a TM 
$M^{\sigma}$ that, given 
view $v$ for player $i$ in $\G$, outputs the 
%joe21:
%played by $\F(\sigma)$ given view $v$.
%underlying action played by $\F(\sigma)$ given view $v$.
action in $G$ corresponding to the action played by $\F(\sigma)$ given
view $v$. 
%joe18: no need to talk about distribution, is there?
%such that the distribution over its outputs is
%played by the strategy to the underlying game actions. 

%lior7:added
%joe5
%The fifth requirement talks about a converse of the fourth
%joe18: 
%The fifth requirement can be viewed as a partial converse to the fourth
Finally, we require a partial converse to the correspondence 
requirement. It is clearly too much to expect a full converse. 
%joe5
%As $\G$ has a more complex structure it also allows for more
%coordination between the players that can't happen in $G$. 
%lior8
%$\G$ has a more richer structure than $G$; it  allows for more
$\G$ has a richer structure than $G$; it allows for more
ways for the players to coordinate than $G$.
%joe18: added
%lior25
%So we cannot expect very strategy profile in $\G$ to correspond to a
So we cannot expect every strategy profile in $\G$ to correspond to a
strategy profile in $G$.
%But it seems may be reasonable to consider partial converses.
%joe5
%Thus, we require that strategies in a rather restricted class
Thus, we require only that strategies in a rather restricted class
of strategy profiles  
%joe4: avoiding the phrase "computational game"; also, as usual,
%in the computational game correspond to a strategy in $G$: namely,
in $\G$ correspond to a strategy in $G$: namely,
%joe18*: note that M_\vec{\sigma} is not well defined
%ones where we start with a strategy $M_{\vec{\sigma}}$ that
%corresponds to a strategy $\vec{\sigma}$ in $G$, 
ones where we start with a strategy of the form $\F(\vec{\sigma})$
(which, by assumption, corresponds to $\vec{\sigma}$),
and allow one player to deviate.  
%joe5: slowed down here, and moved example out of footnote
%In addition, there are strategy profiles $\vec{M}$ in $\G$
%such that 
%the distributions $\phi^{G_n}_{\vec{M}}$ are quite different for
%different values of $n$\footnote{For example, in the game in Figure~\ref{fig:extensiveGame}, $M_1'$ 
We must also use a weaker notion of correspondence here.  
%joe18: I don't know what you mean by ``all the distributions'' in the
%next line.  There's only one distribution.  I just cut the next sentence.
%lior21: I meant for n=1,2,3.... . Not important.
%We cannot require all the distributions $\phi^{G_n}_{\vec{M}}$ to be
%close to a single distribution.  
%$\G$ of the form $\vec{M}_{\vec{\sigma}}$, it is easy to construct a
%polynomial-time deviation $M'$ such that these distributions are quite
%different.  
For example, in the game in Figure~\ref{fig:extensiveGame}, 
%joe18: added next two lines
even if we start with a strategy of the form $\F(\vec{\sigma})$, the
deviating strategy 
$M_1'$ 
could be such that player 1 commits to 0 in $G_n$ for $n$ even, and
commits to 1 in $G_n$ for $n$ odd. 
%joe5
%Such a strategy profile $\vec{M}$ cannot correspond to any strategy
%joe18
%The strategy profile $(M',M_{\sigma_2}))$ cannot correspond to any strategy
The strategy profile $(M_1',\F(\sigma_2))$ does not correspond to any strategy
profile in $G$.  
%joe18
%Thus, we need to a consider a sequence $\sigma_1', \sigma_2', \ldots$ of
Thus, the notion of correspondence that we consider in this case is
that if $i$ plays $M_i'$ rather that $\F(\sigma_i)$, then there exists 
a sequence $\sigma_1', \sigma_2', \ldots$ of
strategies in $G$, rather that a single strategy $\sigma'$,
% What we require is a weak converse.
%lior20
%joe18
%lior21*: Why did you remove this? It is not clear from the above that you need computational indistinguishability.
%Moreover, we only require the distributions to be computationally
%indistinguishable and not necessarily statistically close. 
%joe18
and require only that the sequence
$\{\phi^{G_n}_{(M_i',\F(\vec{\sigma}_{-i}))}\}_{n\in\mathbb{N}}$ be computationally
indistinguishable from 
%joe20
%$\{ \rho_{\vec{\sigma}}\}_{n\in\mathbb{N}}$, rather than be
$\{ \rho_{\vec{\sigma}}\}_{n\in\mathbb{N}}$, rather than being
%lrio25
%statistically indistinguishable.
statistically close.

%joe2*: I can't understand the English here
%We also require some mapping between strategies in the abstract game and
%TM, such that the behavior of players playing these TM in games of the
%sequence is identical to the behavior of the players playing the
%original strategy. We require to have similar behavior even if one
%player is playing a completely unrelated TM. 
%lior5: does not belong here
%we also require that we can map a strategy in the underlying game in a
%natural way to strategies in all the games in the sequence.
%joe3: moved this down, and rewrote.  (What you wrote didn't make
%sense.  In what sense is M ``given'' a sequence?  More generally, what
%exactly is the input to M_i.  Presumably, the first component is n,
%which means we're looking at game G_n.  The second input should be an
%information set, and the output should be an action.  How do you
%represent the information set?   I guess that you can avoid that here,
%since you're looking at games of perfect information.  But, as I said,
%I really don't like this restriction.

\begin{definition}
%joe20
\label{def:represents}
%joe2: maybe G is a *(concrete) representation* rather than an
%abstraction of the sequence? 
%joe3*: Let's talk about the restriction to perfect information.  I
%would strongly prefer to avoid it.
%A finite game $G$ is an \emph{abstraction} of a uniform sequence of
%perfect information games $\mathcal{G}=\{G_j,G_{j+1},\ldots\}$ if the
A computable uniform sequence $\mathcal{G}=\{G_1,G_{2},\ldots\}$
\emph{represents} an underlying game $G$ if the following conditions hold: 
\begin{itemize}
%joe2
%\item The sequence share the same set of players as $G$.
%\item The set of players is the same in $G$ and in every game in the
%sequence. 
%joe18
%\item[UG1.] $G$ and every game in $\G$ involve the same set of players
\item[UG1.] $G$ and every game in $\G$ involve the same set of players.
%joe2
%\item For every $G_n\in\mathcal{G}$ there exist a surjective mapping
%$f_n$ from any history of $G_n$ to a history of $G$. $f_n$ maps an
%history of $G_n$ to an history of $G$ of the same length 
%joe4: labeled each item
%\item For each game $G_n \in \mathcal{G}$, 
\item[UG2.] For each game $G_n \in \mathcal{G}$, 
there exists a surjective mapping $f_n$ from the histories in $G_n$ to
the histories in $G$ such that
%lior26: made into items. Easier to read that way instead of a long paragraph.
\begin{itemize}
%joe27: used semicolon at the end of each item
\item[(a)] $|h| = |f_n(h)|$;
\item[(b)] the same player moves in $h$ and $f_n(h)$;
%joe27
%\item[(c)]if $h'$ is a subhistory of
%$h$, then $f_n(h')$ is a subhistory of $f_n(h)$.
\item[(c)] if $h'$ is a subhistory of
$h$, then $f_n(h')$ is a subhistory of $f_n(h)$;
\item[(d)] 
%joe4
%If $h$ and $h'$ are in the same information set in $G_i$ then
%$f_i(h)$ and $f_i(h')$ must be in the same information set in $G$. 
%joe18: wrong subscripts
%if $h$ and $h'$ are in the same information set in $G_i$ then $f_i(h)$
%and $f_i(h')$ are in the same information set in $G$. 
%joe19
%if $h$ and $h'$ are in the same information set in $G_n$ then $f_n(h)$
if $h$ and $h'$ are in the same information set in $G_n$, then $f_n(h)$
and $f_n(h')$ are in the same information set in $G$; 
%joe2: added
%lior26: removed. not sure it helps here.
%(Note that it follows from these conditions that if $h$ is a terminal
%%joe4
%%history of $G_n$, then $f_n(h)$ must be a terminal of $G$.)
%history of $G_n$, then $f_n(h)$ must be a terminal history of $G$.)
%lior17
%joe15: the ' notation is bad.  We should use something like LA(h) for
%the last action in history h.  The we want LF(f_i(h)).
%lior18
%Let $f'_i(h)$ to denote the last action played in $f_i(h)$.
%joe16
%For $h\in H$ (a history if $G$) let $LA(h)$ denote the last action
%joe20
%For $h\in H$ (a history of $G$) let $\LA(h)$ denote the last action
\item[(e)] for $h\in H$ (a history of $G$), let $\LA(h)$ denote the last action
played in $h$;
%lior17: I am not use if this follows from previous conditions or
%not. I suspect it doesn't. 
 if $h$ and $h'$ are in the same
%joe15: is || meant to denote concatenation?  If so, it's quite
%nonstandard.  \cdot is more standard (and should explained teh first
%time it's used in any case.)
information set in $G_n$, then for any $a$ such that $h||a\in H_n$, 
%lior18 :I thins this is a standard symbol. It is what is used in most
%programming languages for this operation. 
%joe16: OK; I've never seen it used that way before.  I'm used to
%seeing . for concatenation (or no symbol) at all, and || for some
%kind of parallel fork.
%$f_i'(h||a)=f_i'(h'||a)$.% 
$\LA(f_n(h||a))=\LA(f_n(h'||a))$ (where $||$ is the concatenation operator).% 
\end{itemize}
\item[UG3.] If $h$ is a terminal history of $G_n$, then the utility of each
player $i$ is the same in $h$ and $f_n(h)$.
%joe2*: As before, this is meaningless.  Random variables don't have
%distributions. Random variables are functions from some state space to
%some range.  That's it.  I tried to write what I thought you meant,
%without using random variables at all; they are unnecessary, I believe.
%\item Let $\phi^{G_n}_{\vec{M}}$ be a random variable whose range are
%the terminal histories of $G$ according to a distribution which is a
%result of playing $G_n$ with $\vec{M}$ and applying  $f_n$ to the
%terminal history it reaches. 
%Let $\rho^{G}_{\vec{\sigma}}$ be a random variable whose range is the
%terminal histories of $G$ according to a distribution which is a result
%of the players playing $G$ with $\vec{\sigma}$. 
%There is a mapping from any strategy $\sigma$ in $G$ to polynomial time
%TM $M_{\sigma}$, such that for all $G_n\in\mathcal{G}$ and any vector
%of strategies $\vec{\sigma}$, $\phi^{G_n}_{\vec{M_{\sigma}}}$ and
%$\rho^{G}_{\vec{\sigma}}$ induce the same distribution over terminal
%histories of $G$.  
\item[UG4.] 
%lior12: changed to deterministic strategies. Since the game is finite
%this immediately means this is also true for mixed strategies 
%For each player $i$ and each mixed strategy
%$\sigma_i$ for player $i$ in $G$, there is a strategy
%$M_{\sigma_i}$ for player $i$ in $\G$ 
%such that, for all mixed strategy profiles $\vec{\sigma}$ in $G$, 
%$\vec{\sigma}$ corresponds to $\vec{M}_{\vec{\sigma}}$.
%joe18*: I made nontrivial changes here; I think we need something
%like this (although you may want to rewrite it)
%lior21
%There is a mapping $\F$ from deterministic strategies in $G$ to
%deterministic strategies in $\G$ such that 
There is a mapping $\F$ from strategies in $G$ to
strategies in $\G$ such that 
%For each player $i$ and each deterministic strategy
%$\sigma_i$ for player $i$ in $G$, there is a strategy
%$M_{\sigma_i}$ for player $i$ in $\G$ 
\begin{itemize}
\item[(a)] 
%lior21:
%for all deterministic strategy profiles $\vec{\sigma}$ in $G$, 
for all strategy profiles $\vec{\sigma}$ in $G$, 
%joe11
%$\vec{\sigma}$ corresponds to $\vec{M}_{\vec{\sigma}}$.%
%joe18
%$\vec{\sigma}$ corresponds to $\vec{M}_{\vec{\sigma}}$.
$\vec{\sigma}$ corresponds to $\F(\vec{\sigma}) = 
(\F(\sigma_1), \ldots, \F(\sigma_n))$;
%lior12: added this. Not sure if it is just a consequence of the
%condition or not. This says that the strategy "knows" which action it
%is playing.
%joe10*: I can't parse this sentence and I have no intuition for it.
%IF we need it, we need to explain what's going on here intuitively!
%lior13: added some intuitive explanation before.
%joe11*: I'm still confused.  I assume that h is a history in some
%game G_n.  So what exactly is the ensemble \{f(M_{\sigma_i}(h))\}_n?
%I assume that you mean f_n here, not f.  But even so, f_n maps
%histories in G_n to histories in h, and M_{\sigma_i}(h) is an action,
%not a history.  Moreover, the TM should get as input a view, not a
%history.  Finally, I don't know what h is here.  There's a
%different h for each game in G_n.  Do you want to consider an h in
%the underlying G?  
%In addition, for any strategy $M_{\sigma_i}$, there is a TM $M'$ such
%that on every history $h$ in which player $i$ plays $M'$ outputs an
%action $a$ such that $\{f(M_{\sigma_i}(h))\}_n$ is statistically close
%lior14: you are right... 
%In addition, for every strategy $M_{\sigma_i}$, there is probabilistic
%TM $M'$ such
%that for every history $h$ in which player $i$ plays, $M'(h)$ outputs
%an action such that $\{f(M_{\sigma_i}(h))\}_n$ is statistically close
%to $\{M'(h)\}_n$. 
%joe18*
%In addition, there exists a negligible function $\epsilon$ 

%joe18
%for every strategy $M_{\sigma_i}$, there exists a probabilistic
%polynomial-time TM $M'$ 
%lior21
%\item[(b)] for each deterministic strategy $\sigma$ for player $i$ in $G$, 
\item[(b)] for each strategy $\sigma$ for player $i$ in $G$, 
%lior21
%there exists a negligible function $\epsilon$ and a 
there exists a 
polynomial-time TM $M^{\sigma}$ that, given as input $1^n$ and a 
%joe18*: what does ``reachable'' mean in this context.  Why not just
%require it for all views?  
%that for every view $v$ reachable in $G_n$ in which player $i$ plays,
%lior21: It is too strong for all views. We are only interested in views that are a result of this strategy being played.
%view $v$ for player $i$ in $G_n$, returns an action for player $i$
%joe20:
view $v$ for player $i$ in $G_n$ that is reachable when player $i$
plays $\F(\sigma_i)$ in $G_n$, returns an action for player $i$ such
that  
%joe14*: the f'_n comes out of tthe blue here.  What's the
%quanification over f_n'?  How does it related to f_n?    It may be
%better to define f_n' in terms of f_n right after f_n is defined.
%lior17
%and for every action $a$ of $G$, $|Pr[f'_n(M_{\sigma_i}(v),v)]=a]-
%  Pr[M'(v)=a]|<\epsilon(n)$, where $f'_n$ maps an action in $G_n$
%%joe4
%%  played in view $v$ to an action in $G$.  
%  and a view $v$ in $G_n$ to an action in $G$.  
%lior18
%and for every action $a$ of $G$,
%$|Pr[f'_n(v_I||M_{\sigma_i}(v))]=a]-Pr[M'(v)=a]|<\epsilon(n)$,  
%joe18*: this seems far too complicated, since everything in sight is
%deterministic.  There are also lots of problems with notation (e.g., v
%vs. v_I) Finally, what's the probability taken over here?
%and for every action $a$ of $G$,
%$|Pr[\LA(f_n(v_I||M_{\sigma_i}(v)))]=a]-Pr[M'(v)=a]|<\epsilon(n)$,  
%lior21
%$$\Pr(\F(\sigma)(1^n,v) \ne M^{\sigma}(1^n,v)) <\epsilon(n).$$
$LA(f_n(\F(\sigma)(1^n,v,r_T))) = M^{\sigma}(1^n,v,r_T)$, where $r_T$
is the random tape used (remember that the view contains the
%joe20*: I don't know what it means that the TM gets a new random tape
%in each round, and I don't believe that we've mentioned this before.
%randomness used so far and the TM gets a new random tape each round). 
randomness used so far);
%where $v_h$ is the history component of the view $v$.  
%lior16
%joe14
%We call such strategies \emph{meaningful} strategies since they have
%a meaning in the underlying game $G$. 
%joe18*: no such thing as a meaningful strategy; it depens on \F
%We call such strategies \emph{meaningful}, since they
%correspond to strategies in the underlying game $G$. 
%joe18*: made UG5 part of UG4, since it's the same \F being used.
%\item[UG5.] 
\item[(c)]
%joe4
%For all strategy profiles $\vec{\sigma}$, all players $i$, and all
%polynomial-time strategies $M'$ for player $i$, there exists a sequence
%of strategies 
%$\sigma'_1,\sigma'_2,\ldots$ such that
for all strategy profiles $\vec{\sigma}$ in $G$, all players $i$, and all
polynomial-time strategies $M'_i$ for player $i$ in $\G$, there exists
a sequence $\sigma'_1,\sigma'_2,\ldots$ of strategies for player $i$
in $G$ such that  
%joe4: missing \vec's here and elsewhere
%$\{\phi^{G_n}_{M',(\vec{M_{\sigma})}_{-i}}\}_n$ is
%joe18
%$\{\phi^{G_n}_{M'_i,\vec{M}_{\vec{\sigma}_{-i}}}\}_n$ is
$\{\phi^{G_n}_{(M'_i,\F({\vec{\sigma}_{-i}}))}\}_n$ is
%joe4
computationally indistinguishable from
%joe18
%$\{\rho^{G}_{\sigma'_n,\vec{\sigma}_{-i}}\}_n$.  
$\{\rho^{G}_{(\sigma'_n,\vec{\sigma}_{-i})}\}_n$.  
\end{itemize}
\end{itemize}
\end{definition}

%lior5 moved earlier
%joe4*: added more discussion.  This was critical
%UG1--4 are a fairly minimal set of requirements.  Do we want more.
%There are two senses in which we may want more.
%\begin{itemize}
%\item UG4 says that every strategy in $G$ corresponds to a strategy in
%$\G$.  What about the converse?  
%It is clearly too much to expect a full converse.
%There are strategy profiles $\vec{M}$ in $\G$
%such that 
%the distributions $\phi^{G_n}_{\vec{M}}$ are quite different for
%different values of $n$;
%%joe4
%such a strategy profile $\vec{M}$ cannot correspond to any strategy
%profile in $G$.  
%% What we require is a weak converse.
%But it seems may be reasonable to consider partial converses to UG4.
%joe5
%As we formalize in section~\ref{sec:NE}, among other things, these
%lior15: removed.
%As we show in Section~\ref{sec:NE}, among other things, these
%conditions guarantee that every NE in $G$ corresponds
%to a NE in $\G$. 

%lior21*:added. I think we can get rid of the subscript of f. The only problem is that when we talk about f^{-1}
%we do need it to specify in which game. We can add the subscript (with an explanation) only there. What do you
%think is better? 
%joe20
%Our definition of represents requires existence of mapping
%$\vec{f}=(f_1,f_2\ldots)$ in UG2 and $\F$ in UG4. When we want to
%specify specific mappings we say that $\G$
Definition~\ref{def:represents} requires the existence of a sequence
$\vec{f}=(f_1,f_2\ldots)$ in UG2 and a function $\F$ in UG4. When we want to
%lior24
%refer specifically to $f$ and $\F$, $\G$
%$\langle\vec{f},\F\rangle$-represents $G$. 
refer specifically to $f$ and $\F$, we say that $\G$
$\langle\vec{f},\F\rangle$-represents $G$. 

%lior9:added
%lior10: not important for now.
%UG2 also defines an obvious mapping of views.
Note that UG2 requires that if $h$ and $h'$ are in the same
information set in $G_n$, then $f_n(h)$ and $f_n(h')$ must be in the same
information set in $G$. 
%lior19: added from below
%joe20
%This means we can also view $f_n$ as a map from information sets to
This means that we can view $f_n$ as a map from information sets to
information sets. 
However,
it does \emph{not} require the converse.   As
discussed above, in $\G$, an exponential-time player
may be able to make distinctions between histories that cannot be made
of the corresponding histories in the underlying game.  We would like
to be able to say that a polynomial-time player cannot distinguish $h$
and $h'$ if $f_n(h)$ and $f_n(h')$ are in the same information set.
%lior7
%We provide below a condition that guarantees this.
%joe6
%As we later see these conditions allow us to make such a claim.
As we show later, these conditions allow us to make such a claim.

Also note that since the game is finite, to show UG4(a) and UG4(b) hold,
it is enough to prove they hold for deterministic strategies.
Given a mapping $\F$ that satisfies UG4(a) and (b) for deterministic
%joe20
%strategies we can trivially extend it to mixed strategies: 
strategies, we can extend it to mixed strategies in the obvious way:
since a mixed strategy is just a probability distribution over 
finitely many deterministic strategies, it can be implemented 
by a TM that plays that probability distribution up to negligible
precision over the corresponding mapping of the deterministic strategies
(such an approximating distribution can be easily constructed in
%joe20*: check if this is what you meant
%polynomial time). It is obvious that UG4[a] still holds. UG4[b] holds
%since we can always know which of the deterministic strategies
%$\sigma$ in the mixture is being played and just use its $M^\sigma$. 
polynomial time). It is obvious that UG4(a) still holds. UG4(b) holds
since using $v$ and $r_T$, we can reconstruct which deterministic
%lior25
%strategy $\sigma'$ in the support $\sigma$ was actually used to reach
strategy $\sigma'$ in the support of $\sigma$ was actually used to reach
$v$, and then use the corresponding TM $M^{\sigma'}$.

\commentout{
We stress that UG5 does not follow from the other conditions.
%joe14*: this seems to me better
%Without UG5, there are trivial 
%sequences that can represent a game but allow outcomes that are not
%close to outcomes of the game. 
If we did not require UG5, we could represent a game $G$ by a
comptuational game that did not capture the information structure of $G$.
%joe6*: Please add my example.
%lior10: added
%joe14
%For example, without UG5, the game $G$ in
%Figure~\ref{fig:extensiveGame} can be represented by a sequence that
%is just $\G={G',G',\ldots}$, where $G'$ is the same as $G$ without the
For example, the game $G$ in
Figure~\ref{fig:extensiveGame} could be represented by the sequence 
$\G={G',G',\ldots}$, where $G'$ is the same as $G$ without the
information sets. It is easy to see that UG1-4 hold, but UG5 does not. 
%joe14
%Clearly, the sequence $\G$ does not really capture the properties
%of $G$ in this case. 
}
\subsection{The commitment game as a uniform computable sequence}\label{sec:commitGame} 
%lior26
We now consider how these definitions play out in the
game
$G$ in Figure~\ref{fig:extensiveGame} and the sequence
$\mathcal{G} = (G_1, G_2, \ldots)$ described above where player 1 uses
a commitment scheme as an envelope. 

\begin{lemma}\label{lem:coinFlipRep}
%joe4: why make up new terms, when we have a perfectly good (indeed,
%even better!) term already defined
%$\mathcal{G}$ is the computational analog of $G$.
%lior7
%$\mathcal{G}$ represents $G$ and, in addition, satisfies UG5.
$\mathcal{G}$ represents $G$.
\end{lemma}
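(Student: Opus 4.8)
The plan is to build the sequence $\mathcal{G}=(G_1,G_2,\ldots)$ explicitly and verify conditions UG1--UG4 of Definition~\ref{def:represents} one by one. In $G_n$ player~$1$'s first action is a commitment string $c$, player~$2$ then plays a bit $b'$, and player~$1$ finally plays a string $s$ (intended as the key $C_2(1^n,b,r)$ if she opens, or anything else if she destroys); I take a secure perfectly-binding scheme for which every string $c$ is a commitment to a unique bit $\beta(c)$ (e.g.\ the standard construction from a one-way permutation), so no ``garbage'' commitments arise; alternatively, for an arbitrary such scheme one restricts player~$1$'s first action set to the image of $C_1$. I then define $f_n$ by $\langle\rangle\mapsto\langle\rangle$, $\langle c\rangle\mapsto\langle\beta(c)\rangle$, $\langle c,b'\rangle\mapsto\langle\beta(c),b'\rangle$, and by sending $\langle c,b',s\rangle$ to $\langle\beta(c),b',\mathrm{open}\rangle$ if $R(c,s)\in\{0,1\}$ and to $\langle\beta(c),b',\mathrm{destroy}\rangle$ otherwise, and declare the utilities of $G_n$ to be those of the $f_n$-image. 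Then UG1 and UG2(a)--(c) are immediate, and UG3 holds even though $\beta$ is not polynomial-time computable: the utility of $\langle c,b',s\rangle$ depends on the committed bit only when player~$1$ opens, and then it is just $R(c,s)$, computable in polynomial time (if she destroys she loses regardless of the bit) --- which also shows the utilities of $\mathcal{G}$ are polynomial-time computable, so $\mathcal{G}$ is a computable uniform sequence of games. Finally UG2(d)--(e) hold because the information structure of $G_n$ refines that of $G$ through $f_n$: the only information set of $G$ that is ``split'' in $G_n$ is player~$2$'s move, where the two histories at which player~$1$ committed to different bits become distinct (singleton) information sets of $G_n$; every other information set of $G_n$ is a singleton or maps onto a single information set of $G$.

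For UG4 I would define $\F$ first on deterministic strategies: $\F(\sigma_1)$ picks a fresh key, commits to the bit that $\sigma_1$ prescribes, and later sends the true key (open) or its complement (destroy) according to $\sigma_1$; $\F(\sigma_2)$ plays the action $\sigma_2$ prescribes at player~$2$'s information set while \emph{ignoring} the commitment string it received. These are polynomial time, and they extend to mixed strategies by the negligible-precision sampling construction described after Definition~\ref{def:represents}. UG4(a) then holds because, $\F(\sigma_2)$ ignoring $c$, pushing the play of $\F(\vec\sigma)$ forward by $f_n$ reproduces $\rho_{\vec\sigma}$ up to the negligible sampling error. UG4(b) holds because $M^\sigma$ can simply re-simulate $\F(\sigma_i)$ on the given view and read off the underlying action: it never needs to invert the commitment, since it reconstructs the sampled bit and the opening decision from the recorded randomness, and for player~$2$ it just recomputes her bit.

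The substance is UG4(c), which I would handle by cases on the deviator. If player~$1$ deviates to a polynomial-time $M'_1$ against $\F(\sigma_2)$: perfect binding forces $R(c,s)\in\{0,1\}$ only for the unique correct key, so in $G_n$ player~$1$ can still only reveal the committed bit or destroy --- exactly as in $G$ --- and, since $\F(\sigma_2)$ ignores $c$, she learns nothing from the commitment she could not have known in $G$. Hence for each $n$ there is a behavioral strategy $\sigma'_n$ for player~$1$ in $G$ (commit to each bit with the frequency $M'_1$ does, then open with the conditional frequency $M'_1$ does) with $\rho^G_{(\sigma'_n,\sigma_2)}=\phi^{G_n}_{(M'_1,\F(\sigma_2))}$ \emph{exactly}; exact equality a fortiori gives computational indistinguishability, and by Kuhn's theorem ($G$ has perfect recall) $\sigma'_n$ may be taken mixed. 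If player~$2$ deviates to a polynomial-time $M'_2$ against $\F(\sigma_1)$, this is the one place the cryptographic assumption is used: by hiding, $\bigl|\Pr_r[M'_2(1^n,C_1(1^n,0,r))=1]-\Pr_r[M'_2(1^n,C_1(1^n,1,r))=1]\bigr|$ is negligible, for otherwise $M'_2$ itself distinguishes $\{C_1(1^n,0)\}$ from $\{C_1(1^n,1)\}$. So I let $\sigma'_n$ be player~$2$'s strategy in $G$ that plays $1$ with probability $\pi_n:=\Pr_r[M'_2(1^n,C_1(1^n,0,r))=1]$. Comparing $\phi^{G_n}_{(M'_2,\F(\sigma_1))}$ with $\rho^G_{(\sigma'_n,\sigma_1)}$ term by term over the constantly many terminal histories of $G$, the two agree except that $M'_2$ uses the ``wrong'' Bernoulli parameter when player~$1$ committed to $1$, a parameter that differs from $\pi_n$ by only a negligible amount; summing over the finitely many terms keeps the statistical distance negligible, so the two ensembles are in fact statistically close, hence computationally indistinguishable.

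The main obstacle is the $i=2$ case of UG4(c): one has to pinpoint the single step that requires the computational assumption and then verify that an arbitrary polynomial-time deviation of player~$2$ ``projects'' down, with only negligible loss, to a bit-independent randomization that is an honest strategy of $G$. The one secondary point that needs care --- and that motivates the choice of commitment scheme above --- is keeping $\mathcal{G}$ a legitimate computable uniform sequence: the utilities must remain polynomial-time computable even though recovering the bit under an unopened commitment is hard, and one wants to avoid pathologies from strings outside the image of $C_1$.
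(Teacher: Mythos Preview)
Your proof is correct and follows essentially the same approach as the paper's: the same mapping $f_n$, the same construction of $\F$, and the same case split for UG4(c), with the hiding property invoked only for player~$2$'s deviation. You are, if anything, more careful than the paper about two points it glosses over---the handling of strings outside the image of $C_1$ and the polynomial-time computability of utilities despite the hardness of recovering the bit under an unopened commitment---and your choice of $\pi_n$ conditioned on a commitment to~$0$ (rather than the paper's marginal over $\F(\sigma_1)$'s committed bit) is an equally valid instantiation of $\sigma'_n$.
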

\begin{proof}
%joe4
%Firs we claim that $\mathcal{G}$ is indeed a computable \emph{uniform}
First, we show that $\mathcal{G}$ is a computable uniform
sequence. 
%joe4
%The sequence has only $2$ players in all games. The set of histories
%in $G_n$ is always a subset of $\{0,1\}^n$ and it is easy to compute
All the games in the sequence involve exactly 2 players; the set of histories
in $G_n$ is a subset of $\{0,1\}^{\leq n}$, and it is easy to compute
the next player to act; finally, the utility functions are
polynomial-time computable by using the TM $R$ of the commitment
scheme.  

%joe4
%Next we show that the sequence is the computational analog of
Next we show that the sequence represents
$G$. There is an obvious mapping from histories of the games in the
sequence to histories of $G$: a commitment to $0$ is mapped to $0$, a
commitment to $1$ is mapped to $1$, the action of player 2 is just
%joe4
%mapped to the action in $G$, and providing the right key is mapped to
%action ``open" and providing a wrong key is mapped to ``destroy".  It
mapped to the action in $G$, player 1 providing the right key is mapped to
action ``open", and player 1 providing a wrong key is mapped to
``destroy".  Finally, it 
is easy to verify that 
%joe4: the advantage of labeling
%The utilities condition is also true in this example.  
UG3 (the condition on utilities) holds.

%joe4
%joe5
%To show that UG4 holds we proceed as follows:
%joe18
%To show that UG4 holds, we proceed as follows:
To show that UG4 holds, we need to define a function $\F$.
%The strategies of $G$ are mapped in the following way:
%\begin{itemize}
%\item 
%joe4: "a player 2's strategy" is bad English
%A player $2$'s strategy can't depend on any other action as its
%information set contains both actions. Thus, a strategy $\sigma_2$ for
A strategy for player $2$ in $G$ can't depend on player 1's action,
since player 2's information set contains both actions. Thus, a
%lior12
%strategy $\sigma_2$ for player $2$ 
%in $G$ is just a probability distribution over $\{0,1\}$;  
deterministic strategy $\sigma_2$ for player $2$ in $G$ 
%joe10
%is just playing an action in $\{0,1\}$;   
just plays an action in $\{0,1\}$;   
the corresponding strategy
%lior12
%$M_{\sigma_2}$ implements this distribution.  
%joe18
%$M_{\sigma_2}$ just plays the corresponding string. The second part of
$\F(\sigma_2)$ just plays the same string. 
%joe18
%The second part of the condition holds trivially given this strategy. 
UG4(b) holds trivially in this case.
%lior12: moved earlier
%(When we say that 
%strategy $M$ in $\G$ \emph{implements} or \emph{plays according to} a
%probability $\pi$ at an information set $I$ in $G$, we mean that at
%each history $h$ in $G_n$ that maps to a history in $I$, 
%$M$ plays according to a distribution that is 
%within $\frac{1}{2^{n}}$  of $\pi$; 
%such an approximating distribution can be easily constructed in
%polynomial time.)
%lior12
%A behavioral strategy $\sigma_1$ for player 1 is similarly just five
%probability distributions, one for each of the five possible nodes
%where player 1 moves.   (Since this is a game of perfect recall, by
%Kuhn's theorem, we can work with behavioral strategies
%rather than mixed strategies.)
%Let $P(b)$ be the probability that $b$ is played
%in the empty history by $\sigma_1$; let $P(b_1,b_2,$``open") be the
%probability that ``open" is played in the history $\<b_1,b_2\>$ by
%$\sigma_1$. $M_{\sigma_1}$ is the strategy in $\G$ that, with 
%probability $P(b_1)$, at the empty history in $G_n$ plays the
%commitment string $C_1(1^n,b_1,r)$, where $r$ is the random string
%used, and stores the commitment key $k = C_2(1^n,b_1,r)$ and the bit it
%played.  The next time 1 plays, it 
%looks at $b_2$ (i.e., what player 2 played), plays $k$ with
%probability $P(b_1,b_2,\mbox{``open"})$, and otherwise plays $k\oplus 1$ (a  
%string other than the right key).
%joe14
%For player $1$ we need to show how to implement any action. To play
%$b$ at the empty history in $G_n$ plays the commitment string
%joe18
%For player $1$, we need to show how to implement any action. To play
To define $\F(\sigma_1)$ for a strategy $\sigma_1$ for player $1$, we
%joe20
%need to show how to implement any action of player 1. To play 
need to show how to implement each action of player 1. To play 
$b$ at the empty history in $G_n$, 1 plays the commitment string
%joe19*: what does it mean that ``$r$ is the random string used''?
%Presuably r is the initial part of the (infinite) random string
%available.  But how much randomness is used? 
%lior22: the length might depend on the actual algorithm C_1.
%joe22
%$C_1(1^n,b,r)$, where $r$  is the a prefix of the random
%string in the appropriate length (which is then kept as the TM's state).  
%joe22: ``appropriate length'' is too mysterious
$C_1(1^n,b,r)$, where $r$ is the randomness used by player 1 in the
computation (which is then saved as the TM's state).  
%lior20
%, and stores the commitment key $k = C_2(1^n,b_1,r)$ and the bit it
%joe14
%played. To play the action "open" it plays $k$ with and to play "destroy" 
%lior20
%played. To play the action ``open'' it  plays $k$, and to play ``destroy'' 
%joe18
%To play the action ``open'' it  computes $k= C_2(1^n,b,r)$, and to
%play ``destroy''  
%joe18
%To play the action ``open'' it  computes $k= C_2(1^n,b,r)$; to
To play the action ``open'', it  computes $k= C_2(1^n,b,r)$; to
play ``destroy'',  
it plays $k\oplus 1$ (a   
string other than the right key).
%lior16
%To completely specify we should also specify what happens if the
%strategy is called to play in the second time and its memory does not
%have the correct key. In this case it just plays a random string.  
%lior20: not needed
%To complete the specification, we must also specify what happens if the
%strategy is called upon to play at the second time step and its memory does not
%have the correct key. In this case, it just plays a random string.  
%joe18: 
%It t is easy to see that $\vec{M}_{\vec{\sigma}}$ corresponds to
%$(\sigma_1,\sigma_2)$. Again, the second part of the condition holds
%trivially given this strategy. 
It is easy to see that UG4(b) holds for strategies of player 1.
Moreover, it is easy to see that $\F(\vec{\sigma})$ corresponds to
$\vec{\sigma}$, so UG4(a) holds.
%lior21
%joe20
%We extend $\F$ to mixed strategies as described before.
We extend $\F$ to mixed strategies as described above.

%joe4*;  I find this very hard to follow.   I wrote what I think
%you're trying to do.
%lior7*: This is actually not what I was trying to do. I was trying to explain that a mixed strategy a PPT is just a mixture over deterministic strategies and that you can just convert it to a mixed strategy. I wanted to avoid trying to interpret it as a behavioral strategy because of our later results that shows that you can't do that for TM. However, I believe this is true in this scenario.
%We can also map strategies in $\mathcal{G}$ to strategies in $G$ :
%\begin{itemize}
%\item 
%joe18
%UG5 also holds.
To see that UG4(c) holds, observe that 
a strategy for player 1 in $G_n$ can clearly be mapped to a
strategy in $G$:
At the empty history player 1  has some distribution over
commitments to 0 and commitments to 1.  This clearly maps to 
a distribution over putting 0 and 1 in the envelope.
At the other nodes where player 1 moves, $G_n$ induces a distribution
over correctly revealing the commitment or doing some other action;
again, this clearly maps to a distribution over ``open'' and
``destroy'' in the obvious way.  Since a strategy $M_1'$ for player 1
in $\G$ induces, for all $n$, a strategy $M'_{1,n}$ for player 1 in
$G_n$, we can associate  a sequence $(\sigma'_1, \sigma'_2, \ldots)$
with $M_1'$.  It is easy to check that, for all strategies $\sigma_2$
for player 2 in $G$, 
%joe18
%if $M_{\sigma_2}$ is the strategy guaranteed to exist by UG4, 
%$\{\phi^{G_n}_{M'_i,\vec{M}_{{\vec{\sigma}_{-i}}}}\}_n$ is computationally
%indistinguishable from $\{\rho^{G}_{\sigma'_n,\vec{\sigma}_{-i}}\}_n$. 
$\{\phi^{G_n}_{(M'_i,\F(\sigma_2))}\}_n$ is computationally
indistinguishable from $\{\rho^{G}_{(\sigma'_n,\sigma_2)}\}_n$.

%\item 
%For player $2$ the information structure is not the same and thus a
%joe18
%To show that UG5 holds for player 2 is a little more delicate, since 
We similarly want to associate with each strategy for player 2 in $\G$
a sequence of strategies in $G$.  This is a little more delicate, since 
the information structure in $G_n$ is not the same as that in $G$.
Given a strategy $\sigma_1$ for player 1 in $G$, 
%joe18
%$M_{\sigma_1}$ be a strategy corresponding to $\sigma_1$ in $\G$.
%Let $M_2$ be an an arbitrary polynomial-time strategy for player $2$
and an arbitrary polynomial-time strategy $M_2$ for player $2$
in $\G$,  let
%lior25
%$P_n(b)$ the probability that $M_2$ plays $b$ when
$P_n(b)$ be the probability that $M_2$ plays $b$ when
%joe18
%$(M_{\sigma_1},M_2)$ is played in $G_n$. Let $\sigma'_n$ be the
$(\F(\sigma_1),M_2)$ is played in $G_n$. Let $\sigma'_n$ be the
strategy 
%joe4
in $G$
that plays according to the same distribution.  
%joe18
%We now claim that $\{\phi^{G_n}_{(M_{\sigma_1}, M_2)}\}_n$ is
We now claim that $\{\phi^{G_n}_{(\F({\sigma_1}), M_2)}\}_n$ is
indistinguishable from $\{\rho^{G}_{\sigma_1,\sigma'_n}\}_n$. Assume,
%joe4
%for contradiction itit is not. The only way this can happen is if for 
%joe5
%by way of contradiction, that it does not. The only way this can happen is if for 
%joe18
%by way of contradiction, that it does not. This can happen only if, for
by way of contradiction, that it is not. This can happen only if, for
infinitely many $n$, $M_2$ plays $0$ and $1$ with non-negligibly
%joe4
%different probabilities when faced with a commitment to $0$ and to
%$1$, as otherwise the distribution must be the same. But that means
%joe18
%different probabilities, depending on when it is faced with a
%commitment to $0$ and when it is faced with a commitment to $1$.
different probabilities, depending on whether it is faced with a
commitment to $0$ or a commitment to $1$.
But that means
%joe18
%that for infinitely many $n$, it can distinguish those two events with
that, for infinitely many $n$, it can distinguish those two events with
%joe4
%non-negligible probability which is a contradiction to the fact the
non-negligible probability. This contradicts the assumption that the
commitment scheme is secure. 
%joe4*: I need to think more about this, but it does seem that your
%argument shows that the probability is independent of what player 1
%is doing, given that player 1 plays honestly.
%(remark: I have used the weaker
%definition of the mapping. I believe we can also use the stronger
%definition and argue that $P_n(b)$ must be the same no matter what
%$\sigma_1$ is, or in other words what is the probability $0$ is played
%by player $1$. I am not completely sure about this). 
%\end{itemize}
%lior7: yes. I think that you can argue that M_2 can't have different distributions if player 1 plays commit to 0 or 1 with different probabilities since that can be used to distinguish a commitment to 1 and 0. I am not 100% sure which si why I left this out for now.
%(Remember that they should only have similar behavior (the
%distributions created in both instances should be indistinguishable)
%if the other player is playing a strategy $M_{\sigma}$ for some
%strategy $\sigma$ of $G$) 
\end{proof}

%lior7: moved here and added some discussion.
%lior10
%\subsection{Indistinguishable Sets}
%lior22
%\subsection{Computational information sets}\label{sec:compInfSet}
\subsection{Consistent partition structures}\label{sec:compInfSet}

In this section, we discuss the connection between computational
indistinguishability and information structure in games. 
%joe18
%As can be
%seen in $G$ and $\G$, we intuitively replaced the information set in
As we saw,  
when going from the game $G$ in Figure~\ref{fig:extensiveGame} to the
game $\G$ that represents it,
we replaced the information set in
$G$ (the use of an envelope) with computational indistinguishability (a 
commitment scheme).  Although the games in $\G$ are perfect
information games, so that the players have complete information about a
 history, 
%lior6
%they do not really understand the
%``meaning" of a history (i.e., whether it represents a commitment to 0
%or a commitment to 1).
%joe18
%if player $1$ plays a meaningful strategy, 
if player $1$ uses the commitment scheme appropriately,
then player $2$ does not really understand the
%joe14
%``meaning" of a history (i.e., whether it represents a commitment to 0
``meaning'' of a history (i.e., whether it represents a commitment to 0
or a commitment to 1). 
%joe14: cut next sentence.  It's meaning certainly isn't clear
%This is a crucial feature of our model, and allows to model these 
%complicated protocols using a simple abstract information structure.  
%lior16
%joe14
%Note, that if player $1$ plays a different strategy, for example
%using some low entropy random string for the commitment, player $2$ might
%joe18
%On the other hand, if player $1$ plays a different strategy, for
%example using some low-entropy random string for the commitment,
On the other hand, if player $1$ ``cheats'' by 
using, for example, some low-entropy random string for the commitment,
player $2$ might 
%joe18
%have a strategy that is able to understand the "meaning" of its
have a strategy that is able to understand the ``meaning'' of its
action. 
%joe14: 
%Thus, the inofrmation structure of the sequence might depend on the
Thus, there is a sense in which the information structure of a
computational game depends on the
strategies of the players. 
%joe14*: We need to say something here about there not being an
%analogue of this for standard games.
%lior17
%joe17: 
%Note that this dependence on strategies does not exists in standard
%games. If histories in the same information sets have positive
%probability of being reached by a profile of strategy, then there is
%no way the player that plays in that information set can know which
%joe18
%Note that this dependence on strategies does not exist in standard
This dependence on strategies does not exist in standard
games. If each of two histories $h$ and $h'$ in some information set
$I$ for player $i$ has a positive 
probability of being reached by a particular strategy profile, then 
when player $i$ is in $I$, he will not know which of $h$ or $h'$ was
played, even if he knows exactly what strategies are being
played. 
%joe14
%We next generalize this intuition and make it precise. 
%joe15
%We now make this precise. 
The situation is different for computational games, in a way we now
make precise.

%joe18
%Suppose that $\G = (G_1, G_2, \ldots)$ represents $G$, 
%lior21
%Suppose that $\G = (G_1, G_2, \ldots)$ represents $G$ and
%lior21
%Suppose that $\G = (G_1, G_2, \ldots)$ represents $G$ and 
Suppose that $\G = (G_1, G_2, \ldots)$ $\langle\vec{f},\F\rangle$-represents $G$ and
$h$ is a history of $G$,
%joe14
%and $f_{n}^{-1}(h)$ is the set of
so that $f_{n}^{-1}(h)$ is the set of
histories of $G_n$ that are mapped to $h$ by $f_n$. 
%joe14: no need; this is standard notation
%We also use this notation for a set of histories of $G$.
For a set $H$ of histories of a game $G_n\in\mathcal{G}$, let
$\mathcal{V}_n(H)$ be the set of views that a player can have at
%histories in $H$. 
histories in $H$ when $G_n$ is played. 
For a strategy profile
$\vec{M}$ in $\mathcal{G}$, let $\xi^{G_n}_{\vec{M}}(v)$
%joe18
%be the probability of reaching view $v$ if the players
be the probability of reaching view $v \in \mathcal{V}_n(H)$ if the players
%joe14
%play strategy profile $\vec{M}$ in $G_n$. For a set of views $V$, let
play strategy profile $\vec{M}$ in $G_n$. For a set $V$ of views, let
$\xi^{G_n}_{\vec{M}}(V)=\sum_{v\in V}\xi^{G_n}_{\vec{M}}(v)$. 
For a set $V$ of mutually incompatible views (i.e., a set $V$ of views
%joe14: S -> V
%such that for all distinct views $v,v' \in S$, the probability of reaching
such that for all distinct views $v,v' \in V$, the probability of reaching
$v$ given that $v'$ 
%joe14
%was reached is $0$, and vice versa) 
%joe20
%was reached is $0$, and vice versa), 
is reached is $0$, and vice versa), 
%lior14
%such that $\xi^{G_n}_{\vec{M}}(V)>0$, 
%joe14: technically, random variables don't have probabilities;
%they're just functions from a sample space to the reals.  But I think
%it's reasonably clear what is meant here.
%lior17: This is a leftover from when comp ind was defined on random
%variables. 
%let $X_{\vec{M},n}^V$ be a random variable 
%that has value $v\in V$ with probability
let $X_{\vec{M},n}^V$ be a probability distribution
%joe18
%such that the probability of $v\in V$ is
%$\frac{\xi^{G_n}_{\vec{M}}(v)}{\xi^{G_n}_{\vec{M}}(V)}$  
on $V$ such that $X_{\vec{M},n}^V(v) =
\frac{\xi^{G_n}_{\vec{M}}(v)}{\xi^{G_n}_{\vec{M}}(V)}$  
%lior14
%lior17
%if $\xi^{G_n}_{\vec{M}}(V)>0$, and otherwise has value $v\in V$ with
%probability $\frac{1}{|V|}$  
%joe18
%if $\xi^{G_n}_{\vec{M}}(V)>0$, and otherwise the probability of $v\in
%V$ is $\frac{1}{|V|}$  
if $\xi^{G_n}_{\vec{M}}(V)>0$, and $\frac{1}{|V|}$ otherwise.
%joe18: unnecessary; it's a probability on V
%(and thus the probability of values not in $V$ is 0).
%lior14
%joe12: do we want to use S for a set of histories, as opposed to H?
%Denote the probability of reaching a set of histories $S$ in $G$ if
%$\vec{\sigma}$ is played as $\xi_{\vec{\sigma}}^G(S)$. Note that if
Let $\xi_{\vec{\sigma}}^G(S)$ denote the probability of reaching a set $S$
of histories in $G$
%joe18
if the players play strategy profile $\vec{\sigma}$.
Note that if
%joe14: you should add which of the UG properties this follows from.
%$\xi_{\vec{\sigma}}^G(S)>0$ then for large enough values of $n$
%lior17
%$\xi_{\vec{\sigma}}^G(S)>0$, then for sufficinetly large $n$, we must have
%joe15
%$\xi_{\vec{\sigma}}^G(S)>0$, then by UG4, for sufficinetly large $n$,
$\xi_{\vec{\sigma}}^G(S)>0$, then by UG4, for 
%joe18
all
sufficiently large $n$,
we must have 
$\xi^{G_n}_{\vec{M}_{\vec{\sigma}}}(\mathcal{V}_n(f_{n}^{-1}(S)))>0$.

%lior16: There is something I am not sure about in this
%definition. Should we allow player i to change its strategy and still
%require these distribution to be indistinguishable? Or should we only
%consider deviation at that history which is what is written now? I
%think this is the right thing but we should think about this some
%more. 
%joe14
%lior17: shouldn't we say that this is with respect to a strategy
%profile? I am not sure why you changed this. 
%We now define the notion of computational information partition that
%is consistent with a vector of strategies. 
%joe15: we should, but this is just an information statement.  It's
%done correctly in the definition. 
%lior22
%We now define the notion of a computational information partition.
%joe24: to save space
%We now define the notion of a partition structure being consistent
%with a strategy profile. 
%joe19*: this still feels like the wrong definition.  You say
%in the sentence above that you're going to define computational
%infomration partition, which I would expect to be a partition of
%histories in \G, but what you define instead is a notion of a
%partition in G, not \G, being consistent with a strategy in \G.
%Somehow we have to connect what is actually defined to what we want
%to define.  I'd like to be able to call two information sets in \G
%computationally indistinguishable, or to be able to call an object in
%\G a computational information set (which, I note, is the title of
%this section).
\begin{definition}
%joe14
%Let $\G$ represent a game $G$, and let $H_i$ be the histories of $G$
%in which player $i$ plays. We say that a partition $\I_i$ of $H_i$ is
%a \emph{$\vec{M}$-consistent information partition} if for every
%non-singleton $I\in \I_i$, and every $h\in I$ such that both
Let $\G$ $\langle\vec{f},\F\rangle$-represent a game $G$
%joe19
and let $\vec{M}$ be a strategy in $\G$.  
%joe18: we've already defined H_i
% and let $H_i$ consist of the histories of $G$
%in which player $i$ plays. A partition $\I_i$ of $H_i$ is
A partition $\I_i$ of $H^i$ (recall that this is the set of histories
%joe19
in $G$
where $i$ plays) is 
%lior21
%\emph{$\vec{M}$-consistent} if, for all
\emph{$\vec{M}$-consistent} for player $i$ if, for all
non-singleton $I\in \I_i$ and all $h\in I$ such that both
$\xi^{G_n}_{\vec{M}}(\mathcal{V}_n(f_{n}^{-1}(h)))$ and
%joe14: what does it mean for a probability to be noticeable?  Did you
%mean non-negligible?
%lior17: Yes and I added that definition to the preliminaries.
$\xi^{G_n}_{\vec{M}}(\mathcal{V}_n(f_{n}^{-1}(I\setminus
  h)))$ are non-negligible,
$\{X_{\vec{M},n}^{\mathcal{V}_n(f_{n}^{-1}(h))}\}_{n\in\mathbb{N}}$ 
is computationally  
indistinguishable from $\{X_{\vec{M},n}^{\mathcal{V}_n(f_{n}^{-1}(I\setminus \{h\}))}\}_{n\in\mathbb{N}}$.
%lior21
%joe20
%An partition structure $\vec{I}$ is $\vec{M}$-consistent if for every
%$i$ $\vec{\I}_i$ is $\vec{M}$-consistent 
A partition structure $\vec{I}$ is $\vec{M}$-consistent if, for all
agents $i$,  
$\vec{\I}_i$ is $\vec{M}$-consistent.
\end{definition}

%joe14*: you need to add some intuition here.
%joe18*: rewrote; I had trouble following this
%Intuitively, we say that an information partition is consistent with a
%strategy profile if given any history in an information set and given
%the distribution over views that are a result of this strategy
%profile, the player that plays at that information set can't
%distinguish which underlying history created its view. More formally,
%he can't distinguish if this between the distribution over views
%he can't distinguish if this between the distribution over views
%conditioned on being at that history, to the distribution over views
%condition on being in the other histories of this information set.
Intuitively, a partition $\I_i$ for player $i$ is consistent with a
%lior21: The distribution are not the same. They might even have
%completely disjoint support. 
%$h, h' \in I$, player $i$ has essentially the same distribution over
%histories given his view in $h$ as he does given his view in $h'$,
%given that $\vec{M}$ is played.
%joe19
%strategy profile $\vec{M}$, if for all $I \in \I_i$ and all histories
%$h, h' \in I$, given that $\vec{M}$ is played, player $i$ has
%computationally indistinguishable distributions over 
%strategy profile $\vec{M}$ if, for all $I \in \I_i$ and all histories
strategy profile $\vec{M}$, if, when $\vec{M}$ is played in $\G$, 
for all $I \in \I_i$ and all histories
$h, h' \in I$, the distribution over views that map to $h$ is
computationally indistringuishable from the distribution over views
that map to $h'$.  
In our example, 
%joe19*: this isn't right.  Whether he can distinguish them depends on
%the strategy that 1 is using, so we have to say something about that
this means that player $2$ can't distinguish between the
%joe18: shrunk; otherwise it's tooh hard to follow.
%distribution created by a commitment to $0$ to a distribution created
%by a commitment to $1$, which is the exact requirement from a secure
%commitment scheme. 
%lior22
%distribution created by a commitment to $0$ and the distribution created
%by a commitment to $1$.
distribution created by a commitment to $0$ and the distribution created
%joe22
%by a commitment to $1$ if the commitment algorithm is run with a
%uniform random string. 
by a commitment to $1$ if the commitment algorithm is run ``honestly''
(using truly random strings). 
%joe18: cut: I think that this is a distrction
%It is also easy to see that this holds trivially if
%the histories that are mapped to $I$ are in the same information set
%in $\G$, as then the views are identical and thus have the same
%distribution. 

%joe18
%Note that we do not enforce any condition on histories that are mapped
%back to with only negligible probability. This means there might be
Note that we do not enforce any condition on histories in $G$ that are mapped
back to a set of histories that is reached with only negligible
probability. This means there might be 
more than one $\vec{M}$-consistent information partition. 

We next show that if $\I_i$ is the information partition of player $i$
%joe14
%in $G$ then for any meaningful strategy profile
%joe18: I haven't touched most of the occurrences of ``meaningful''.
%I suppose you could define a strategy profile as being \F-meaningful
%if it's of the form \F(\sigma).  But ``meaningful'' doesn't have a
%meaning by itself.  It depends on \F, and \F is not unique.
%lior21
%in $G$, then for any meaningful strategy profile
%$\vec{M}_{\vec{\sigma}}$, $\I_i$ must be
%$\vec{M}_{\vec{\sigma}}$-consistent. 
in $G$, and $\G$ $\langle\vec{f},\F\rangle$-represent $G$ then for any
strategy profile $\vec{\sigma}$ in $G$, $\I_i$ must be 
$\F({\vec{\sigma}})$-consistent.

\begin{theorem}\label{thm:Mconsistent}
%joe18*: another thought.  If we want to make clear what the \F
%function is in UG4, we say that \G \F-represents G.  I've rewritten
%the theorem statement in that spirit.  Feel free to change it.
%Suppose that $\mathcal{G}$ represents $G$, and let $\I_i$ be the
If $\mathcal{G}$ $\langle\vec{f},\F\rangle$-represents $G$, $\I_i$ is the
information partition of player $i$ in $G$, and 
$\vec{\sigma}$ is a strategy profile in $G$ then 
%joe18
%$\I_i$ is $\vec{M}_{\vec{\sigma}}$-consistent. 
$\I_i$ is $\F(\vec{\sigma})$-consistent. 
\end{theorem}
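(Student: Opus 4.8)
The plan is to argue by contradiction, converting a distinguisher into a profitable (in the sense of UG4(c)-violating) deviation. Suppose $\I_i$ is \emph{not} $\F(\vec{\sigma})$-consistent: there is a non-singleton $I\in\I_i$ and an $h\in I$ with $\xi^{G_n}_{\F(\vec{\sigma})}(\mathcal{V}_n(f_n^{-1}(h)))$ and $\xi^{G_n}_{\F(\vec{\sigma})}(\mathcal{V}_n(f_n^{-1}(I\setminus\{h\})))$ both non-negligible, yet some PPT $D$ distinguishes $\{X_{\F(\vec{\sigma}),n}^{\mathcal{V}_n(f_n^{-1}(h))}\}_n$ from $\{X_{\F(\vec{\sigma}),n}^{\mathcal{V}_n(f_n^{-1}(I\setminus\{h\}))}\}_n$ with advantage at least $\delta(n)$, $\delta$ non-negligible, for infinitely many $n$. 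Fix two distinct actions $a^\ast,a'\in A(I)$ (they exist since $|A(I)|\ge 2$); note that from a terminal history of $G$ one can read off whether it passes through $h$, through $I\setminus\{h\}$, or through neither, and which of $a^\ast,a'$ (or neither) was then played, so these are events on the fixed finite set $H^T$. The idea is to build a polynomial-time deviation $M_i'$ for player $i$ in $\G$ whose pushed-forward distribution on $H^T$ cannot be matched, even up to computational indistinguishability, by $\rho^G_{(\sigma_n',\vec{\sigma}_{-i})}$ for any sequence $\sigma'_1,\sigma'_2,\ldots$ of strategies in $G$.

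The deviation $M_i'$ plays exactly as $\F(\sigma_i)$ until it first reaches a $G_n$-history mapping under $f_n$ into $I$, records this in its state, and thereafter again plays as $\F(\sigma_i)$. To detect arrival at such a history from a view $v$ with randomness $r_T$, it checks whether $M^{\sigma_i}(1^n,v,r_T)\in A(I)$: by UG4(b) this value is the $G$-action that $\F(\sigma_i)$ plays from $v$, and since information sets of $G$ have pairwise disjoint action sets, it lies in $A(I)$ exactly when the history underlying $v$ lies in $I$. At such a view $M_i'$ instead runs $D(1^n,v)$ and plays a $G_n$-action whose $f_n$-image has last action $a^\ast$ if $D$ outputs $0$, and one whose $f_n$-image has last action $a'$ if $D$ outputs $1$; such actions are available and can be computed in polynomial time by invoking $\F$ and the associated $M^{\tau}$ of UG4(b) on the $G$-strategies $\tau$ that agree with $\sigma_i$ away from $I$ and play $a^\ast$ (resp.\ $a'$) at $I$. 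Making this last point fully precise — checking that these variant strategies are realized at the same views up to events of negligible probability — is the one place where the argument needs care, and is where UG4 is used most heavily; I expect it to be the main obstacle. Everything before and after is routine.

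Two observations then finish the proof. First, since $G$ is a \emph{fixed finite} game, $H^T$ is a fixed finite set, so two ensembles of distributions on $H^T$ are computationally indistinguishable iff they are statistically close: there are only finitely many events $S\subseteq H^T$, so if the statistical distance were non-negligible infinitely often, a single $S$ would be optimal for infinitely many $n$ and the constant-time distinguisher ``output $1$ iff the outcome lies in $S$'' would already have non-negligible advantage. Hence UG4(c) yields a sequence $\sigma'_n$ with $\{\phi^{G_n}_{(M_i',\F(\vec{\sigma}_{-i}))}\}_n$ statistically close to $\{\rho^G_{(\sigma'_n,\vec{\sigma}_{-i})}\}_n$. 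Second, because $M_i'$ agrees with $\F(\sigma_i)$ before $I$, UG4(a) together with finiteness of $G$ (so $\xi^G_{\vec{\sigma}}(h)$ is either $0$ or bounded below by a fixed positive constant) gives that under $(M_i',\F(\vec{\sigma}_{-i}))$ the events ``reach $h$'' and ``reach $I\setminus\{h\}$'' each have probability at least some constant $c>0$ for all large $n$, and the conditional distribution on views there is exactly $X_{\F(\vec{\sigma}),n}^{\mathcal{V}_n(f_n^{-1}(h))}$ (resp.\ $X_{\F(\vec{\sigma}),n}^{\mathcal{V}_n(f_n^{-1}(I\setminus\{h\}))}$).

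Putting these together: under $\phi^{G_n}_{(M_i',\F(\vec{\sigma}_{-i}))}$, the probability that $a'$ is played conditional on reaching $h$ is (up to negligible terms) $\Pr_{v\sim X_{\F(\vec{\sigma}),n}^{\mathcal{V}_n(f_n^{-1}(h))}}[D(1^n,v)=1]$, and the analogous conditional probability for $I\setminus\{h\}$ uses the other view distribution, so these two conditional probabilities differ by at least $\delta(n)$ for infinitely many $n$. But for \emph{any} strategy $\sigma'_n$ in $G$ these two conditional probabilities are equal, because $h$ and $I\setminus\{h\}$ lie in the single information set $I$ — here one uses perfect recall of $G$ (via Kuhn's theorem one may take $\sigma'_n$ behavioral, for which the claim is immediate, since a behavioral strategy cannot correlate its action at $I$ with which node of $I$ was reached). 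Statistical $\epsilon(n)$-closeness, combined with the lower bound $c$ on each conditioning event, forces the two conditional probabilities under $\phi^{G_n}_{(M_i',\F(\vec{\sigma}_{-i}))}$ to be within $O(\epsilon(n)/c)$ of each other, contradicting the $\delta(n)$ gap since $\epsilon$ is negligible and $\delta$ is not. Hence $\I_i$ must be $\F(\vec{\sigma})$-consistent.
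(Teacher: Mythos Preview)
Your approach is essentially the paper's: assume a distinguisher $D$ exists, build a deviation $M'$ for player $i$ that plays $\F(\sigma_i)$ until reaching the preimage of $I$, then uses $D$ to choose between two fixed actions in $A(I)$, and argue this violates UG4(c) because no $\sigma'_n$ in $G$ can make the action at $I$ depend on which node of $I$ was reached. You are more explicit than the paper in two places: you spell out that on the fixed finite set $H^T$ computational indistinguishability coincides with statistical closeness (the paper uses this implicitly), and you flag the delicate issue of how $M'$ actually produces a $G_n$-action mapping to $a^\ast$ or $a'$ from a view reached under $\F(\sigma_i)$ rather than under $\F(\tau)$ --- the paper simply writes ``$M'$ then plays an action mapped to $a$'' without saying how. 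One substantive difference: you invoke perfect recall of $G$ (via Kuhn) to conclude that under $(\sigma'_n,\vec{\sigma}_{-i})$ the conditional probability of $a'$ is the same at every node of $I$. The paper's statement does \emph{not} assume perfect recall, and its one-line justification (``since $h\in I$, there can be no strategy $\sigma'$ \ldots\ has such a gap'') does not address the point, since in an imperfect-recall game a mixed strategy for $i$ can correlate its action at $I$ with its own earlier, now-forgotten move and thereby produce exactly such a gap. So you are proving a slightly weaker statement than the paper claims, but your argument is the more careful one at precisely the step where the paper's is thin.
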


%lior12
%joe10: it seems strange to say that the strategy determines which
%information set it maps to.  Do you want to say that, given a
%strategy and a view, we can efficiently determine which information
%set it maps to?
%lior13
%We first show that a strategy must be able to determine with
%joe11
%We first show that given a view a strategy must be able to efficiently
%lior16
%joe14: what does it mean that a strategy can determine something?
%Strategies are just functions from information sets to actions.
%To prove the theorem, we first show that, if meaningful strategies
%are played, then given a view, a strategy must be able to efficiently 
%determine with 
%joe18*:I started rewrting this below, but then I got confused.  It
%seems to me that the statement of Lemma 3.8 is literally what you
%have to prove.  Even if we don't assume that the actions that you
%perform at different information sets are disjoint, I don't see why
%Lemma 3.7 was ever relevant.
%To prove the theorem, we first show that if meaningful strategies are
%played then, given a view, we can efficiently compute with
%overwhelming probability to which information set of the game $G$ this
%joe18*: this is what I started saying but, as I said above, I think
%it's irrelevant.  Cut everything until the next lemma.
\commentout{
Note that by UG4(b), given a view set for player $i$ in a game
$G_n$, $i$ an efficiently compute what the corresponding information
set is in $G$.   
%joe14
%view is mapped to (recall that there are only finitely many
view is mapped to. 
%joe18
%(recall that $G$ has only finitely many information sets). 
%lior16: moved here
%joe14
%We then show that Intuitively, the opposite is also true: if
%meaningful strategies are being played and, if 
%joe18
%We then show that, intuitively, the converse is also true: if
meaningful strategies are being played and
%joe12
%an information set has more than one history, then no polynomial time
an information set in the underlying game has more than one history,
then no polynomial-time 
%joe14* I'm  confused.  Views corresopnd to information sets, not
%histories.  Are we mapping views to histories?
%strategy should be able to distinguish to which of the histories in
algorithm can distinguish to which of the histories in
%joe15: I'm still confused.  We map histories to histories; we don't
%map views to anything.  What is the ``history part'' of a view?
%lior18: We defined the view as a tuple (v_I,s,r) where v_I is the
%representation of the information set. This is the information set
%component. When \G is full information this is just the history
%itself. 
%lior17: Does this make more sense? It is obvious what this means when
%\G is full information and thus this is just saying the history part
%of its view, but I think it also makes sense in the context of
%informations set representation. 
%the information set its view is mapped to.
the information set the information set component of its view is
mapped to.
%joe16: added; this comment should go earlier
%lior19: moved earlier
%\footnote{Recall that, although the map $f_n$ maps histories to
%  histories, all the histories in the same information set in $\G$ are
%  mapped to the same information set in $G$, so we can view $f_n$ as a
%  map from information sets to information sets.}
Together this proves our theorem. We start with the first part.

\begin{lemma}\label{thm:distInfSet}
Suppose that $\mathcal{G}$ represents $G$, and let $I$ and $I'$ be two
different information sets of $G$ for which $A(I)=A(I')$ and
$p(I)=p(I')=i$. For all strategy profiles $\vec{\sigma}$ in $G$ such
that 
%joe11*: you need to clarify the quantification here.  Is this true
%for all n?  For some n?  (I think you want to claim that if it's true
%for some n it's true for all of them, but you have to say that; you
%can't just stick in an unquantified n here.)
%lior14
%$\xi^{G_n}_{\vec{M}_{\vec{\sigma}}}(\mathcal{V}_n(f_{n}^{-1}(I)))>0$ and
%$\xi^{G_n}_{\vec{M}_{\vec{\sigma}}}(\mathcal{V}_n(f_{n}^{-1}(I')))>0$,
$\xi_{\vec{\sigma}}^G(I)>0$ and $\xi_{\vec{\sigma}}^G(I')>0$,
$\{X_{\vec{M}_{\vec{\sigma}},n}^{\mathcal{V}_n(f_{n}^{-1}(I))}\}_{n\in\mathbb{N}}$
is 
%lior16 
%computational indistinguishable from
%$\{X_{\vec{M}_{\vec{\sigma}},n}^{\mathcal{V}_n(f_{n}^{-1}(I'))}\}_{n\in\mathbb{N}}$  
distinguishable from $\{X_{\vec{M}_{\vec{\sigma}},n}^{\mathcal{V}_n(f_{n}^{-1}(I'))}\}_{n\in\mathbb{N}}$ 
with overwhelming probability.  
\end{lemma}

\begin{proof}
Let player $i$ be the player that plays at $I$ and $I'$.
Assume for contradiction that there exists $\vec{\sigma}$ such that
%lior16
%$\{X_{\vec{M}_{\vec{\sigma}},n}^{\mathcal{V}_n(f_{n}^{-1}(I))}\}_{n\in\mathbb{N}}$ is computationally
%indistinguishable from
%$\{X_{\vec{M}_{\vec{\sigma}},n}^{\mathcal{V}_n(f_{n}^{-1}(I'))}\}_{n\in\mathbb{N}}$. 
there exists a constant $c$ such that there is no PPT distinguisher that can distinguish
$\{X_{\vec{M}_{\vec{\sigma}},n}^{\mathcal{V}_n(f_{n}^{-1}(I))}\}_{n\in\mathbb{N}}$ from
$\{X_{\vec{M}_{\vec{\sigma}},n}^{\mathcal{V}_n(f_{n}^{-1}(I'))}\}_{n\in\mathbb{N}}$ with probability greater than $1-n^{-c}$ for infinitely many values of $n$. 
Let
%joe11
%$a,a'\in A(I)$. Consider the strategy $\sigma_i'$ for player $i$ that
$a$ and $a'$ be distinct actions in $A(I)$. Consider the strategy
$\sigma_i'$ for player $i$ that 
%joe11
%plays $a$ in histories in $I$ and $a'$ in $I'$ and plays like
%$\sigma_i$ at any other information set. By UG4 we know that there
plays $a$ at histories in $I$ and $a'$ in $I'$ and plays like
$\sigma_i$ at any other information set. By UG4, there
must exist a strategy $M_{\sigma_i'}$ such that the
$\vec{M}_{\sigma_i',\vec{\sigma}_{-i}}$ corresponds to
$(\sigma_i',\vec{\sigma}_{-i})$. Thus, there also exists a TM $M'$
that outputs $a$ whenever $M_{\sigma'_i}$ plays an action that is
mapped to $a$ and $a'$ whenever it plays an action that is mapped to
$a'$, except with negligible error. But obviously, this means that we
can use $M'$ as a distinguisher for
$\{X_{\vec{M}_{\vec{\sigma}},n}^{\mathcal{V}_n(f_{n}^{-1}(I))}\}_{n\in\mathbb{N}}$
and
$\{X_{\vec{M}_{\vec{\sigma}},n}^{\mathcal{V}_n(f_{n}^{-1}(I'))}\}_{n\in\mathbb{N}}$. In
order to correspond to $(\sigma_i',\vec{\sigma}_{-i})$, $M'$ must
output $a$ whenever $f(h)\in I$ and $a'$ whenever $f(h)\in I'$, except
with negligible probability, and thus it is a distinguisher for the
ensembles. Thus, we get a contradiction, and our result. 
\end{proof}

%joe10
%Note that since there are only finite number of histories in each
%information set, the same result also means we can distinguish between
Note that since there are only finitely many histories in each
information set, we can also distinguish between
%joe11: I don't understand the next line.  Do you mean that if h and
%h' are in different information sets in G, then you can distinguish
%between histories in \G mapped to h and histories in \G mapped to h'?
%If so, you should say that (and say that, formally, the ensembles in
%Theorem 3.5 obtained by replacing I and I' by h and h', respectively,
%are distinguishable.)
%lior14
%what is mapped to histories that are not in the same information set. 
the ensembles obtained by replacing $I$ and $I'$ in
Theorem~\ref{thm:distInfSet} by some $h\in I$ and $h'\in I'$.  
%joe10
%Also, since there are a finite number of information sets, the
Also, since there are finitely many information sets, the
strategy can determine in which one it is (every information set is
distinguishable from all the others). 
%joe11
%The next result shows that the opposite is also true - if two
%histories are in the same information set then no polynomial time
%lior14
%The next result shows that the opposite is also true: if two
%histories are in the same information set, then no polynomial-time
%joe11*: I can't parse the next line.  What are views mapped to?
%Also, this English does not seem to match the theorem statement (whch
%has only one hhistory).
%lior14
%strategy should be able to know to which its view is mapped to.  
%lior16: moved earlier
%Intuitively, the next result shows that the opposite is also true: if
%%joe12
%%an information set has more than one history, then no polynomial time
%an information set in the underlying game has more than one history,
%then no polynomial-time 
%strategy should be able to distinguish to which of the histories in
%the information set its view is mapped to.  
%lior16:
%joe14
%We are now ready to prove the second lemma which will conclude the
%proof of the theorem. 
We are now ready to prove the second lemma.
\begin{lemma}\label{infoSet}
}
%joe18: \end{commentout}
%joe18: the statement of this lemma seems to me literally what you
%need to show to prove the theorem.
%Suppose that $\mathcal{G}$ represents $G$, and let

%joe18: \end{commentout}. Slight rewrite below
\begin{proof}
We must show that if
%$h$ and $h'$ be two histories of $G$ such that $i=P(h)=P(h')$ and
%$A(h)=A(h')$. 
%joe18
%$I$ be a non-singleton information set in $G$. 
%If  $h\in I$ , then for all strategy profiles $\vec{\sigma}$ in $G$
$I\in\I_i$ is a non-singleton information set for $i$ in $G$ and 
$h\in I$ , then for all strategy profiles $\vec{\sigma}$ in $G$
such that 
%lior12
%$\xi^{G_n}_{\vec{M}_{\vec{\sigma}}}(f_{n}^{-1}(h))>0$ and
%$\xi^{G_n}_{\vec{M}_{\vec{\sigma}}}(f_{n}^{-1}(h'))>0$,
%$\{X_{\vec{M}_{\vec{\sigma}},n}^{f_{n}^{-1}(h)}\}_{n\in\mathbb{N}}$ is computationally
%indistinguishable from
%$\{X_{\vec{M}_{\vec{\sigma}},n}^{f_{n}^{-1}(h')}\}_{n\in\mathbb{N}}$.   
%joe11: again, there's a free-floating n that needs to be quantified.
%lior14
%$\xi^{G_n}_{\vec{M}_{\vec{\sigma}}}(\mathcal{V}_n(f_{n}^{-1}(h)))>0$
%and 
%$\xi^{G_n}_{\vec{M}_{\vec{\sigma}}}(\mathcal{V}_n(f_{n}^{-1}(I\setminus \{h\})))>0$,
$\xi_{\vec{\sigma}}^G(h)>0$ and $\xi_{\vec{\sigma}}^G(I\setminus \{h\})>0$,
%joe18: replace \vec{M}_{vec{\sigma}} by \F(\vec{\sigma}) in this proof
%$\{X_{\vec{M}_{\vec{\sigma}},n}^{\mathcal{V}_n(f_{n}^{-1}(h))}\}_{n\in\mathbb{N}}$
$\{X_{\F(\vec{\sigma}),n}^{\mathcal{V}_n(f_{n}^{-1}(h))}\}_{n\in\mathbb{N}}$
is computationally 
indistinguishable from
$\{X_{\F(\vec{\sigma}),n}^{\mathcal{V}_n(f_{n}^{-1}(I\setminus
  \{h\}))}\}_{n\in\mathbb{N}}$.   

%joe18
%\end{lemma}
%%lior12: I rewrote this proof.
%\begin{proof}
Assume, by way of contradiction, that $h\in I$, $I$ is an information
%joe11
%set for player $i$, and there exists a strategy profile 
set for player $i$ in $G$, and there exists a strategy profile 
$\vec{\sigma}$ in $G$ that
%joe10
%reaches $h$ and $I\setminus\{h\}$ with positive probability such that
reaches both $h$ and $I\setminus\{h\}$ with positive probability such that
$\{X_{\F(\vec{\sigma}),n}^{\mathcal{V}_n(f_{n}^{-1}(h))}\}_n$ is distinguishable from
$\{X_{\F(\vec{\sigma}),n}^{\mathcal{V}_n(f_{n}^{-1}(I\setminus\{h\}))}\}_n$.
%lior21
Thus, there exists a distinguisher $D$ for these distributions.
%lior13
%with non-negligible probability.
%joe10
%Let $a \ne a'$ be distinct actions in $A(I)$.  (Recall that we have
Let $a$ and $a'$ be distinct actions in $A(I)$.  (Recall that we 
assumed that $|A(I)| \ge 2$.)
Let $M'$ be a strategy for player
%joe11
%$i$ in $\G$ that when it reaches a history that maps to $I$ (from
$i$ in $\G$ such that when $M'$ reaches a history that maps to $I$ 
%joe18: it wasn't obvious whythis followed from Lemma 3.7.  But it
%does follow from UG4, unless I'm missing someting.
%(from Theorem~\ref{thm:distInfSet} we know that the strategy can check
%if it is in a history that maps to $I$),  
%lior21
%(by UG4(b), this can be checked in polynomial time),
%joe20
%(by UG4(b) and the fact that the set of actions available in each
(by UG4(b) and the fact that the sets of actions available in each
information set are disjoint, this can be checked in polynomial time), 
%joe18: is this what you meant?
%$M'$ to distinguish if its view
%lior21
%$M'$ tries to distinguish if its view
$M'$ uses $D$ to distinguish if its view
is in $\mathcal{V}_n(f_n^{-1}(h))$ or  
%joe11
%$\mathcal{V}_n(f_n^{-1}(I\setminus\{h\}))$, and plays an action mapped
$\mathcal{V}_n(f_n^{-1}(I\setminus\{h\}))$. 
%lior21
%If it can distinguish, then $M'$ plays an action mapped
$M'$ then plays an action mapped
%lior21
%to $a$ in the first case and 
to $a$ if $D$ returns $0$ and 
an action mapped to $a'$ otherwise. 
At a history other than one in $f_n^{-1}(I)$, 
%joe18: added 
%it
%lior21: there is no such thing. It does not it failed.
%or if $M'$ cannot distinguish, then 
$M'$
%joe11
%also 
%joe18: no \vec here
%plays like $\vec{M}_{\vec{\sigma}_i}$.  
plays like $\F(\sigma_i)$.  
It is easy to see that, because
$\{X_{\F(\vec{\sigma}),n}^{f_{n}^{-1}(h)}\}_n$ and
$\{X_{\F(\vec{\sigma}),n}^{f_{n}^{-1}(I\setminus\{h\})}\}_n$ are
distinguishable with non-negligible probability, there is
a non-negligible probability that the strategy $M'$ is able to detect 
which case holds, and play accordingly. That means that when 
%joe18
%mapped
%back to histories of $G$, there is a non-negligible gap between the
%lior25
%histories of $(M',\F(\sigma_{-i})$ are mapped 
histories of $(M',\F(\sigma_{-i}))$ are mapped 
to histories of $G$ via $f_n$, there is a non-negligible gap between the
probability of $(h,a)$ and the probability of $(h',a)$ for $h'\in
I\setminus\{h\}$. 
%joe11
%Since $h\in I$ there can
%be no strategy $\sigma'$ for player $i$ that achieves such a gap, when
%played together with 
%%joe10
%%$\vec{\sigma}_{-i}$ when playing $G$. This contradicts UG5 and thus
%$\vec{\sigma}_{-i}$ when playing $G$. This contradicts UG5.  Thus,
Since $h\in I$, there can
be no strategy $\sigma'$ for player $i$ such that $(\sigma',\sigma_{-i})$
%joe18: 
%achieves this a gap,  this gap in $G$. This contradicts UG5.  Thus,
has such a gap, and UG4(c) cannot hold.  This gives us the desired contradiction.
\end{proof}

%joe18
%Note that Lemma~\ref{infoSet} holds trivially if, for all $G_i\in\G$,
Note that Theorem~\ref{thm:Mconsistent} holds trivially if, for all
$G_i\in\G$, 
all the histories of $\G$ that map to $I$ are in the same information
set in $G_i$.  
%joe18
%The lemma is of interest when this is not the case.
%joe20
%The lemma is of interest only when this is not the case.
The theorem is of interest only when this is not the case.
If we think of $G$ as an abstract model of a computational game $\G$ that
represents it, this result  
can be thought of as saying that information sets in $G$ can model
%joe22
%both real lack of information as well as 
both real lack of information and
computational indistinguishability in $\G$.  
%lior16
%Given an information set $I$ in $G$, let
%joe14: it seems strange to say ``some information partition, since
%the game G has a unique partition for each player
%Given an information set $I\in\I_i$ for some information partition%$\I_i$, let
%lior21*: We don't seem to use this. We might want to change the title
%of this subsection.
%joe19: I made essentially the same point earlier ...
%Given an information set $I$ in the information partition $\I_i$, let
%$I_{G_n}$ consist of the histories in $G_n$ that map to a history in
%$I$.  For reasons that will become clear shortly, we call $I_{G_n}$ a
%\emph{computational information set} in $G_n$. 

%Joe5: I don't think that this definition is useful
%lior8: I think this is a better name and also wanted to stress this
%definition since it is one of the conceptual contributions of our
%paper. I think it gets lost now. 
%joe6: I don't think that the definition itself is so exciting.  The
%theorem is exciting
%\begin{definition}
%Let $\mathcal{G}$ be a uniform computable sequence that represents
%%joe5
%%$G$. We say that two histories $h$ and $h'$ of $G_n\in\mathcal{G}$ are
%$G$. Two histories $h$ and $h'$ of $G_n\in\mathcal{G}$ are
%in the same \emph{indistinguishable set} if $f_n(h)$ and $f_n(h')$ are
%in the same information set in $G$. 
%\end{definition}
%joe4
%We use this example throughout the paper to explain the main ideas.
%Joe4*: Can we say anything about UG6?  Can we prove that it does
%*not* hold?

\section{Solution Concepts for Computational Games}\label{sec:NE}

%joe5: added
In this section, we consider analogues of two standard solution
concepts in the context of computational games: Nash equilibrium and
%lior25
%sequential equilibrium, and prove that they exist.
sequential equilibrium, and prove that they exist if the computational
game represents a finite extensive-form game. 

%joe5
%\subsection{Definition and Existence of NE}

\subsection{Computational Nash equilibrium}\label{sec:compNash}
%joe4
%We first define NE in computational games. Intuitively, a vector of
%TMs is an $\epsilon$-NE if for large enough $n$ $M_i$ is an $\epsilon$
%best response for each of the players out of all polynomial time
%TMs. 
%joe5
%Informally, a strategy profile in $\G$ is an Nash
Informally, a strategy profile in $\G$ is a computational Nash
equilibrium if no player $i$ has a profitable \emph{polynomial-time}
deviation, where a deviation is taken to be profitable if it is
profitable in infinitely many games in the sequence.
%joe18: added 
Recall that $\psi^{G_n}_{\vec{M}}$ is the distribution on the terminal
histories in $G_n$ 
induced by a strategy profile $\vec{M}$ in $\G$.

%joe4
%\begin{definition}[Nash Equilibrium in computational games]
%fixed definition
%\begin{definition}
%$\vec{M}=\{M_1,\ldots,M_c\}$  is an $\epsilon$-NE of a computable
%uniform sequence  $\mathcal{G}$ if, 
%%joe4: don't need this "for all games G_n
%%for all games $G_n\in\mathcal{G}$, 
%%polynomial time TMs M' for player $i$, there exists $n_0$ such that
%for all players $i\in[c]$ and for all
%polynomial-time strategies $M'_i$ in $\G$ for player $i$, there exists
%$n_0$ such that 
%for all $n>n_0$, 
%$$\sum_{h\in H^T}\phi^{G_n}_{\vec{M}}(h)u_i(h)\geq \sum_{h\in
%H^T}\phi^{G_n}_{M',\vec{M}_{-i}}(h)u_i(h)-\epsilon(n).$$ 
%\end{definition}
\begin{definition}
%joe5: we shouldn't really call it a Nash, given the \epsilon
%lior8: I don't agree. This is not for a single game so we can call it
%differently. Also, for any constant $\epsilon'$ for a large enough
%$n$ we are better than that so in the limit this is a NE. I would say
%that an $\delta$-NE is when we also allow a constant $\delta$ to be
%there so we are $\delta$+$\epsilon(n)$ best response when $\epsilon$
%is negligible. I would remove the computable. This is a sequence not
%a single game. I don't feel strongly about this. I just feel we use
%computable too much :) 
%$\vec{M}=\{M_1,\ldots,M_c\}$  is a NE of a computable
%joe6*: it shouldn't be ``computable'', but ``computational''. 
%Rafael, what do you think about the name?  I
%don't mind replacing ``computational'' by something else, but I'm
%uncomfortable calling it a plain Nash equilibrium
%$\vec{M}=\{M_1,\ldots,M_c\}$  is a \emph{computable Nash equilibrium}
$\vec{M}=\{M_1,\ldots,M_c\}$  is a \emph{computational Nash equilibrium}
  of a computable 
uniform sequence  $\mathcal{G}$ if, 
for all players $i\in[c]$ and for all
polynomial-time strategies $M'_i$ in $\G$ for player $i$, there exists
a negligible function $\epsilon$, such that for all $n$,
%joe18*: we don't want to require that there's an underlying game,
%which is needed for the \phi in the next line to make sense
%$$\sum_{h\in H^T}\phi^{G_n}_{\vec{M}}(h)u_i(h)\geq \sum_{h\in
%H^T}\phi^{G_n}_{M',\vec{M}_{-i}}(h)u_i(h)-\epsilon(n).$$ 
$$\sum_{h\in H^T_n}\psi^{G_n}_{\vec{M}}(h)u_i(h)\geq \sum_{h\in
H_n^T}\psi^{G_n}_{(M',\vec{M}_{-i})}(h)u_i(h)-\epsilon(n).$$ 
\end{definition}

%joe4*: I assume that this is what you meant by "unavoidable" (which
%wasn't clear in the writeup)
%Note that we only allow deviations that can implemented by polynomial
%time TMs, which is not the standard way in which NE is defined, but
%makes sense in this context as we want to talk about computationally
%bounded player. Also notice that the restriction to $\epsilon$-NE is
%in many cases unavoidable. As there is no "best" polynomial time TM
%(for every polynomial time TM there is another TM that is better and
%runs only slightly longer), in many cases a player could always
%improve negligibly by moving to another TM (for instance, a TM that
%checks a few more encryption keys), but not more than that. 
%joe5*: we need to slow down here
%The restriction to polynomial-time deviations, while nonstandard,
%joe24*: I think we need to give DHR more credit here
%Our definition of computational NE differs from the standard
Our definition of computational NE is similar in spirit to that of
%lior25
%Dodis, Halevi, and Rabin \citeyear{DHR00}, although they use a single
%game (with a utility function that takes a security parameter) rather
%than a sequence of games.  
Dodis, Halevi, and Rabin \citeyear{DHR00}, although they formalize it
by having the strategies depend on a security parameter and the
utilities depend only on actions in a single 
normal-form game (rather
than a sequence of extensive-form games).  
Our definition (and theirs) differs from the standard
definition of $\epsilon$-NE in two ways.  First, we restrict 
to polynomial-time deviations.  This
seems in keeping with our focus on polynomial-time players.
%lior7
%This restriction to polynomial-time players also suggests that, in many
%games, a Nash equilibrium (i.e., a 0-NE) will not exist. In general,
%there will not be a "best" polynomial-time strategy;
%for every polynomial-time TM, there may be another TM that is better and
%runs only slightly longer.  For example, player 2 may be able to do a
%little better by spending a little more time trying to decrypt the
%commitment in a commitment scheme.
Second, we have a negligible loss of utility $\epsilon$ in the
definition, and $\epsilon$ depends on the deviation.  
%joe18*: slowing down a little; adding the next two sentences
%The restriction to polynomial-time players explains in part why we must
%use $\epsilon$ in our definition.  In many games, a Nash equilibrium
%(i.e., a 0-NE) will not exist. 
(The fact that $\epsilon$ depends on the deviation means that what we
are considering cannot be considered an $\epsilon$-Nash equilibrium in
the standard sense.)
Of course, we could have given a definition more in the spirit of the
%joe20
%standard definition of Nash equilibrium, simply taking $\epsilon$ to
standard definition of Nash equilibrium by simply taking $\epsilon$ to
be 0.  However, the resulting solution concept would simply not be
very interesting, given our restriction to polynomial-time players.
%joe5
%This restriction to polynomial-time players also explains why we must
In general,
there will not be a ``best" polynomial-time strategy;
for every polynomial-time TM, there may be another TM that is better and
runs only slightly longer.  For example, player 2 may be able to do a
little better by spending a little more time trying to decrypt the
commitment in a commitment scheme. 
%joe18
(See also the examples in \cite{HPR15}.)%
%joe18: rewrote a little. Also made this a footnote; I think it's a distraction.
%Allowing polynomial-time deviations also means that the $\epsilon$
%can depend on the choice of deviation.   One way around this problem,
%joe24: left to full paper
\commentout{
\footnote{One way to avoid having $\epsilon$ depend on the deviation, 
which we do not explore in this paper, is
to instead use a concrete model of complexity in which we use only TMs
with running time less than some specific function $T$ of $n$. In that
%joe5
%case we would not need to quantify over $\epsilon$ and instead use
%one global $\epsilon$ and get a definition which is closer to that of
%joe18
%case we would not need to quantify over $\epsilon$.  We could instead
case,  we could 
%joe24
%use a single global $\epsilon$, and get a definition which is closer
use a single global $\epsilon$, and get a definition that is closer
to that of traditional $\epsilon$-NE.}
}

%joe4
%We next show that if a game has a NE then a computational analogue
%sequence has a $\epsilon$-NE for a some small $\epsilon$. 
%joe18
%We now show that if a a computational game $\G$ represents $G$, for
%lior24
%joe24: said above
%Our definition is very similar to previous definitions for
%computational NE considered in the literature (See for
%example~\cite{DHR00}). However, as we next show our model allows us
%to show conditions for existence of a computational NE, which for the
%best of our knowledge has not been shown before. More specifically, 
We now show that our model allows us to provide
%joe24*: Katz's overview actually states this as an open problem; it
%may be worth pointing that out.
conditions that guarantee the existence of 
a computational NE;  to the best of our knowledge,
%lior25
%this not been done before. More specifically, 
%joe25
%this has not been done before (and is introduced as an open question
this has not been done before (and is mentioned as an open question
in~\cite{Katz08}).  
%lior24
%We now show that if a computational game $\G$ represents $G$, then for
More specifically, we show that if a computational game $\G$ represents $G$, then for
every NE $\vec{\sigma}$ in $G$, there is a corresponding 
NE in $\G$.

\begin{theorem}\label{eqExists}
%joe5
%Suppose that $\G$ represents $G$. 
%joe18: here again, I think we need to make the \F explicit
%If $\G$ represents $G$ and  $\vec{\sigma}$ is a NE in $G$, then
%$\vec{M}_{\vec{\sigma}}$ is a NE of $\mathcal{G}$.  
If $\G$ $\langle\vec{f},\F\rangle$-represents $G$ and  $\vec{\sigma}$ is a NE in $G$, then
$\F(\vec{\sigma})$ is a computational NE of $\mathcal{G}$.  
\end{theorem}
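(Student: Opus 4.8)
The plan is to argue by contradiction: suppose $\F(\vec{\sigma})$ is not a computational NE of $\mathcal{G}$. Then there is some player $i$ and a polynomial-time deviation $M'_i$ for $i$ in $\G$ such that, for infinitely many $n$, player $i$'s expected utility under $(M'_i, \F(\vec{\sigma})_{-i})$ in $G_n$ exceeds that under $\F(\vec{\sigma})$ by a non-negligible amount. I would first translate both sides of this inequality from $G_n$ into the underlying game $G$ using the mapping $f_n$. On the equilibrium side, UG4(a) says $\vec{\sigma}$ corresponds to $\F(\vec{\sigma})$, so $\{\phi^{G_n}_{\F(\vec{\sigma})}\}_n$ is statistically close to $\rho_{\vec{\sigma}}$; combined with UG3 (utilities are preserved by $f_n$) and the boundedness of the finitely many utility values in $G$, this implies $\sum_{h\in H^T_n}\psi^{G_n}_{\F(\vec{\sigma})}(h)u_i(h)$ differs from $\sum_{h\in H^T}\rho_{\vec{\sigma}}(h)u_i(h)$ by at most a negligible amount.

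For the deviation side, I would invoke UG4(c): there is a sequence $\sigma'_1, \sigma'_2, \ldots$ of strategies for player $i$ in $G$ such that $\{\phi^{G_n}_{(M'_i,\F(\vec{\sigma}_{-i}))}\}_n$ is computationally indistinguishable from $\{\rho^{G}_{(\sigma'_n,\vec{\sigma}_{-i})}\}_n$. Since $G$ is finite, the utility function $u_i$ on $H^T$ together with the indicator of each terminal history gives a family of polynomial-time distinguishers; computational indistinguishability of the two ensembles of distributions over the finite set $H^T$ therefore forces $\sum_{h\in H^T}|\phi^{G_n}_{(M'_i,\F(\vec{\sigma}_{-i}))}(h) - \rho^{G}_{(\sigma'_n,\vec{\sigma}_{-i})}(h)|$ to be negligible (a standard argument: a non-negligible statistical gap on a fixed finite space yields a non-negligible distinguishing advantage for the indicator of the offending history, for infinitely many $n$). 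Again using UG3 and boundedness of utilities, $\sum_{h\in H^T_n}\psi^{G_n}_{(M'_i,\F(\vec{\sigma}_{-i}))}(h)u_i(h)$ is within a negligible amount of $\sum_{h\in H^T}\rho^{G}_{(\sigma'_n,\vec{\sigma}_{-i})}(h)u_i(h)$.

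Putting these together: for infinitely many $n$, $\sum_{h\in H^T}\rho^{G}_{(\sigma'_n,\vec{\sigma}_{-i})}(h)u_i(h)$ exceeds $\sum_{h\in H^T}\rho_{\vec{\sigma}}(h)u_i(h)$ by a non-negligible (hence, for at least one such $n$, strictly positive) amount. But each $\sigma'_n$ is a bona fide strategy for player $i$ in $G$, so this contradicts the assumption that $\vec{\sigma}$ is a (0-)Nash equilibrium of $G$ — a single profitable deviation suffices. One subtlety to handle carefully is that $\F(\vec{\sigma})$ is a mixed (stateful TM) strategy obtained by extending $\F$ from deterministic strategies as described after Definition~\ref{def:represents}, which only approximates the intended mixture up to negligible precision; this extra negligible slack is harmless and folds into the same estimates. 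The main obstacle — really the only place requiring care — is the step extracting a genuine utility gap in $G$ from computational indistinguishability in UG4(c): one must be sure that a deviation that is profitable for infinitely many $n$ yields, via the sequence $\sigma'_n$, an actual single $n$ and a single strategy $\sigma'_n$ witnessing a violation of Nash in $G$, rather than merely an asymptotic phenomenon. This works precisely because $G$ is a fixed finite game with finitely many utility values, so "non-negligibly better for infinitely many $n$" cannot hold unless $\sigma'_n$ is strictly better than $\sigma_i$ for some actual $n$.
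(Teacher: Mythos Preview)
Your proposal is correct and follows essentially the same contradiction argument as the paper: use UG4(a) to relate $\F(\vec{\sigma})$'s payoff to $\vec{\sigma}$'s, use UG4(c) to produce the sequence $\sigma'_n$, and combine with the NE hypothesis. The only cosmetic difference is that the paper phrases the final contradiction as a violation of UG4(c) (a non-negligible payoff gap yields a sampling/averaging distinguisher), whereas you first convert computational indistinguishability over the finite set $H^T$ into statistical closeness via indicator distinguishers and then contradict the NE hypothesis directly; these are equivalent rearrangements of the same step.
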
 
%lior7**: There were a few serious bugs here so I had to rewrite the entire proof.
\begin{proof}
%Let $\vec{\sigma}$ be a NE of $G$, and consider the mapped strategies
%$\vec{M_{\sigma}}$.  
Suppose that $\vec{\sigma}$ is a NE in $G$.  
%joe18: note here you said exists, but you used M_\vec{\sigma} in the
%stastement of the theorem without a ``there exists''.   Also, we
%never defined ``represents''; I think you meant ``corresponds''.
%By UG4, there exists a
%strategy profile $\vec{M}_{\vec{\sigma}}$ in $\G$ that represents
%$\vec{\sigma}$.  
By UG4, $\vec{\sigma}$ corresponds to $\F(\vec{\sigma})$.
%lior7:added
%joe5
%Thus, there exists some negligible function $\epsilon$ such that for ll $n$
Thus, there exists some negligible function $\epsilon$ such that, for all $n$,
$$\sum_{h\in
%joe18
%  H^T}\phi^{G_n}_{\vec{M_{\sigma}}}(h)u_i(h)>\sum_{h\in
  H^T}\phi^{G_n}_{\F(\vec{\sigma})}(h)u_i(h)>\sum_{h\in
  H^T}\rho^{G}_{\vec{\sigma}}(h)u_i(h)-\epsilon(n).$$
%joe4: it's not "by definition" but by UG5
%By definition we know that for any strategy $M'$ for a player in
%$\mathcal{G}$ there is a sequence of strategies
%$\sigma'_1,\sigma'_2,\ldots$ in $G$ such that the distribution over
%terminal histories which is a result of playing these strategies in
%$G$ and playing $M'$ in $G_n$ are indistinguishable.  
%lior7
%By UG5, there is a sequence
%joe4*: It seems to me that there is a serious problem here.  It looks
%Specifically, it looks like $\epsilon$ depends on $M'$.  But to say
%that M is an \pesilon-NE, you need an \epsilon independent of the
%deviation.  Am I missing something here?  
%Assume for contradiction that $\vec{M_{\sigma}}$ is not an
%lior7
%We claim that $\vec{M}_{\vec{\sigma}}$ is an $\epsilon$-NE of $\G$.
%joe5
%We claim that $\vec{M}_{\vec{\sigma}}$ is a NE of $\G$.
We claim that $\vec{M}_{\vec{\sigma}}$ is a computational NE of $\G$.
Assume, by way of contradiction, that it is not.
%Joe4
%That means there is some player and
%some strategy $M'$ for that player and  such that  for infinitely many
%joe5
%That means there is some player $i$ and
%some strategy $M'_i$ for player $i$ such 
%%lior7
%that,  there exists some $c$ such that for infinitely many values of $n$,
That means there is some player $i$,
some strategy $M'_i$ for player $i$, and some constant $c > 0$ such that,
%%lior7
%that,  there exists some $c$ such that for infinitely many values of
for infinitely many values of $n$,
$$\sum_{h\in
%joe18
%  H^T}\phi^{G_n}_{M',(\vec{M_{\sigma}})_{-i}}(h)u_i(h)>\sum_{h\in
%  H^T}\phi^{G_n}_{\vec{M_{\sigma}}}(h)u_i(h)+\frac{1}{n^c}.$$ 
  H^T}\phi^{G_n}_{(M',\F(\vec{\sigma}_{-i}))}(h)u_i(h)>\sum_{h\in
 H^T}\phi^{G_n}_{\F(\vec{\sigma})}(h)u_i(h)+\frac{1}{n^c};$$ 
%lior7: added
%joe18
%Otherwise, we could have constructed a negligible function to satisfy
%joe19
%for otherwise, we could have constructed a negligible function to satisfy
If not, we could have constructed a negligible function to satisfy
the equilibrium condition. 

By combining the two equation we get that for infinitely many values of $n$,
$$\sum_{h\in
%joe18
%  H^T}\phi^{G_n}_{M',(\vec{M_{\sigma}})_{-i}}(h)u_i(h)>\sum_{h\in 
  H^T}\phi^{G_n}_{(M',\F(\vec{\sigma}_{-i}))}(h)u_i(h)>\sum_{h\in 
  H^T}\rho^{G}_{\vec{\sigma}}(h)u_i(h)-\epsilon(n)+\frac{1}{n^c}.$$
%joe5: removed paragraph break
%
%lior15:removed from here
%By UG5, for every TM $M'_i$ there is a sequence
%$\sigma'_1,\sigma'_2,\ldots$ of strategies for $i$ in $G$ such that
%$\{\phi^{G_n}_{M'_i,\vec{M}_{\vec{\sigma}_{-i}}}\}_n$ is
%computationally indistinguishable from
%$\{\rho^{G}_{\sigma'_n,\vec{\sigma}_{-i}}\}_n$. 
%joe5
%By using the fact that $\vec{\sigma}$ is a NE we get that for
Since $\vec{\sigma}$ is a NE, we get that for
%lior15:added
%joe13
%any sequence of strategies $\sigma_1',\sigma_2'\ldots$ we have that
%joe18
%all sequences  $\sigma_1',\sigma_2'\ldots$ of strategies,
all sequences  $\sigma_1',\sigma_2'\ldots$ of strategies for player
$i$ in $G$, 
$$\sum_{h\in
  H^T}\rho^{G}_{\vec{\sigma}}(h)u_i(h)\geq\sum_{h\in
%joe18
%  H^T}\rho^{G}_{\sigma'_n,\vec{\sigma}_{-i}}(h)u_i(h)$$
  H^T}\rho^{G}_{(\sigma'_n,\vec{\sigma}_{-i})}(h)u_i(h).$$
This means that for infinitely many values of $n$,  and for any such sequence,
$$\sum_{h\in
%joe18: note I've replaced > by \ge here (among other things
%lior21: I think > was the right thing here since some of the
%inequalities used before is strict. 
%joe19: you're right; sorry.
%  H^T}\phi^{G_n}_{M',(\vec{M_{\sigma}})_{-i}}(h)u_i(h)>\sum_{h\in
%  H^T}\rho^{G}_{\sigma'_n,\vec{\sigma}_{-i}}(h)u_i(h)-\epsilon(n)+\frac{1}{n^c}.$$ 
%H^T}\phi^{G_n}_{(M',\F({\sigma}_{-i}))}(h)u_i(h) \ge \sum_{h\in
%  H^T}\rho^{G}_{(\sigma'_n,\vec{\sigma}_{-i})}(h)u_i(h)-\epsilon(n)+\frac{1}{n^c}.$$
H^T}\phi^{G_n}_{(M',\F({\sigma}_{-i}))}(h)u_i(h) > \sum_{h\in
  H^T}\rho^{G}_{(\sigma'_n,\vec{\sigma}_{-i})}(h)u_i(h)-\epsilon(n)+\frac{1}{n^c}.$$
%joe5
%
%But this is a contradiction to the fact that $\{\phi^{G_n}_{M'_i,\vec{M}_{\vec{\sigma}_{-i}}}\}_n$ is
%But this contradicts the assumption that $\{\phi^{G_n}_{M'_i,\vec{M}_{\vec{\sigma}_{-i}}}\}_n$ is
%joe18
%But this contradicts UG5 which says that there must exist a sequence
But this contradicts UG4(c), which says that there must exist a sequence
$\sigma_1',\sigma_2'\ldots$ such that
%joe18
%$\{\phi^{G_n}_{M'_i,\vec{M}_{\vec{\sigma}_{-i}}}\}_n$ is 
$\{\phi^{G_n}_{(M'_i,\F(\vec{\sigma}_{-i}))}\}_n$ is 
computationally indistinguishable from
%joe18
%$\{\rho^{G}_{\sigma'_n,\vec{\sigma}_{-i}}\}_n$. 
$\{\rho^{G}_{(\sigma'_n,\vec{\sigma}_{-i})}\}_n$. 
Since the difference between the two payoffs is not negligible, a
%joe5
%distinguisher could just sample enough samples and compute the average
distinguisher could just sample enough outcomes of these
strategies and compute the average
payoff to distinguish the two distributions with non-negligible
%joe5
%probability. Thus, $\vec{M}_{\vec{\sigma}}$ must be a NE of $\G$. 
probability. Thus, $\vec{M}_{\vec{\sigma}}$ must be a computational NE of $\G$. 
\end{proof}

%lior7:added
%joe5
%This theorem has very implications. First, it means that if the game
%$G$ has perfect recall then there exists a NE in $\G$. Second it
%allows us , for example, to prove that a prescribed cryptographic
%protocol represented by strategies in a sequence $\G$ is rational by
%joe6: we now say this in the intro.  I say this again later though,
%after the theorem on sequential equilibrium
%This theorem has two interesting implications. First, it shows that if
%the game 
%$G$ has perfect recall, then there exists a computational NE in
%$\G$. Second, it 
%allows us, for example, to prove that a prescribed cryptographic
%protocol represented by strategies in a sequence $\G$ is rational, by
%showing that it is a mapping of a NE of some much simpler game $G$
%which $\G$ represents. 

%Joe5: this is a nice observation.

%joe4
%Also notice that since the sequence allows much reacher strategies
%then the underlying games, not surprisingly it might have more
%equilibrium then the underlying game. The following example is a
%simple game where this is true. 
Theorem~\ref{eqExists} shows that every NE in $G$ has a 
corresponding NE in $\G$.  The converse does not hold.
This should not be surprising; the set of strategies in $\G$ is much
richer than that in $G$.  The following example gives a simple
illustration.

%lior22: I could not make this command to compile. If you know how to
%fix it please do 
%joe22: I'm not sure what you want to do.  This will give you a
%numbered example, I guess.  If you want something unnumbered, why not
%just write {\bf Example:}?
%lior23: I wanted unnumbered but with in the same format as it is now. For some reason I can't define the command exm*. Not sure why.
%\begin{exm*}
\begin{exm}
%joe4
%Consider the  two players game which is like the game in
%joe18
{\rm
Consider the  2-player game $G'$ that is like the game in
Figure~\ref{fig:extensiveGame}, except that the payoff is 1 to both if
they match and 0 otherwise 
%joe4
%(and both get -1 if player 1 does not open). 
%This game has three NE: both play 0, both play 1 and both play 0.5
%probability on both. 
(and both get $-1$ if player 1 does not open the envelope). 
This game has three NE: both play 0; both play 1; and both play the
mixed strategy that gives probability $1/2$ to each of 0 and 1.
%joe18: removed paragraph break
%
%joe4
%The computational analog of this game is also exactly as in
%Section~\ref{sec:commitGame}, with the different payoff function. 
There is a computational game $\G'$ that represents $G'$ that is
essentially identical to the game $\G$ described in
Section~\ref{sec:commitGame}, except that the payoffs are modified
appropriately.  
%Joe4
%This game has many more equilibria since player 1 can commit to $0$
The game $\G'$ has many more equilibria than $G'$, since player 1 can
commit to $0$ 
and $1$ with 0.5 probability but use a fixed key that the second
player knows (or choose a random key from a low entropy set that the
second player can enumerate). Player 2 can take advantage of this to
always play the matching action. There is no strategy in
%joe4: shortened
%underlying game that can mimic this behavior (Notice however that both 
%players had to change their strategies to strategies that are not
%mapping of underlying games strategies for this to happen). 
$G'$  that can mimic this behavior.   
}
%lior22:
%\end{exm*}
\end{exm}

%joe5*: I would move the material on sequential equilibrium here; it's
%part of an examination of solution concepts.  It should just be
%another subjection.
%lior8: done!
%joe9
%\subsection{Computational Sequential equilibrium}
\subsection{Computational sequential equilibrium}\label{sec:compSeq}
Our goal is to define a notion of computational sequential
equilibrium.  To do so, it is useful to think about the standard
definition of sequential equilibrium at an abstract level.   
Essentially, $\vec{\sigma}$ is a sequential equilibrium if, for each player 
$i$, there is a partition $\I_i'$ of the histories where $i$ plays such
that, at each cell $I \in \I_i'$, player $i$ has beliefs about the
likelihood of being at each history in $I$, and the action that he
chooses at a history in $I$ according to $\sigma_i$
is a best response, given these beliefs and what the other agents are
doing (i.e., $\sigma_{-i}$).  The standard definition of sequential
equilibrium takes the partition $\I_i'$ to consist of $i$'s
information sets.  If we partition the histories into singletons, we
get a \emph{subgame-perfect equilibrium}~\cite{Selten65}.  As we
%joe19: here is why it would be good to define a notion of
%computational information set in \G, perhaps depending on the
%strategy \vec{M}
argued in Section~\ref{sec:compInfSet}, the information sets  sets in $\G$ are
too fine, in general, to capture a player's ability to distinguish.
Thus, as a first step to getting a notion of computational sequential
equilibrium, we generalize the standard definition of sequential equilibrium 
in a straightforward way to get $\vec{\I}$-sequential equilibrium,
where $\I_i$ is an arbitrary partition of the histories where $i$ plays.

\begin{definition}
%joe19
%$\vec{\sigma}$ 
%is a \emph{$\vec{\I}$-sequential equilibrium} of $G$, for an
%information structure $\vec{\I}$, if there exists a sequence of 
Given a partition $\vec{\I}$, $\vec{\sigma}$ 
is a \emph{$\vec{\I}$-sequential equilibrium} of $G$
if there exists a sequence of 
completely mixed strategy profiles
$\vec{\sigma}^1,\vec{\sigma}^2,\ldots$ converging to $\vec{\sigma}$
and a sequence $\delta_1,\delta_2,\ldots$ of nonnegative real numbers
converging to $0$ such that, for each player $i$ and each
set $I\in\I_i$, $\vec{\sigma}^n_i$ is a $\delta_n$-best
response to $\vec{\sigma}^n_{-i}$ conditional on having reached $I$. 
\end{definition}

%joe19
%What are reasonable partition structures when considering a
%computational game? Using the information partition structure of the
%games in $\G$ seems undesirable. For example, in our example
What are reasonable partition structures to use when considering a
computational game? As we suggested, using the information partition
structure of $\G$ seems unreasonable. For example, in our example
commitment game, this does not allow the second player to act the same
%lior25
%when facing commitment to $0$ and commitments to $1$, 
when facing commitments to $0$ and commitments to $1$, 
%joe19
%but as we argued before, a computationally bounded player can not
%distinguish those two events.  
although, as we argued earlier, if player 1 plays appropriately, 
%lior25
%computationally bounded player cannot distinguish these two events.  
a computationally bounded player cannot distinguish these two events.  

%joe19: lots of minor rewriting
%Following this observation, it seems reasonable to have histories in 
%the same set of the partition if the player can't distinguish what
%these histories actually ``represent", whether she has or does not
%have perfect information about what the history is. For general
%uniform computable sequences it is unclear what the notion of
%``represents" should mean. 
%However, When $\G$ represents a game $G$ we do have in some sense a
%representation for a history - the history it maps to in the
%underlying game. As we have seen in section~\ref{sec:compInfSet}, what
It seems reasonable to have histories in 
the same cell of the partition if the player cannot distinguish what
these histories actually ``represent".  For general
%joe20
%uniform computable sequences it is unclear ``represents" should mean. 
uniform computable sequences it is unclear what ``represents" should mean. 
However, if $\G$ represents a game $G$, then we do have in some sense a
representation for a history: the history it maps to in the
underlying game. As we saw in Section~\ref{sec:compInfSet}, what
a player can infer from a history might depend not just on the
%joe19
%information partition structure of the games in $\G$ but also on the
information partition structure of the games in $\G$, but also on the
strategies played by the players in $G$. Thus, a natural candidate for
%joe19
%a partition structure $\vec{I}$, when $\vec{M}$ are the strategies
%joe24
%a partition structure $\vec{I}$ when $\vec{M}$ are the strategies
%played, is a partition that is based on a $\vec{M}$-consistent
a partition structure $\vec{I}$ when $\vec{M}$ is the strategy profile
played is a partition that is based on an $\vec{M}$-consistent
%joe19
%partition structure $\vec{\I}_G$ of the histories of $G$, such that
%histories of $G_n$ that map to the same $I\in\I_G$ are in the same
%partition in $\I$. We next formalize this intuition. 
partition structure $\vec{\I}_G$ of the histories of $G$.
We now formalize this intuition. 

%joe19
%Suppose $\G$ $\langle\vec{f},\F\rangle$-represent $G$. Given a set
%$I\subseteq H$, let $I_{G_n}$ be the histories of $G_n$ that map to
%elements of $I$ according to $f_n$. 
Suppose that $\G$ $\langle\vec{f},\F\rangle$-represents $G$. Given a set
$I\subseteq H$, let $I_{G_n}$ be the set consisting of histories $h \in 
G_n$ such that $f_n(h) \in I$. 
%lior21
%We can define a similar notion for a uniform computable sequence. 
%joe5*: are we talking about computational games standard games.  If
%it's a standard game, then we should talk about TMs, but strategies.
%If it's a computational game, then I'm not sure what I is.  I think
%you need a sequence here.  
%lior8: I don't see the problem. 
%For two TMs $M$ and $M'$ let $(M,I,M')$ be the TM that plays like $M$
%joe6: The problem was that you wrote I, which is an information set,
%but M is a strategy that takes as input an infinite sequence of
%games.  It's strange to put here an information set in just one of
%these games.
%lior21
%Given two strategies $M$ and $M'$ for player $i$ in a computational
%game $\G$, and a 
%sequence $\I = (I_1, I_2, \ldots)$, where $I_n$ is a computational
%information set for $i$ in $G_n$, 
% let $(M,\I,M')$ be the TM that plays like $M$
%%joe5
%%until information set $I$ and then switches to playing $M'$ from that
%in $G_n$ up to information set $I_n$, and then switches to playing $M'$ from that
%%point on (with access to the memory state of $M$). 
%%lior20
%%point on (with access to the memory state of $M$ and using the same random tape). 
%point on (with access to the same random tape). 
%joe19
%In addition, given two strategies $M$ and $M'$ for a player in $\G$, 
Given two strategies $M$ and $M'$ for a player in $\G$, 
let $(M,I,M')$ be the TM that plays like $M$
in $G_n$ up to  $I_{G_n}$, and then switches to playing $M'$ from that
point on. 
% In addition let $\mu^n_{\vec{M},I}(h)$ be the conditional
% probability of being in $h\in I$ given that a fully mixed strategy
% profile $\vec{M}$ is played and $I$ is reached.   
%joe6: added, and replaced \I by \I(I) in various places
%lior21
%Given an information set $I$ in $G$, let $\I(I)$ denote the sequence
%$(I_{G_1}, I_{G_2}, \ldots)$.
%lior9: moved here.
For a game $G_n\in\G$, 
%joe5: note that we haven't defined fully mixed, and you don't use it
%in the definition.  This definition makes sense even if you don't
%condition on an information set in any case.
%an indistinguishable set $I$ and a fully mixed strategy profile
%let $\phi_{\vec{M},I}^{G_n}$ be the conditioning of
%$\phi_{\vec{M}}^{G_n}$ on reaching $I$ (And similarly for
%$\rho_{\vec{\sigma},I}^G$). For an information set $I$ of $G$, let
a strategy profile $\vec{M}$, and a set $H'_n$ of histories in $G_n$
that is reached with positive probability when $\vec{M}$ is played,
%joe6
%let $\phi_{\vec{M},H_n}^{G_n}$ be probability on terminal histories
let $\phi_{\vec{M},H'_n}^{G_n}$ be the probability on terminal histories
in $G$ induced by pushing forward the probability on terminal
histories in $G_n$ conditioned on reaching $H'_n$ (where we identify
the event ``reaching $H'_n$" with the set of terminal histories that
extend a history in $H'_n$).  We can similarly define
%joe19
%$\rho_{\vec{\sigma},H'}^G$, for a subset $H'$ of histories in $G$.
$\rho_{\vec{\sigma},H'}^G$ for a subset $H'$ of histories in $G$.

\begin{definition}
%joe5*: do we want to talk about belief systems at all, or to relate
%our definition to belief systems.  
%lior8: no. That was why I wanted to use the other definition. No need
%for belief systems if we do use that. I am not sure what was wrong
%with what I did there.  
%lior8:I am also not sure why the definition with the exists is not
%OK. I wanted to define this on any sequence not with a relation to a
%specific game G. Like NE is not related to a specific G. We should
%discuss this. 
%joe6*: As I said yesterday, the definition is OK. The question is
%whether the notion of sequential equilibrium should talk just about a
%strategy or a (strategy, belief) pair.  Unfortunately, the standard
%definition talks about the pair. At a minimum, if we're not going to
%talk about the pair in our definition, we at least need to say that
%we consider only the strategy component in our definition of
%(computational) sequential equilibrium, and effectively existentially
%quantify  over the belief component.
%A strategy profile $\vec{M}=\{M_1,\ldots,M_c\}$  is a sequential
%equilibrium of a computable uniform sequence  
%$\mathcal{G}$ if there exist a game $G$ such that $\G$ represents it,
Suppose that $\G$ $\langle\vec{f},\F\rangle$-represents $G$.  Then $\vec{M}=\{M_1,\ldots,M_c\}$ 
is a \emph{computational sequential equilibrium} of $\G$ if there
exists a sequence  
%lior9
%$\vec{M}^1,\vec{M}^2,\ldots$ converging to $\vec{M}$ and a
of completely mixed strategies $\vec{M}^1,\vec{M}^2,\ldots$ converging
to $\vec{M}$
%lior24
%joe24: 
%a sequence of partition structures $\vec{\I}^1,\vec{\I}^2,\ldots$,
%lior25
%a sequence $\vec{\I}^1,\vec{\I}^2,\ldots$ of partition structures
%such that $\vec{\I}^k$ is $\vec{M}^k$-consistent, 
 and a 
%joe19
%sequence $\delta_1,\delta_2,\ldots$ converging to $0$ such that for
sequence $\delta_1,\delta_2,\ldots$ converging to $0$ such that, for
%joe5
%all $k,n$, all players $i\in[c]$, all information sets $I$ of $G$,
%all polynomial time TMs M' for player $i$, there exists a
%joe9
%all $k$ $n$, players $i\in[c]$, information sets $I$ for player $i$ in $G$,
%joe20
%all $k$ and $n$, players $i\in[c]$, 
all $k$, $n$, and players $i\in[c]$, 
%lior16
%lior25
%lior24
%joe25
%there exists a $\vec{M}$-consistent partition $\I_i$ such that, for
there exists an $\vec{M}$-consistent partition $\I_i$ such that, for
%joe19
%that for all sets $I\in \I_i$,
%lior25
%lior24
all sets $I\in \I_i$ and 
%all sets $I\in \I^k_i$ and 
all polynomial-time strategies $M'$ for player $i$ in $\G$, there exists a
negligible function $\epsilon$ such that 
%$$\sum_{h'\in I}\mu^n_{\vec{M}^k,I}(h)\sum_{h\in
%H^T}\phi^{G_n}_{\vec{M}^k,h'}(h)u_i(h)\geq \sum_{h'\in
%I}\mu^n_{\vec{M}^k,I}(h)\sum_{h\in 
%H^T}\phi^{G_n}_{((\vec{M}^k_i,I,M'),\vec{M}^k_{-i}),h'}(h)u_i(h)-\epsilon(n)-\delta_k.
%\begin{equation}\label{eq:seqeq1}
$$\sum_{h\in H^T}\phi^{G_n}_{\vec{M}^k,I_{G_n}}(h)u_i(h)\geq \sum_{h\in
%joe5
%H^T}\phi^{G_n}_{((\vec{M}^k_i,I_{G_n},M'),\vec{M}^k_{-i}),I_{G_n}}(h)u_i(h)-\epsilon(n)-\delta_k.$$
%\end{equation}  
%joe6: simpler
%H^T}\phi^{G_n}_{((\vec{M}^k_i,\I,M'),\vec{M}^k_{-i}),I_{G_n}}(h)u_i(h)-\epsilon(n)-\delta_k,$$  
%%joe5: added
%%where $\I  = (I_{G_1}, I_{G_2}, \ldots)$.  
%joe19: comma to period
%H^T}\phi^{G_n}_{((\vec{M}^k_i,I,M'),\vec{M}^k_{-i}),I_{G_n}}(h)u_i(h)-\epsilon(n)-\delta_k,$$    
H^T}\phi^{G_n}_{((\vec{M}^k_i,I,M'),\vec{M}^k_{-i}),I_{G_n}}(h)u_i(h)-\epsilon(n)-\delta_k.$$    
\end{definition}

%joe5
%We now prove that, like with NE, if $\sigma$ is a sequential
%joe19
%We now prove that, as with NE, if $\sigma$ is a sequential
%lior24
%We now prove that, as with NE, if $\vec{\sigma}$ is a sequential
We now claim that, as with NE, if $\vec{\sigma}$ is a sequential
equilibrium of an extensive form game $G$ with perfect recall and $\G$
%joe19
%represents $G$ then 
$\langle\vec{f},\F\rangle$-represents $G$, then 
%lior21
%$\vec{M}_{\vec{\sigma}}$ is a computational sequential equilibrium of $\G$. 
%joe19
$\F(\vec{\sigma})$ is a computational sequential equilibrium of $\G$. 
%lior24
%joe5
%However, we need that $\G$ strongly represents $G$. 
%lior9
%However, to prove this result, we need $\G$ to satisfy UG6.  We say
%that $\G$ \emph{strongly represents} $G$ if UG6 holds in addition to UG1--5.

\begin{theorem}\label{seqeqrep}
%lior9
%Suppose that $\G$ strongly represents $G$ and $G$ has perfect recall. 
Suppose that $\G$ $\langle\vec{f},\F\rangle$-represents $G$ and $G$ has perfect recall. 
%joe5
%If $\vec{\sigma}$ is a sequential equilibrium in $G$ then
If there exists a belief function $\mu$ such that 
%joe19
%$(\vec{\sigma},\mu)$ is a sequential equilibrium in $G$ then
$(\vec{\sigma},\mu)$ is a sequential equilibrium in $G$, then
%joe5
%$\vec{M}_{\vec{\sigma}}$ is a sequential equilibrium of $\mathcal{G}$.  
%joe6*: Can we say anything about how the beliefs in the
%computational game relate to \mu?  I suspect that we can.  (I also
%suspect that Rafael agrees with me that we *should* try to say something.) 
%lior10: So there are beliefs that also converge to some beliefs over
%histories in G_n. The only question is if they are efficiently
%computable which I am not so sure. I am not sure what interesting I
%have to say about this. 
%joe7: I think that they should be efficiently computable, but let's
%leave this discussion for the full paper.
$\F(\vec{\sigma})$ is a computational sequential equilibrium of
$\mathcal{G}$.   
\end{theorem}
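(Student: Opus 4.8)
The plan is to run the characterization of sequential equilibrium in \cite{KW82} stated above in the underlying game $G$ and transport it to $\G$ using the correspondence conditions UG4 together with Theorem~\ref{thm:Mconsistent}. Since $(\vec\sigma,\mu)$ is a sequential equilibrium of the perfect-recall game $G$, fix a sequence $\vec\sigma^1,\vec\sigma^2,\ldots$ of completely mixed profiles in $G$ converging to $\vec\sigma$ and nonnegative reals $\delta_1,\delta_2,\ldots$ converging to $0$ such that, for every player $i$ and every information set $I$ of $i$, $\vec\sigma^k_i$ is a $\delta_k$-best response to $\vec\sigma^k_{-i}$ conditional on reaching $I$, the conditioning being with respect to the Bayesian belief $\mu_k(I)$ over $I$ induced by $\vec\sigma^k$. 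As the approximating sequence witnessing that $\F(\vec\sigma)$ is a computational sequential equilibrium I would take $\vec M^k := \F(\vec\sigma^k)$, perturbed slightly so as to be completely mixed in $\G$: with probability $\eta_k\to 0$ each player plays a uniformly random string of length at most $p(n)$ and, if that string is not a legal action (which is checkable in polynomial time), falls back on $\F(\sigma^k_i)$'s action. This keeps $\vec M^k$ polynomial time, and since every history of $G_n$ has the same length as its $f_n$-image, $G_n$ has depth bounded by the constant depth of $G$, so perturbing any player by probability $\eta_k$ per move changes any induced play distribution by only $O(\eta_k)$ in total variation; such terms, appearing uniformly below, are absorbed into a new vanishing sequence $\delta_k':=\delta_k+O(\eta_k)$. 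For the required $\vec M$-consistent partition of each player $i$ I would take $\I_i$ to be the information partition of $i$ in $G$, which by Theorem~\ref{thm:Mconsistent} is $\F(\vec\sigma)$-consistent, hence $\vec M$-consistent.

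It then remains, for each fixed $k$, $n$, player $i$, cell $I\in\I_i$, and polynomial-time deviation $M'$, to establish
$$\sum_{h\in H^T}\phi^{G_n}_{\vec M^k,I_{G_n}}(h)\,u_i(h)\ \geq\ \sum_{h\in H^T}\phi^{G_n}_{((\vec M^k_i,I,M'),\vec M^k_{-i}),I_{G_n}}(h)\,u_i(h)-\epsilon(n)-\delta_k'$$
for some negligible $\epsilon$ (allowed to depend on $k,i,I,M'$). Write $W(\tau)$ for the expected utility to $i$ when play starts from the belief $\mu_k(I)$ over $I$, player $i$ uses the continuation $\tau$ from $I$ on, and the others use $\vec\sigma^k_{-i}$; the Kreps--Wilson property says $W(\sigma^k_i)\ge W(\tau)-\delta_k$ for every $\tau$. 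For the left-hand side, the event ``reach $I$'' has probability bounded away from $0$ in $G$ (by complete mixedness of $\vec\sigma^k$) and is tracked up to a negligible error by UG4(a), so conditioning the statistical-closeness guarantee of UG4(a) on this positive-probability event shows $\phi^{G_n}_{\F(\vec\sigma^k),I_{G_n}}$ is statistically close (negligibly in $n$) to $\rho^G_{\vec\sigma^k}$ conditioned on reaching $I$, whose expected utility is exactly $W(\sigma^k_i)$; adding the $O(\eta_k)$ perturbation, the left-hand side is $W(\sigma^k_i)\pm(\mathrm{negl}(n)+O(\eta_k))$. For the right-hand side, the crucial point is that the deviation $(\vec M^k_i,I,M')$ plays (up to the $O(\eta_k)$ perturbation) exactly $\F(\sigma^k_i)$ everywhere above $I_{G_n}$, so (i) it reaches $I_{G_n}$ with the same probability as $\F(\vec\sigma^k)$, and the induced distribution over \emph{which} history of $I_{G_n}$ is reached, pushed forward to $G$, is within $\mathrm{negl}(n)+O(\eta_k)$ of $\mu_k(I)$; and (ii) applying UG4(c) to the polynomial-time deviation $(\F(\sigma^k_i),I,M')$ from $\F(\vec\sigma^k)$ supplies strategies $\sigma'_1,\sigma'_2,\ldots$ in $G$ with $\{\phi^{G_n}_{((\F(\sigma^k_i),I,M'),\F(\vec\sigma^k_{-i}))}\}_n$ computationally indistinguishable from $\{\rho^G_{(\sigma'_n,\vec\sigma^k_{-i})}\}_n$. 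Since ``reach $I$'' is an event of the fixed finite game $G$ (hence efficiently recognizable) with probability bounded away from $0$ under both ensembles, conditioning preserves computational indistinguishability; and since $G$ has finitely many terminal histories and bounded utilities, indistinguishability of the conditional ensembles forces both their expected utilities and their induced posteriors over $I$ to agree up to a negligible amount (the sampling and brute-force distinguisher arguments already used in the proof of Theorem~\ref{eqExists}). Together with (i) this pins the posterior over $I$ under $\rho^G_{(\sigma'_n,\vec\sigma^k_{-i})}$ conditioned on reaching $I$ to $\mu_k(I)$ up to a negligible error, so the right-hand side equals $W(\sigma'_n|_{\ge I})\pm(\mathrm{negl}(n)+O(\eta_k))$. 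Chaining gives left-hand side $\ge W(\sigma^k_i)-\mathrm{negl}(n)-O(\eta_k)\ge W(\sigma'_n|_{\ge I})-\delta_k-\mathrm{negl}(n)-O(\eta_k)\ge$ right-hand side $-\delta_k-\mathrm{negl}(n)-O(\eta_k)$, and taking $\epsilon$ to be the total negligible part (with $\delta_k'=\delta_k+O(\eta_k)$ as the vanishing sequence in the definition) completes the verification.

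The conceptual heart, and the step I expect to require the most care, is the right-hand-side argument: UG4(c) only guarantees that the \emph{full, unconditional} terminal-history distribution of a polynomial-time deviation is matched by some sequence of strategies in $G$, whereas computational sequential rationality is a statement about play \emph{conditional on reaching $I$ with the equilibrium beliefs there}. The bridge is that the relevant deviation is frozen to $\F(\sigma^k_i)$ above $I_{G_n}$; this keeps the probability of reaching $I_{G_n}$ and the posterior over it equal to their honest-profile values (so that the comparison genuinely takes place against the Kreps--Wilson best-response bound at $I$), and it makes conditioning on ``reach $I$'' an indistinguishability-preserving operation, since that event then has the same, non-negligible probability under honest and deviating play. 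The remaining work is bookkeeping: checking that $f_n$-preimages commute with ``passing through $I$'' (so the conditioning in the definition of $\phi^{G_n}_{\cdot,I_{G_n}}$ is the $f_n$-pushforward of a $G_n$-conditioning), verifying the vanishing perturbation makes $\vec M^k$ completely mixed in $\G$ while staying polynomial time, and confirming $\F(\vec\sigma^k)\to\F(\vec\sigma)$ from $\vec\sigma^k\to\vec\sigma$ and the statistical closeness in UG4(a).
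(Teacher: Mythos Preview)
Your proposal is correct and follows essentially the same route as the paper's own proof: transport the Kreps--Wilson approximating sequence $\vec\sigma^k$ to $\G$ via $\F$, perturb to complete mixedness, invoke Theorem~\ref{thm:Mconsistent} for the consistent partition, handle the left-hand side by UG4(a) (statistical closeness survives conditioning on the positive-probability event ``reach $I$''), handle the right-hand side by UG4(c) (computational indistinguishability survives conditioning on a non-negligible event), and close with the $\delta_k$-best-response property at $I$.

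Two small differences are worth noting. First, the paper perturbs with probability $2^{-nk}$, which is negligible in $n$ for each fixed $k$ and is therefore absorbed into $\epsilon(n)$; you perturb with $\eta_k$ independent of $n$ and absorb it into $\delta_k'$. Both are legitimate given the shape of the definition (which allows separate $\epsilon(n)$ and $\delta_k$ slack), but the paper's choice has the advantage that $\vec M^k$ still \emph{corresponds} to $\vec\sigma^k$ in the sense of Definition~\ref{def:coresponds}, so UG4(a) applies directly without tracking an additional $O(\eta_k)$ term. Second, you are more explicit than the paper about why the posterior over $I$ induced by $(\sigma'_n,\vec\sigma^k_{-i})$ must equal $\mu_k(I)$ up to negligible error; the paper simply plugs $\sigma'_n$ into the Kreps--Wilson inequality, implicitly using that on the finite terminal set of $G$ computational indistinguishability forces statistical closeness of the conditional distributions (and hence of the $I$-marginals). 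Your added care here is justified and fills a step the paper leaves tacit.
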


\begin{proof}
%joe5
%Suppose that $\vec{\sigma}$ is a sequential equilibrium in $G$. Thus,
Suppose that there exists a belief system $\mu$ such that
$(\vec{\sigma},\mu)$ is a sequential equilibrium in $G$. Thus, 
there exists a sequence of completely mixed strategy profiles $\vec{\sigma}^1,\vec{\sigma}^2,\ldots$ that
%joe5: again: converge -> converges
%converge to $\vec{\sigma}$ and a sequence $\delta_1,\delta_2,\ldots$
converges to $\vec{\sigma}$ and a sequence $\delta_1,\delta_2,\ldots$
%joe5
%that converges to $0$ such that for all $i$ and all other strategies
%$\sigma'$ for $i$ and all information sets $I$ of $G$ we have that  
that converges to $0$ such that for all players $i$, 
%lior9
% all  strategies $\sigma'$ for $i$, 
%and all information sets $I$ for $i$ in $G$, we have that 
all information sets $I$ for $i$ in $G$, and all strategies $\sigma'$
for $i$ that act like $\sigma$ on all prefixes of histories in $I$, we
have that    
%joe19
%$$
\begin{equation}\label{eq:seqeq2}
 \sum_{h\in
%joe19: moved delta to other side, to make it look more like definition
%  H^T}\rho_{\vec{\sigma}^k,I}^G(h)u_i(h)+\delta_k\geq \sum_{h\in
%  H^T}\rho_{(\sigma',\vec{\sigma}^k_{-i}),I}^G(h)u_i(h)$$
  H^T}\rho_{\vec{\sigma}^k,I}^G(h)u_i(h)\geq \sum_{h\in
 H^T}\rho_{(\sigma',\vec{\sigma}^k_{-i}),I}^G(h)u_i(h)- \delta_k.
\end{equation}
Assume, by way of contradiction, that
%joe5
%$\vec{M}=\vec{M}_{\vec{\sigma}}$ is not a sequential equilibrium. Let
$\vec{M}=\F(\vec{\sigma})$ is not a computational sequential equilibrium. Let
%joe5
%$\vec{M}^k=\vec{M}_{\vec{\sigma}^k}$ 
%That means there is some $k$, a player $i$, an information set $I$ of $G$,
%some strategy $M'_i$ for player $i$ and some $c$ such that for
%infinitely many values of $n$,
%lior21
%$\vec{M}^k=\vec{M}_{\vec{\sigma}^k}$.
%joe19
%$M_i^k$ be the TM that acts like $\F(\vec{\sigma}_i^k)$ except that at
$M_i^k$ be the TM that acts like $\F(\sigma_i^k)$ except that at
%joe19
%any view it is called to play, with some negligible probability it
%plays a uniform random valid action. Note that this make the
%strategies completely mixed but still keeps the correspondence
%relation of UG4[a]. 
%lior25: so that it will converge to M
%a view it is called to play, with some negligible probability, it
%joe25
%a view it is called to play, probability $\frac{1}{2^{nk}}$ (which is
a view it is called to play, with probability $\frac{1}{2^{nk}}$ (which is
negligible), it 
plays an arbitrary legal action, chosen uniformly at random.  Note
that this makes $M_i^k$ completely mixed, while ensuring that $\vec{M}^k$
%joe19
%still corresponds to $\vec{\sigma}^k$>
still corresponds to $\vec{\sigma}^k$.
%lior9:added. This is obvious right? The fact that $\vec{sigma}^k$ and
%$\vec{M}^k$ induce distributions which are just statistically close
%shouldn't affect this, right? If not, that will mean that not
%$\vec{M}$ is the eq. but the limit of these TMs.  
%joe6
%Notice that $\vec{M}^1,\vec{M}^2,\ldots$ converge to $\vec{M}$ and
%are all completely mixed. Moreover, for every information set $I$ for
%joe19*: is this obvious?  We haven't made any mind of continuity
%assumptions on $\F$.  Right now, it's just a mapping from strategies
%to strategies that satisfies certain properties.
Also note that the sequence $\vec{M}^1,\vec{M}^2,\ldots$ converges to
$\vec{M}$.
%lior21
%, and that $\vec{M}^k$ is completely mixed for all $k$. 
%$n$. Moreover, for every 
%information set $I$ for 
%large enough value of $n$ $I$ is reached with non-negligible probability.
%lior16
%joe14
%Remember that by Theorem~\ref{thm:Mconsistent}, if $\I_i$ is the
%information partition of player $i$ in $G$ then $\I_i$ is
%joe19
%Recall that, by Theorem~\ref{thm:Mconsistent}, if $\I_i$ is the
By Theorem~\ref{thm:Mconsistent}, if $\I_i$ is the
information partition of player $i$ in $G$, then $\I_i$ is
%joe14
%$\vec{M}^k$-consistent for all $k$, and in particular it is also
$\vec{M}^k$-consistent for all $k$,
%lior25: returned this
%lior24
%joe25*: again, why is this obvious?
and, in particular, is also $\vec{M}$-consistent.
%lior21 
%Moreover, 
%for $n$ sufficiently large, every computational information set $I_{G_n}$ is reached 
%with non-negligible probability.
%joe19
%That means there is some $k$, player $i$, information set $I$ for $i$
%joe26
%That means that there is some $k$, player $i$, information set $I$ for $i$
Since $\vec{M}$ is not a computational sequential equilibrium, 
there must be some $k$, player $i$, information set $I$ for $i$
in $G$, 
strategy $M'_i$ for  $i$, and constant $c$ such that, for infinitely
many values of $n$, 
\begin{equation}\label{eq:seqeq3}
\sum_{h\in
%joe5
%  H^T}\phi^{G_n}_{((\vec{M}^k_i,I_{G_n},M'),\vec{M}^k_{-i}),I_{G_n}}(h)u_i(h)>\sum_{h\in
%joe6
%  H^T}\phi^{G_n}_{((\vec{M}^k_i,\I,M'),\vec{M}^k_{-i}),I_{G_n}}(h)u_i(h)>\sum_{h\in
  H^T}\phi^{G_n}_{((\vec{M}^k_i,I,M'),\vec{M}^k_{-i}),I_{G_n}}(h)u_i(h)>\sum_{h\in
  H^T}\phi^{G_n}_{\vec{M}^k,I_{G_n}}(h)u_i(h)+\frac{1}{n^c}+\delta_k.
\end{equation}
%joe19
%Otherwise, we could have constructed a negligible function to satisfy
%the equilibrium condition. 
%joe26
%If not,  $\vec{M}$ would be a computational sequential equilibrium.

%lior21
%By UG4, the
%strategy profile $\vec{M}^k$ corresponds to
%$\vec{\sigma}^k$.
%lior9: added
%joe6: 
%As $\vec{\sigma}^k$ is completely mixed we know that $I$ is reached
%with some constant probability and also any terminal history for which
%a history in $I$ is a prefix of. This means that that
Since $\vec{\sigma}^k$ is completely mixed, every terminal history 
%joe19
%is played with some positive fixed probability.  Thus,
is reached with positive probability.  Thus,
%joe19
$I_{G_n}$ is reached with positive probability.
%lior21
%joe19
%it follows from UG4 that also
%$\{\phi^{G_n}_{\vec{M}^k,I_{G_n}}\}_n$ (the conditioned ensemble) must be
Since $\vec{M}^k$ corresponds to $\vec{\sigma}^k$, 
$\{\phi^{G_n}_{\vec{M}^k,I_{G_n}}\}_n$ (the conditional ensemble) must be
statistically close to $\{\rho^{G}_{\vec{\sigma}^k,I}\}_n$, 
%joe19
%as otherwise we could use the distinguisher for these ensembles to
%distinguish the unconditioned ones.  
for otherwise we could use the distinguisher for these ensembles to
distinguish the unconditional ensembles.
%joe5
%Thus, there exists some negligible function $\epsilon$ such that for all $n$
%joe6
It follows that
there exists some negligible function $\epsilon$ such that, for
all $n$, 
\begin{equation}\label{eq:seqeq4}
\sum_{h\in
  H^T}\phi^{G_n}_{\vec{M}^k,I_{G_n}}(h)u_i(h)>\sum_{h\in
  H^T}\rho^{G}_{\vec{\sigma}^k,I}(h)u_i(h)-\epsilon(n).
\end{equation}
%joe5
%joe19
%By combining the last two equation we get that for infinitely many values of $n$,
From (\ref{eq:seqeq3}) and (\ref{eq:seqeq4}), it follows that, for infinitely many values of $n$,
%joe19
%$$
\begin{equation}\label{eq:seqeq5}
\sum_{h\in
%joe5
%H^T}\phi^{G_n}_{((\vec{M}^k_i,I_{G_n},M'),\vec{M}^k_{-i}),I_{G_n}}(h)u_i(h)>\sum_{h\in
%joe6
%H^T}\phi^{G_n}_{((\vec{M}^k_i,\I,M'),\vec{M}^k_{-i}),I_{G_n}}(h)u_i(h)>\sum_{h\in 
H^T}\phi^{G_n}_{((\vec{M}^k_i,I,M'),\vec{M}^k_{-i}),I_{G_n}}(h)u_i(h)>\sum_{h\in 
H^T}\rho_{\vec{\sigma}^k,I}^G(h)u_i(h)-\epsilon(n)+\frac{1}{n^c}+\delta_k.
\end{equation}
%joe5: 
%
%By UG6, for every TM $(\vec{M}^k_i,I,M')$ there is a sequence
%$\sigma'_1,\sigma'_2,\ldots$ of strategies for $i$ in $G$ such that
%joe5: 
%lior8: I don't understand your comments.
%by UG6, for every strategy $(\vec{M}^k_i,\I,M')$ for player $i$ in $\G$, 
%For every information set $I$ for player $i$ in $G$,
%since $\vec{M}^k_i$ is completely mixed, in game $G_n$, it must reach
%$I_{G_n}$ with positive probability.  Clearly,
%$(\vec{M}^k_i,\I,M')$does as well.  
%you don't want this, you have to change UG6.
%lior9
%By UG5, there is a sequence 
%$\sigma'_1,\sigma'_2,\ldots$ of strategies for $i$ in $G$ such that,
%%joe5*: as I observed before, right now the sequence depends on I.  If
%%for all information sets $I$ of $G$, 
%%lior8: what do you mean?
%$\{\phi^{G_n}_{(\vec{M}^k_i,I,M'),\vec{M}^k_{-i},I_{G_n}}\}_n$ is
%computationally indistinguishable from
%$\{\rho^{G}_{\sigma'_n,\vec{\sigma}^k_{-i},I}\}_n$. 
%joe6: I think that I clarified this yesterday; the issue is the order
%of quantification, which now seems to have been corrected
By UG4(c), there is a sequence 
%joe6
%$\sigma'_1,\sigma'_2,\ldots$ of strategies for $i$ in $G$ such that,
$\sigma'_1,\sigma'_2,\ldots$ of strategies for $i$ in $G$ such that
%joe6
%$\{\phi^{G_n}_{(\vec{M}^k_i,I,M'),\vec{M}^k_{-i}}\}_n$ is
$\{\phi^{G_n}_{((\vec{M}^k_i,I,M'),\vec{M}^k_{-i}})\}_n$ is
computationally indistinguishable from
%joe6
%$\{\rho^{G}_{\sigma'_n,\vec{\sigma}^k_{-i}}\}_n$. 
$\{\rho^{G}_{(\sigma'_n,\vec{\sigma}^k_{-i})}\}_n$. 
%lior9:added
%joe6
%Notice that since we know that for large enough $n$ $I$ is reached
%with non-negligible probability by $\vec{M}^k$ and
%$(\vec{M}^k_i,I,M')$ acts like $\vec{M}^k_i$ in all prefixes of
Since, for $n$ sufficiently large, $I_{G_n}$ is reached
with non-negligible probability by $\vec{M}^k$, and
$(\vec{M}^k_i,I,M')$ acts like $\vec{M}^k_i$ in all prefixes of
%joe6
%histories in $I$, then we know that for large enough $n$
%$(\vec{M}^k_i,I,M'),\vec{M}^k_{-i}$ reaches $I$ with non-negligible
%probabilities. This means that also 
%$\{\phi^{G_n}_{(\vec{M}^k_i,I,M'),\vec{M}^k_{-i},I_{G_n}}\}_n$ is
histories in $I_{G_n}$, it must be the case that for $n$ sufficiently large,
$((\vec{M}^k_i,I,M'),\vec{M}^k_{-i})$ reaches $I_{G_n}$ with non-negligible
probability. Moreover,
%joe6
%$\{\phi^{G_n}_{(\vec{M}^k_i,\I(I),M'),\vec{M}^k_{-i},I_{G_n}}\}_n$ is
$\{\phi^{G_n}_{((\vec{M}^k_i,I,M'),\vec{M}^k_{-i}),I_{G_n}}\}_n$ is
computationally indistinguishable from
%joe6
%$\{\rho^{G}_{\sigma'_n,\vec{\sigma}^k_{-i},I}\}_n$. This is because a
%joe19
%$\{\rho^{G}_{(\sigma'_n,\vec{\sigma}^k_{-i}),I}\}_n$. This is because again 
%a distinguisher for the non-conditional distributions can just use the
$\{\rho^{G}_{(\sigma'_n,\vec{\sigma}^k_{-i}),I}\}_n$. If not, again, a
distinguisher for the unconditional distributions can just use the
%joe6: what is rejection sampling?  I can guess, of course, but at a minimum,
%this should be explained or a reference given.
%lior10
%distinguisher for the conditional distribution by doing rejection
%sampling on the non-conditional distributions. 
distinguisher for the conditional distribution by calling it only when
the sampled history is such that $I$ is visited.  
%joe19
%By using the fact
%that $\vec{\sigma}$ is a sequential equilibrium, we get that for
From (\ref{eq:seqeq2}) and (\ref{eq:seqeq5}), we get that for
infinitely many values of $n$, 
$$\sum_{h\in
%joe5
%  H^T}\phi^{G_n}_{((\vec{M}^k_i,I_{G_n},M'),\vec{M}^k_{-i}),I_{G_n}}(h)u_i(h)>\sum_{h\in 
%joe6
%  H^T}\phi^{G_n}_{((\vec{M}^k_i,\I,M'),\vec{M}^k_{-i}),I_{G_n}}(h)u_i(h)>\sum_{h\in 
  H^T}\phi^{G_n}_{((\vec{M}^k_i,\I(I),M'),\vec{M}^k_{-i}),I_{G_n}}(h)u_i(h)>\sum_{h\in 
%joe6
%  H^T}\rho_{\sigma'_n,\vec{\sigma}^k_{-i},I}^G(h)u_i(h)(h)u_i(h)-\epsilon(n)+\frac{1}{n^c}.$$ 
  H^T}\rho_{(\sigma'_n,\vec{\sigma}^k_{-i}),I}^G(h)u_i(h)(h)u_i(h)-\epsilon(n)+\frac{1}{n^c}.$$ 
%joe5
%
%But this is a contradiction to the fact that
But, as in previous arguments, this contradicts the assumption that
%joe6
%$\{\phi^{G_n}_{(\vec{M}^k_i,I,M')i,\vec{M}^k_{-i},I_{G_n}}\}_n$  is 
$\{\phi^{G_n}_{((\vec{M}^k_i,I,M'_i),\vec{M}^k_{-i}),I_{G_n}}\}_n$  is 
computationally indistinguishable from
%joe6
%$\{\rho^{G}_{\sigma'_n,\vec{\sigma}^k_{-i},I}\}_n$. 
$\{\rho^{G}_{(\sigma'_n,\vec{\sigma}^k_{-i}),I}\}_n$. 
%joe5: we've said this before
%Since the difference between the two payoffs is not negligible, a
%distinguisher could just sample enough samples and compute the average
%payoff to distinguish the two distributions with non-negligible
%probability. Thus, $\vec{M}_{\vec{\sigma}}$ must be a sequential
Thus, $\vec{M}_{\vec{\sigma}}$ is a computational sequential
equilibrium of $\G$. 
\end{proof}

%lior16
What are the beliefs represented by this equilibrium? The beliefs we
%joe14
%get are such were the players believe that only meaningful strategies
%have been used, so they explain deviations as deviations in the
get are such that the players believe that,
%lior22
%joe22
%except with a negligible probability,
except with negligible probability,
only strategies that are
%joe19
%mapping of strategies of the underlying games 
mappings (via $\F$) of strategies in the underlying game
%joe24
%have been used, so they explain deviations in the computational game
were used, so they explain deviations in the computational game
in terms of deviations in the 
%joe14: subject-verb agreement, yet again
%underlying game. As discussed before, if we consider the second type
%of mixed strategies the only thing that changes with regards to the
%beliefs is that now the player believe that meaningful strategies
%have een used with overwhelming probability, and give only negligible
%probability for the possibility that another kind of deviation occur.  
underlying game. 
%lior20: removed
%As discussed above, if we consider the second type
%of mixed strategies, the only thing that changes with regards to the
%beliefs is that now the player believes that meaningful strategies have
%been used with overwhelming probability, and assugbs only negligible
%probability to the possibility that another kind of deviation occurred.  

%lior22**: 
%joe22* I don't think that this is quite the right story.  The
%``explanation'', both here and in the standard case, is that the
%moves that were made are exactly what was needed to get to where he is.
%The question is whether the player assigns positive probability to
%that explanation.
%One other interpretation of traditional sequential equilibrium is that
%using the beliefs a player can explain how he got to any information
%set.  In our scenario is that while we require our strategies to be
%fully completely on all possible views\footnote{This seams inherent in
%  the model since there are infinite number of views, and thus a
%  polynomial time TM can never mix over all of them.}.  
%A subtle consequence of this is that the players might assign a belief
%of $0$ to some views. This means that if called to play on such a
%view, the player can't explain how he got there, and thus knows that
%his beliefs are wrong. 
%We note that such a view for a player can only be a result of a deviation
%by the same player, and thus in some sense, a player might not be able
%to explain only its own behavior using his beliefs. Such views are
%never reached under a normal play of the game.  
One consequence of using completely mixed strategies in the standard
setting is that a player always assigns positive probability to
wherever he may find himself.  
%lior24
%In our scenario, while we also require our strategies to be
%joe24
%In our setting, while we also require our strategies to be
In our setting, while we also require strategies to be
completely mixed, a player $i$ may still find himself in a situation
(i.e., may have a view) to which he ascribes probability 0, so he
knows his beliefs are bound to be incorrect.   
This can happen only if the randomness in $i$'s state is inconsistent
with the moves that $i$ made that led to the current view.  (This can
happen if, for example, $i$ ignored the random string when computing
the commitment string, and just outputted a string of all 1's.) 
%lior24
\ifstateless 
Interestingly, this cannot happen if $i$ is 
stateless; since he does not recall his randomness, he cannot tell
that he has been inconsistent with it, so cannot tell that his beliefs
are false, if indeed they are.
\fi
%lior24: delete paragraph break	
%
%lior22*: If we use non-uniform TM (which I am not really sure we have
%a reason no to) than this becomes even more obvious. Since the other
%players continue like they would with some legal play of the game,
%whatever else is in the view can be given anyway as a non-uniform
%advice for a large enough n. Not sure we should do that. 
%joe22
%All deviations by other players always result in views that have
%positive beliefs and thus can be explained, which means such
%positive beliefs and thus can be explained, which means such
%deviations can not be used as signals or threats of using a different
%strategy they do not already consider. 
While $i$ may ascribe probability 0 to his earlier moves, 
deviations by other players always result in views to which $i$ ascribes
positive probability, so such deviations can not be used as signals or
threats.

%lior16
%joe14
%Our choice of a consistent information partition here means that we
By considering a consistent partitions here, we effectively
average the expected payoff over all histories of $\G_n$ that map to
%joe14
%the same information set in $I$. Note, that for each specific history
%in this set there might be a better TM. For example, in the commitment
the same information set in $I$. Note that, for each specific history
in this set, there might be a better TM. For example, in the commitment
%joe19
%game discussed before, for each commitment string there is a TM for
game discussed before, for each commitment string, there is a TM for
%joe14
%player $2$ that does better then the prescribed protocol- the one that
player $2$ that does better then the prescribed protocol: the one that
%joe19
%plays the right value given this string. However, our notion considers
plays the right value given that string. However, our notion considers
the expected value over all these histories, and thus a good deviation
%joe14
%does not exists. The reasoning for this is exactly the fact that no
%TM can really distinguish to where this histories are mapped and thus
%we treat this the same as traditional information sets are treated in
%sequential equilibrium. Note that this set might include histories
does not exist. Since no 
%joe19
%TM can distinguish to where this histories are mapped, we
polynomial-time TM can tell to which histories in the underlying game
these histories are mapped (via $f$), we
%joe19
%treat these sets just as traditional information
%sets are treated in 
%sequential equilibrium. Note that a such sets for $i$
treat cells in a consistent partition just as traditional information
sets are treated in the standard notion of
sequential equilibrium. 
%joe19*: I'm not sure what point you're trying to make in the rest of
%this paragraph.  Standard sequential equilibrium also considers
%information sets for i where i deviated in the past.  I cut the rest
%of the paragraph.
\commentout{
Note that such a cell for player $i$
might include histories 
%joe14: in what sense is a state inconsistent?
%where the player itself deviated in the sequence and is in an
%inconsistent state ( for example, player $1$ might have played a
where $i$ himself deviated earlier in the game 
%lior20
%and is thus in an inconsistent state 
%joe24.
(e.g., $i$ might have played a
random string in the first step instead of a real commitment). Again,
%joe14
%in this histories there might be a better response than playing the
in such histories there might be a better response than playing the
prescribed strategy, but since this event happens only with negligible
%joe14: in what sense is it treated ``the same as any other
%history''.  This paragraph needs to be clearified further, since th
%intuitions we are trying to capture here are critical.
%probability it is treated the same as any other history and averaged
probability, it is treated the same as any other history and averaged
as part of the computational information set. 
}

\ifstateless
%lior8
%\section{A computational version of Kuhn's theorem does not hold}
%lior11
%\section{Stateful TM vs. Stateless TM and a computational version of
%  Kuhn's Theorem} 
%joe8
%\section{Stateful TM vs. Stateless TM} 
\section{Stateful TMs vs. Stateless TMs}\label{sec:stateful}
%lior8
%In this section we show that a natural generalization of Kuhn's
%theorem to computational extensive form games do not hold, and
%moreover that the exact way we model the bounded players might have a
%big difference on the outcome of the game. We first define a
%computational interpretation for behavioral and mixed strategies in
%computational games. We then show an example of a game where there
%exists an $\epsilon$-equilibrium for any negligible $\epsilon$ with
%mixed strategies there is not even a
%$(\frac{1}{4}-\frac{2}{n})$-equilibrium with behavioral strategies. 
%joe6
As we observed in the introduction, the question of whether stateful
or stateless TMs are used played a significant role in distinguishing
the positive results of \cite{HPS14} about the existence of a
polynomial-time computable NE in repeated games from the negative
results of \cite{BCIKMP08} showing that, in general, we cannot
compute a NE in repeated games in polynomial time. 
%joe6
%In this section we show that we can use this model to argue about how
%should players be modeled. More specifically, we discuss the
%difference between \emph{stateful} and \emph{stateless} TMs, and show
%that only with one of the models we get our desired results. 
In this paper, we use stateful TMs, since they seem to be needed to
implement mixed strategies.  
Remember that the input of a stateful TM
is a tuple $(v_I, r)$, where $v_I$ is the
representation of information set $I$ 
and $r$ contains the randomness that has been used thus far.  
On the other hand, a stateless TM's view does not include the $r$ component.
%lior22
%But are they really needed?  As the
But are stateful TMs really needed?  As the
following result
%lior11
%, which can be viewed as an analogue of Kuhn's theorem, 
shows, they are (at least, given standard cryptographic
assumptions).

%This can also be viewed as a natural generalization of Kuhn's theorem to
%computational extensive form games. Interestingly, we show that it does
%not hold. 
%Thus, the exact way we model the bounded players might have
%a big difference on the outcome of the game.  
%We first define a computational interpretation for behavioral and mixed strategies in computational games. We then show an example of a game where there exists an $\epsilon$-equilibrium for any negligible $\epsilon$ with mixed strategies there is not even a $(\frac{1}{4}-\frac{2}{n})$-equilibrium with behavioral strategies.

%joe4*: this discussion should come *much* earlier.  You haven't
%specified TMs before.  You need to explain that we're using stateful
%TMs, and explain why we need this to capture mixed strategies.  Once
%you do that, the following discussion will have to change somewhat.
%As a result, I didn't make changes here.
%\subsection{stateful vs. stateless strategies}
%joe5*:  I will say again that I think that this discussion should
%come much earlier.  WE should at a minimum observe that we are
%assuming stateful TMs when we introduce strategies as TMs, rather
%than being silent on this point.
%\subsection{Stateful vs. stateless strategies}

%joe6: this was all said earlier
\commentout{
We now formally present our two models for TM strategies for computational players:
\begin{itemize}
\item Stateful TM - these TM compute at history $h$ the next action as a function of both $h$ and their auxiliary memory (their state). This memory is an auxiliary tape the TM can write on each time it acts, and persist to the next time the same machine acts.
In particular, if the TM tosses a sequence of coins initially, the auxiliary memory can be used to recall the outcome of some coin tosses made in previous rounds.
Stateful machines can be viewed as a computational analog of mixed strategies, since they can store randomness they used at previous rounds and use it to correlate the actions in the current round.  

\item Stateless TM- these TM compute at history $h$ the next action as a
function of only $h$. 
In particular, these machines cannot use any auxiliary memory in their computation.  That means, for example, they cannot remember previous coin tosses (unless they are encoded in the choices made in reaching $h$).  
We can view stateless TMs as an analogue of behavioral strategies, since these machines must use new randomness at every round.  

\end{itemize}
}

%joe6
%We use the commitment game described in
%lior10
%Consider the commitment game described in
%Section~\ref{sec:commitGame}.
%joe18*: this description needs some clarification ...
%lior21
%Consider the game $G$ which is like the commitment game described in
%Section~\ref{sec:commitGame}, only we allow the actions to be of length
%$n+\log(n)$ where the first $\log(n)$ bits of the first action
%determines the length of the key used by the commitment scheme. All
%the unused bits are padded with $0$. The honest strategy uses the full
%length of the key.  
%Let $\mathcal{G}$ be the sequence of
%games that represent $G$. 
%joe19
%Consider again the example from Section~\ref{sec:commitGame} but let
%$\G$ be such that the length of the action is at most 
%$n+\log(n)$(instead of $n$), and where the first $\log(n)$ bits of the
Consider again underlying game from Section~\ref{sec:commitGame}, but
%joe19
%now witha different computational game that represents it:
now with a different computational game that represents it:
now take the computational game $\G'$ to be such that the length of an
action in $G_n$ is at most  
$n+\log(n)$ (rather than $n$), and  the first $\log(n)$ bits of the
first action 
%joe19: I'm having trouble parsing the English here.  Suppose that
%agent 1 plays b in the underying game.  What does he play in G_n?
%determines the length of the key used by the commitment scheme. The
%mapping $\F$ now has the strategy of player $1$ plays a commitment
determine the length of the key used by the commitment scheme.
%joe20*: is this what you meant?
%The mapping $\F$ now has the strategy of player $1$ plays a commitment 
%scheme with the full length of the key. 
%lior22
%For a strategy $\sigma$ for player 1, $\F(\sigma)$ plays a commitment
%scheme whose length is that exactly encoded of the number represented
%(in binary) by the first $\log(n)$ bits of its first output.
%joe22*: I'm confused.  So what was the point of encoding the key
%length in the first log(n) bits?
%lior23: The idea is to allow the deviations we consider later, where
%it uses a shorter key. 
%joe23: I'm still confused ...
For a strategy $\sigma$ for player 1, $\F(\sigma)$ plays in $G_n$ a commitment
scheme with a key of length $n-1$.
%joe19: changed \G to \G' everywhere
%The rest is similar to before and it is easy to verify that $\G$ still
With this small change, an argument like that used in
Section~\ref{sec:commitGame} shows that $\G'$ 
represents $G$.  
%joe19: 
%We are interested in the question of the existence of strategies
%implementable by TM that are equilibrium for $\mathcal{G}$. 
%The game $G$ has the same obvious NE as before in which both players
The game $G$ has the same obvious NE as before: both players
play $0$ 
%joe19
%and $1$ with equal probability and player $1$ then plays ``open". 
and $1$ with equal probability, and player $1$ then plays ``open". 
%joe4
%Thus, from Theorem~\ref{eqExists} we know that $\mathcal{G}$ has an
%$\epsilon$-equilibrium. Notice that this equilibrium is with stateful
%(mixed) strategies. The next lemma proves this directly by describing
%joe19
%Thus, by Theorem~\ref{eqExists}, 
%the corresponding strategies form a computational NE in $\G$.  
%the corresponding strategies form a computational NE in $\G$.  
By Theorem~\ref{eqExists}, 
the corresponding strategies form a computational NE in $\G'$.  
 
%joe6*: Why do we need this.  Isn't it literally a consequence of
%Theorem 4.2?  I see no benefit and reproving a weaker result.
%lior10: I thought it will be nice to show how to directly prove it
%and that theorem 4.2 saves you the trouble. But I guess that is not
%saying much. 
\commentout{
\begin{lemma}
%joe4
%There exist polynomial time stateful TMs $(M_1,M_2)$ such that
%lior7
%There exist polynomial-time stateful TMs $M_1$ and $M_2$ such that
%$(M_1,M_2)$ is an $\epsilon$-equilibrium of $\mathcal{G}$ for some
%negligible $\epsilon$. 
There exist polynomial-time stateful TMs $M_1$ and $M_2$ such that
%joe5
%$(M_1,M_2)$ is a NE of $\mathcal{G}$.
$(M_1,M_2)$ is a computational NE of $\mathcal{G}$.
\end{lemma}
\begin{proof}
Consider the following strategies $M_1$ for player $1$ and $M_2$ for
player $2$. $M_1$ at the empty history chooses uniformly at random a
%joe4
%bit to commit to. In $G_n$ it uses the algorithm $C$ with security
%parameter $1^n$ and outputs the commitment string, while storing the
bit to commit to. In $G_n$,  it outputs the commitment string
$C_1(1^n,b,r)$ for a uniformly chosen $r$, while storing the
commitment key in memory.  
%joe4
%At any next history for which it acts on it plays the action
%corresponding to the key it stored in memory. $M_2$ at a history it
At a subsequent history where 1 moves, 1 plays the action
corresponding to the key it stored in memory. At a history where
player 2 moves, $M_2$ chooses uniformly at random a bit to play. It is easy to see
the expected value of both players playing these strategies is $0.5$. 

%joe4
%It is obvious that player $1$ has no profitable defection. Using a
It is obvious that player $1$ has no profitable deviation. Using a
different distribution over the bit to commit to does not affect his
payoff. As the commitment is perfectly binding he has no possible
%joe4
%defection at his second action too - any other string he outputs will
deviation at his second move; a deviation gives him a payoff of -1,
since it amounts to destroying the envelope.

Assume, by way of contradiction, that 
player $2$ has a profitable deviation $M_2'$. This means that when
played with $M_1$, there exists a $c$ such that for infinitely many
values of $n$ his payoff in $G_n$ is better than
$0.5+\frac{1}{n^c}$. We can use $M'_2$ 
to distinguish $C_1(1^n,0)$ and $C_1(1^n,1)$.
Our distinguisher is given as input an $n$-bit string. It calls $M'_2$
with that string as the 
history. It then outputs what $M'_2$ outputs. 
It is easy to verify that the output distribution of $M_2'$ must
vary by more than $\frac{1}{2n^c}$ 
when facing a commitment to $1$ and a commitment to $0$, which is not negligible and thus a contradiction
for the fact that the commitment scheme is secure. 

%joe5
%This shows that both player don't have profitable deviations and thus
%these strategies are indeed an $\epsilon$-equilibrium. 
This shows that neither player has a deviation that gives more than a
non-negligible increase in utility.  Thus, $(M_1,M_2)$ is indeed a
computational NE.
\end{proof}
}

%Joe4
%We next show that there is no stateless (behavioral) TM that are an
%even  $(\frac{1}{4}-\frac{2}{n})$-equilibrium.
%joe6: added next three lines
%We next show that with stateless TMs, we not even have an 
This computational NE uses stateful TMs.
The following result shows that we cannot implement any computational
%joe15
%NE using stateless TMs. 
NE of $\G$ using stateless TMs. 
%lior10: moved here
To get this result, we need to assume  the existence of an
\emph{exponentially hard} commitment 
%joe4: exist -> exists (yet again)
scheme. This means that there exists a constant $c >0$ such that, for
all $k$, no algorithm with running time $2^{ck}$ can
%joe1
%distinguish a commitment of $0$ from a commitment of $1$ with
distinguish a commitment to $0$ from a commitment to $1$ with
probability greater than $\frac{1}{2^{ck}}$. 
%lior10: this is not well defined anymore.
%Indeed, we prove an even stronger result.
%We show that with stateless TMs, we not even have an 
%$(\frac{1}{4}-\frac{2}{n})$-equilibrium.
%joe4
% thus showing that,
%unlike in traditional game theory, there is no equivalence between
%these two sets of strategies. 
%joe7
%\begin{lemma}
\begin{theorem}
%There are no polynomial time stateless TMs that are $(\frac{1}{4}-\frac{2}{n})$-equilibrium of $\mathcal{G}$.
%joe5
%There are no polynomial time stateless TMs that are a NE of $\mathcal{G}$.
%lior8
%There is computational NE of $\mathcal{G}$ consisting of stateless
There is no computational NE of $\mathcal{G}$ consisting of stateless
TMs.
%joe6
%\end{lemma}
\end{theorem}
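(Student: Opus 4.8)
The plan is to assume for contradiction that $(M_1,M_2)$ is a computational NE of $\mathcal{G}'$ in which both $M_1$ and $M_2$ are stateless, fix a constant $d$ with $M_2$ running in time $\le n^{d}$ in $G_n$, and derive a contradiction using the fact that $G$ (hence every $G_n$) is zero-sum, where player~1 earns $+1$ exactly when she reveals a \emph{valid} opening key \emph{and} her committed bit differs from player~2's action. The argument has two prongs, mirroring the two observations in the introduction: first, a stateless player~1 already has a near-optimal deviation, which forces $M_1$ to reveal a valid opening key with probability close to $\tfrac12$ in equilibrium; second, precisely because $M_1$ is stateless, player~2 can profit by \emph{simulating} $M_1$'s own opening routine to learn the committed bit.

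First I would construct player~1's deviation. Fix $\alpha>d/c$, where $c$ is the exponential-hardness constant, and let $M_1^{*}$ commit to a uniformly random bit $b$ using a uniformly random key of length $\ell_n:=\lceil\alpha\log n\rceil$, with $\ell_n$ written into the $\lceil\log n\rceil$-bit length prefix. Since $M_1^{*}$ is stateless it cannot recall $b$ or the key at its second move, so it reads the commitment string $c_1$ off the public history and brute-forces all $2^{\ell_n-1}=\mathrm{poly}(n)$ candidate keys, using $R$ together with perfect binding to identify the unique valid key and thus $b$; it opens iff $b$ differs from player~2's action and otherwise plays a string it knows to be invalid. This runs in polynomial time. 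Because $M_2$ runs in time $\le n^{d}\le 2^{c\ell_n}$, exponential hardness bounds $M_2$'s advantage in predicting $b$ from $c_1$ by $O(2^{-c\ell_n})=n^{-\Omega(1)}=o(1)$; hence $M_1^{*}$ wins against $M_2$ with probability $\ge\tfrac12-o(1)$, for a payoff $\ge-o(1)$. Applying the NE condition to $M_1^{*}$ forces player~1's equilibrium payoff to be $\ge-o(1)$, so by zero-sumness player~2's equilibrium payoff is $\le o(1)$, whence $M_1$ reveals a valid opening key in equilibrium with probability $\ge\tfrac12-o(1)$. I expect this balancing of $\ell_n$ to be the main obstacle, and it is exactly where the quantitative exponential-hardness assumption is needed: $\ell_n$ must be small enough that a polynomial-time \emph{stateless} player~1 can recover the key by brute force, yet large enough that the fixed polynomial-time $M_2$ cannot break it.

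Second I would construct player~2's deviation $M_2^{*}$, which has the (polynomial-time) code of $M_1$ hard-wired. When it is player~2's turn the public history already contains $c_1$, and because $M_1$ is stateless its opening move is a randomized function of the public history $(c_1,\cdot)$ alone; so $M_2^{*}$ simulates $M_1$'s opening move on each of the two possible continuations $(c_1,0)$ and $(c_1,1)$, with fresh randomness, and checks the resulting strings with $R$. The actual equilibrium continuation is one of these two, so with probability at least $M_1$'s own success probability, hence at least $\tfrac12-o(1)$ by the first prong, $M_2^{*}$ recovers a valid opening and thereby learns the committed bit $b$; it then plays $b$, matching player~1 and so preventing her from winning. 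A short computation then shows player~2's payoff under $(M_1,M_2^{*})$ is at least $\tfrac14$ for all large $n$, contradicting the $o(1)$ bound on player~2's equilibrium payoff, which completes the proof. Conceptually this is the stateful-versus-stateless gap in action: in the abstract game $G$ the information set hides player~1's bit from player~2 permanently, but in $\mathcal{G}'$ a stateless player~1 who wants to open must re-derive her bit from the public transcript, and nothing prevents player~2 from carrying out the same derivation.
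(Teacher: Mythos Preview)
Your approach is essentially the paper's: exhibit profitable deviations for both players and use zero-sumness to derive a contradiction.  The paper does the two prongs independently (player~2 first, simulating $M_1$ on $(c_1,1)$ alone but $n^2$ times; player~1 second, always opening), then adds the two guaranteed payoffs and observes the sum exceeds what a zero-sum game allows.  You instead chain the prongs, using the first to bound player~2's equilibrium payoff by $o(1)$ and then contradicting this with the second.  Both routes are fine.

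There is, however, a genuine gap in your second prong.  You never say what $M_2^{*}$ plays when \emph{both} single simulations fail to produce a valid key, and your ``short computation'' for the $\tfrac14$ bound does not follow from the recovery bound $q\ge\tfrac12-o(1)$ alone: that inequality only yields player~2's payoff $\ge 2q-1\ge -o(1)$, which does not beat the $o(1)$ equilibrium bound.  Two easy fixes:
\begin{itemize}
\item Specify that on failure $M_2^{*}$ plays a uniformly random bit $d$.  Then, conditioned on $c_1$ and on failure (an event depending only on $M_2^{*}$'s own simulation coins), player~1 wins only if $b(c_1)\ne d$, which has probability $\tfrac12$; hence $\Pr[\text{player 1 wins}]\le \Pr[\text{fail}]\cdot\tfrac12\le\tfrac14+o(1)$ and player~2's payoff is $\ge\tfrac12-o(1)$.  (In fact the sharper pointwise bound $\Pr[\text{player 1 wins}\mid c_1]\le (1-p_0)(1-p_1)\,p_d\le (1-p_d)p_d\le\tfrac14$ works for \emph{any} default $d$ and does not even need the first prong.)
\item Alternatively, do what the paper does: repeat each simulation $n^2$ times.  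Then for every $c_1$, either $\max(p_0,p_1)<1/n$ and player~1 almost never opens, or recovery succeeds with probability $\ge 1-(1-1/n)^{n^2}$; in both cases player~1 wins with probability $<1/n$, giving player~2 payoff $\ge 1-2/n$ outright.
\end{itemize}
With either fix your argument goes through; as written, the ``short computation'' is the one step that is not actually short.
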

\begin{proof}
%lior7
%joe4
%We prove this by contradiction. Assume an equilibrium exist and let
%$(M_1, M_2)$ be the TM used in this equilibrium.  
Assume, by way of contradiction, that 
%joe5
%$(M_1, M_2)$ is a NE of $\G$.
$(M_1, M_2)$ is a computational NE of $\G$ where $M_1$ and $M_2$ are stateless.
%lior7
%We first claim that player $2$ must get at least
%$\frac{3}{4}+\frac{1}{n}$ in the equilibrium. 
%joe5
%We first claim that player $2$ can guarantee himself a payoff of at
%least $1-\frac{2}{n}$. 
We first claim that player $2$ can obtain a guaranteed payoff of at least $1-\frac{2}{n}$.
Consider the following
%joe4
%machine $M'_2$. This machine on input of a history $h$ of length $n$
%lior10
%TM $M'_2$. Given as input a history $h$ of length $n$, $M_2'$
%simulates the machine $M_1$ on input $(h,10^{n-1})$ $n^2$ times. Since
%lior21
%TM $M'_2$. Given as input a history $h$ of length $n+\log(n)$, $M_2'$
TM $M'_2$. Given as input a history $h$, $M_2'$
%joe7
%simulates the machine $M_1$ on input $(h,10^{n+\log(n)-1})$ $n^2$ times. Since
%lior21
%simulates $M_1$ on input $(h,10^{n+\log(n)-1})$ $n^2$ times. Since
simulates $M_1$ on input $(h,1)$ $n^2$ times. Since
%joe4
%$M_1$ only gets a history as input such a machine $M'_2$ exists, and
%since $M_1$ is a polynomial time TM so is $M'_2$. If in any of these
%joe19: slowed down
%$M_1$ gets only a history as input, such a machine $M'_2$ exists;
$M_1$ gets only a view as input, its actions can be simulated by
$M_2'$; since $M_2$ can compute $M_1$'s view.  (Note that this would
not be the case if $M_1$ were stateful, for then its action at the
second step could depend on the randomness it used at the first step,
and $M_2'$ would not have access to this.)  
Since $M_1$ is a polynomial-time TM, so is $M'_2$. If in any of these
%joe4
%simulations $M_1$ outputs the valid commitment key than $M'_2$ knows
%what bit was committed to by $M_1$ and plays that as its next action,
%joe19: moved comment about R up from below
%simulations $M_1$ outputs the valid commitment key, then $M'_2$ knows
simulations $M_1$ outputs the valid commitment key ($M'_2$ can easily
check if the key is valid by using the TM $R$),
then $M'_2$ knows what bit was committed to by $M_1$, and plays that
as its next action, 
guaranteeing that player $2$ wins. Otherwise, $M_2$ plays $1$.  

Assume first that $M_1$ is a TM that on input $(h,1)$ reveals
%joe4
%the valid key with probability smaller than $1/n$ ($M'_2$ can easily
%joe19
%the valid key with probability less than $1/n$ ($M'_2$ can easily
%check if the key is valid by using the TM $R$). This means that by
the valid key with probability less than $1/n$.  This means that by
%joe4
%playing $1$ $M'_2$ wins with probability at least $1-\frac{1}{n}$ and
%joe5
%playing $1$, $M'_2$ wins with probability at least $1-\frac{1}{n}$ and
playing $1$, $M'_2$ wins with probability at least $1-\frac{1}{n}$, and
thus its expected payoff is at least $1-\frac{2}{n}$.  
Otherwise, the probability that $M_1$ reveals the valid key is greater
%joe4: lots of minor changes
%than $\frac{1}{n}$, and thus in $n^2$ simulation of $M_1$ the
%probability the valid key is not revealed is smaller than
%$(1-\frac{1}{n})^{n^2}$, and thus its payoff is at least
%$1-(1-\frac{1}{n})^{n^2}>1-\frac{1}{n}$. Thus player $2$ has a
%deviation that gets him an expected payoff of at least
%$(\frac{1}{4}-\frac{2}{n})$-equilibrium player $2$ must get at least
than $\frac{1}{n}$, and thus in $n^2$ simulation of $M_1$, the
probability that the valid key is not revealed is less than
%joe5* this sentence doesn't parse.  There's something missing here.
%(I made minor corrections, but didn't fix the main problem).
%$1-\frac{1}{n}$ which means that if $(M_1, M_2)$ is an
%$(1-\frac{1}{n})^{n^2}$.  Thus, its payoff is at least
%$1-(1-\frac{1}{n})^{n^2}>1-\frac{2}{n}$. Thus, player $2$ has a
%deviation that gets him an expected payoff of at least $1-\frac{2}{n}$.
%lior8: the next line should have been commented.
%$1-\frac{1}{n}$, which means that if $(M_1, M_2)$ is an
$(1-\frac{1}{n})^{n^2}$.  Thus, player 2's payoff is at least
$1-(1-\frac{1}{n})^{n^2}>1-\frac{2}{n}$, so player $2$ has a
deviation that gives him an expected payoff of at least $1-\frac{2}{n}$.
%lior7
%This means that if $(M_1, M_2)$ is an
%$(\frac{1}{4}-\frac{2}{n})$-equilibrium, player $2$ must get at least
%$1-\frac{1}{n}-(\frac{1}{4}-\frac{2}{n})=\frac{3}{4}+\frac{1}{n}$. 

%lior7****: This needs more work.
%We next show that player $1$ must get at least $\frac{1}{4}+\frac{1}{n}$. 
%joe5
%We next show that player $1$ can guarantee himself at least $\frac{1}{2}-\frac{1}{n}$. 
We next show that player $1$ can obtain a guaranteed payoff of at least $\frac{1}{2}-\frac{1}{n}$. 
%joe4
%Let $a>1$ be some constant such that $M_2$ running time is bounded by
Let $a>1$ be some constant such that $M_2$'s running time is bounded by
$n^a$. Let $c$ be the constant from the exponentially hard commitment
scheme definition. Let $n_0$ be such that $(\frac{a}{c})\log n <n-1$
%joe5*: did you mean to add this
for $n > n_0$.
%joe4
%Consider the following TM $M'_1$. In $G_n$ for $n<n_0$ it behaves as
%$M_1$. For $G_n$ for $n\geq n_0$, at the empty history it uses a
Consider the following TM $M'_1$. If $n < n_0$, in $G_n$, $M'_1$
behaves just like $M_1$.  If $n\geq n_0$, at the empty history, $M'_1$
%joe5: But hos does M_1' choose what bit to commit to?  Does it choose
%uniformly at random?  I assume so, given the rest of the proof
%joe19
%uses a commitment scheme with key length $a'\log n$ for $a'=\frac{a}{c}$
uses a commitment scheme with key length $\frac{a}{c}\log n$ 
%lior8:
%joe6
%to commit to a uniformly chosen bit.
to commit to a bit chosen uniformly at random.
%joe4
%At the next history it is called to act on, it enumerates all strings
%of length $a'\log n$ and checks if this is a valid key for the
%commitment, and if so outputs it. Since $c$ and $a$ are constants
At the next history that player 1 is called upon to act, $M_1'$ checks
%joe19
%each string of length $a'\log n$ to see if it is a valid key for the
each string of length $\frac{a}{c}\log n$ to see if it is a valid key for the
commitment; if so, it outputs that string. 
Since $c$ and $a$ are constants,
this can be done in polynomial time, and thus $M'_1$ is a valid
deviation. 

%joe4
%We now analyze $M'_1$ payoff in $G_n$ for $n\geq n_0$. First notice
%that since $M'_1$ enumerates all possible strings that it might have
We now analyze the expected utility of $M'_1$ in $G_n$ for $n\geq n_0$. 
Since $M'_1$ enumerates all possible strings that it might have
%joe5
%used in the first step, we know that one of those must work, and thus
used in the first step, we know that one of them must work, and thus
%joe4
%$M'_1$ succeeds in opening the commitment. Thus, $M'_1$ only loses if
$M'_1$ succeeds in opening the commitment. Thus, $M'_1$ loses only if
%joe6
%$M_2$ plays the same bit as it committed to. We claim that this
$M_2$ plays the same bit as $M_1'$ committed to. We claim that this
can't happen with probability greater than
%joe4
%$\frac{1}{2}+\frac{1}{n}$. Assume otherwise, than the probability
$\frac{1}{2}+\frac{1}{n}$. Assume otherwise. Then the probability that
%joe5
%$M_2$ plays $1$ when the commitment is for $0$ must be different than
$M_2$ plays $1$ when $M_1$ commits to $0$ must differ from
%joe4
%the probability it plays $1$ when the commitment is for $1$ by at
%joe5
%the probability that $M_2$ plays $1$ when the commitment is for $1$ by at
the probability that $M_2$ plays $1$ when $M_1$ commits to $1$ by at
%joe5*: I don't see why this is true.  Indeed, I can't follow th rest
%of this paragraph.
%lior8: This is simple. player 2 wins if he matches. which is
%0.5(P00)+0.5(1-P10) where Pb0 is the probability he plays 0 when
%player 1 plays b. for this to be more than 0.5+1/n it must be that
%P00-P11>2/n. 
%joe6*: Yes, I already saw this argument yesterday.  But you didn't talk
%about winning and matching.  You talked about the probability that
%M_2 plays 1 when M_1 commits to 1.  I would be happy if you changed
%the preceding sentence to ``the probability that M_2 matches must
%exceed the probability that M_1 matches by at least 2/n.
least $\frac{2}{n}$. But this means that $M_2$ can be used as a
%joe4
%distinguisher that runs in time $n^a=2^{ca'\log n}$ and distinguish
%with probability $\frac{2}{n}$ which is a contradiction for the fact
%the commitment scheme is exponentially hard and thus such an algorithm
%lior10
%distinguisher that runs in time $n^a=2^{ca'\log n}$ and distinguishes
distinguisher that runs in time $n^a$ and distinguishes
with probability $\frac{2}{n}$, which contradictions the assumption that
the commitment scheme is exponentially hard.  
%lior10: doesn't belong here.
%Thus, such an algorithm
%%joe4
%%can only break the commitment with probability $\frac{1}{n^{a}}$
%%which is smaller than $\frac{2}{n}$.  
%can break the commitment with probability only $\frac{1}{n^{a}}$, which
%is less than $\frac{2}{n}$.  

%joe4
%This means that with $M'_1$ player $1$ can get at least
%$1-(\frac{1}{2}+\frac{1}{n})$, and thus if $(M_1, M_2)$ is an
%$(\frac{1}{4}-\frac{2}{n})$-equilibrium player $1$ must get at least
This means that with $M'_1$, player $1$ can get at least
$1-(\frac{1}{2}+\frac{1}{n})$.  
%Thus, if $(M_1, M_2)$ is an
%$(\frac{1}{4}-\frac{2}{n})$-equilibrium, player $1$ must get at least
%$\frac{1}{2}-\frac{1}{n}-(\frac{1}{4}-\frac{2}{n})=\frac{1}{4}+\frac{1}{n}$
%joe4
%for all games $G_n$ for $n\geq n_0$. But this means the two players
%get together an expected payoff $1+\frac{2}{n}$ which can't happen in
%joe5: exists -> exist again
%Thus if $(M_1,M_2)$ is a NE for $\G$ then there must exists a
Thus, if $(M_1,M_2)$ is a computational NE for $\G$, then there must exist a
negligible function $\epsilon$ such that the expected payoff of
player $1$ in $G_n$ is at least $1-\frac{2}{n}-\epsilon(n)$ and the
expected payoff of player $2$ in $G_n$ is at least
$\frac{1}{2}-\frac{1}{n}-\epsilon(n)$.  
%joe5
%Since there combined expected payoff must be at most $1$, we have that
Since the combined expected payoff of the two players is at most $1$, we have that
%joe5*: I'm confused.  Shouldn't it be 3/n, not 2/n?  Where did the 2
%in 2\epsilon(n) come from? 
%lior8: you are right. and the 2 comes from summing up the two players 
$\frac{3}{2}-\frac{3}{n}-2\epsilon(n)<1$. 
%joe19
%But this requires $\epsilon$ to be at least $\frac{1}{4}-\frac{3}{2n}$
%which is not a negligible function. 
But this means that $\epsilon(n)  \ge \frac{1}{4}-\frac{3}{2n}$, so
$\epsilon$ is non-negligible.
\end{proof}

%lior21*: added this
We note that when the players are not computationally bounded, a
stateless TM can always implement a stateful TM. For example, 
%joe20
%in each history it is called to play, it can enumerate all possible random
%strings the stateful TM might used so far and select uniformly between
whenever it is called on to play, it can enumerate all possible random
%lior22
%strings that the stateful TM has used thus far, and select uniformly
strings that the stateful TM might have used thus far, and select uniformly
at random among
the ones consistent with the history of play so far. 
%lior24
\fi

\section{Application: Implementing a Correlated Equilibrium Without a
  Mediator}\label{sec:correlated}

%joe20
%In this section we show our approach can help us easily analyze
%protocols for implementing a correlated equilibrium (CE) 
%without a  mediator using cryptography. This problem was first introduced by
In this section, we show that our approach can help us analyze
protocols that use cryptography to implement a correlated equilibrium (CE) 
in a normal-form game.
%joe20: this problem was studied in econ well before DHR.  DHR were
%perhaps the first to use crypto
%This problem was first introduced by
Dodis, 
%joe20
%Halevi and Rabin~\cite{DHR00} who showed how to implement any CE by
%adding a preplay communication phase, and show their protocol is a NE
%if the players are computationally bounded (for some appropriate
%joe22
%Halevi, and Rabin~\cite{DHR00} (DHR) were the first to use
Halevi, and Rabin~\citeyear{DHR00} (DHR) were the first to use
cryptographic techniques to implement a CE.
They did so using a protocol that they showed was a NE,
provided that players are computationally bounded (for a
notion of computational NE that is related to ours).  
%joe22
%However, as discussed by Gradwohl, Livne, and Rosen~\cite{GLR13} 
However, as discussed by Gradwohl, Livne, and Rosen~\citeyear{GLR13} 
%joe20: lots of minor rewriting
%(GLR) there proposed protocol does not satisfy solution concepts that
%also require some sort of sequential rationality. They require
%punishing deviations 
%with a minimax strategy which is often bad for the punisher as well 
%and thus it with a minimax strategy which is often bad for the
%punisher as well and thus it is actually just an empty threat. To
%deal with this issue GLR introduce a solution concept they call
%\emph{Threat Free Equilibrium  (TFE)} that explicitly eliminates
%such empty threats. They additionally suggest a protocol to implement
%any CE that is a convex combination of NEs and is a TFE if the player
(GLR), DHR's proposed protocol does not satisfy solution concepts that also
require some sort of sequential rationality. DHR's protocol punishes deviations
%lior22
%using a minimax strategy that may give the punisher as well the player
using a minimax strategy that may give the punisher as well as the player
being punished a worse payoff; 
thus, it is just an empty threat. To deal with this issue, GLR
introduce a solution concept that they call \emph{Threat Free Equilibrium
(TFE)}, which explicitly eliminates such empty threats.  GLR 
additionally provide a protocol that can implement a CE in a
normal-form game that is a convex combination of NEs (CCNE), without
using a mediator; the GLR protocol is a TFE if the players  
are computationally bounded. 

%joe20: again, lots of minor rewriting
%We next show how using our approach we can easily analyze a protocol
%(which is similar to the one used in GLR) that implements a CE that
%is a convex combination of NEs (CCNE) and the protocol is also a
%computational sequential equilibrium. Unlike the GLR
%protocol, our protocol also works for more than 2 players. Moreover, since our
%solution concept also supports imperfect information games we can
%easily model the play of the normal form game for which the correlated
%equilibrium is played in by restricting the information of the players
%in the underlying extensive form game, and thus do not need to add
%special treatment for simultaneous moves.  
%We require that the CCNE is of finite support and that all its
%coefficients are rational numbers. Moreover, we require that each of
%the NEs in its support has coefficients which are rational numbers as
%well\footnote{GLR required a slightly stronger condition where all
%  coefficients are rational numbers which are a power of two}.
We now provide a protocol similar in spirit to the one used in
%lior25
%GLR that implements a CE that is  
%%lior22
%%a CCNE; our protocol is a computational sequential equilibrium if the
%%joe22
%%a pareto optimal CCNE; our protocol is a computational sequential
%a Pareto optimal CCNE; our protocol is a computational sequential
GLR that implements a CCNE; our protocol is a computational sequential
equilibrium if the 
players are computationally bounded. Unlike GLR, we are able to apply
our approach to CEs in games with more than 2 players,
%lior22*: I don't see why we need such a condition although they did have that. If you see a reason for this let 
%me know.
%joe22
%as well as implement CCNE that are not necessarily pareto optimal. 
as well as being able to implement CCNEs that are not Pareto optimal.  
%joe20*: I presume that GLR have a way of dealing with normal-form
%games.  This comes out of the blue here.  I rewrote this, guessing
%at what you meant, but I actually think it breaks the flow, and cut
%it altogether
\commentout{
Since GLR's approach applies only to games of perfect information,
they need to use special techniques to deal with the underlying
normal-form game where the CE is played.  Since computational
sequential equilibrium applies to games of imperfect information, we
can view the normal-form game as an extensive-form game where players
move in turn (using information sets to capture the assumption that
the second player does not know the first player's move).
}
%lior22: another try on this point.
%joe22
%One more advantage of our approach is that since we support imperfect
%information games, there is a natural way to model simultaneous moves,
%while GLR had to treat that in a special way in their model. 
One more advantage of our approach is that since we allow the
underlying game to be one of imperfect information, there is a natural
way to model a normal-form game (where players are assumed to move
simultaneously) as an extensive-form game: players just move
sequentially without learning what the other player does.
Since GLR's results apply only to games of perfect information, they
had to argue that they could extend their result to normal-form games.
%joe22: added paragraph break

We require that the CCNE is of finite support, that all its
coefficients are rational numbers, and that each of
the NEs in its support has coefficients that are rational numbers.%
%joe22: is this right?
%\footnote{GLR required a stronger condition; they required
\footnote{GLR also made these assumptions.  In fact, they required a
slightly stronger condition; they required 
%lior22
%  all the   coefficients to be rational numbers that are a power of
  all the coefficients to be rational numbers whose denominator is a power of
  two.} 
We call such CCNEs \emph{nice}. Note that any CCNE can be approximated to
arbitrary accuracy by a nice CCNE. 
%joe20: added paragraph break.  Also, you need to
%explain what the goal of the construction is!  Not everyone is
%familiar with this literature.  I added some sentences.
%We begin by introducing the underlying game we analyze and then show

\newcommand{\corr}{\mathit{corr}}
%joe20: you switched to using \Gamma for games instead of G; I
%switched back.
%lior22: my point was for \Gamma to be an normal form game for which
%we are implementing the CE and not to  
%confuse it with the role played be the underlying game G so far. They
%are not the same. 
%Do you really prefer it like this?
%Given a game $\Gamma$ with a nice CCNE $\pi$, we show how to convert it
%joe22: If you're going to change notation, you should explain why to
%the reader.  In this case, I have a marginal preference for just
%sticking to our notation from before, although I don't feel strongly
%about it.
%lior23: I don't have strong feeling either. I just thught that G was
%an extensive form game so far and \Gamma can be the normal form
%game. We can also just leave it this way. 
%Given a game $G$ with a nice CCNE $\pi$, we show how to convert it
Given a normal-form game $G$ with a nice CCNE $\pi$, we show how to convert it
to an extensive-form game 
$G_{\corr}$ that implements this CE without using cryptography, but
using envelopes; that is, $G_{\corr}$ has a sequential
equilibrium with the same distribution over outcomes in $G$ as
$\pi$.  We then show 
how to implement $G_{\corr}$ as a computational game using a
cryptographic protocol.  

%joe20: again, lots of minor changes.  You also keep switching between
%lior22: I am not sure what was your comment.
%Given a finite normal-form game $G$ and a \emph{nice} CCNE $\pi$
%of $\Gamma$, let $\ell$ be the least common denominator of $\pi$'s
%coefficients. Consider the game $G_{corr}$ in which player $1$ first
%puts an action in $\{0, \ldots ,\ell-1 \}$ in an envelope, then player
%$2$ plays an action in $\{0,\ldots ,\ell-1 \}$ without knowing what
%$2$ plays an element in $\{0,\ldots ,\ell-1 \}$ without knowing what
%player $1$ played (all of the histories are in the same information
Given $G$ and $\pi$, let $\ell$ be the least common denominator of
the coefficients of $\pi$.  Let $G_{\corr}$ be the game where
player $1$ first  
puts an element of $\{0, \ldots ,\ell-1 \}$ in an envelope, then player
$2$ plays an element in $\{0,\ldots ,\ell-1 \}$ without knowing what
player $1$ played (all the histories where player 2 makes his first
move are in the same information set of player 2). Then player $1$ can
either open the envelope or destroy it. All 
%joe20: this is very hard to parse!
%the histories in which player $1$ opens are in singleton information
%sets, while all the histories in which he destroys it and player $2$
%played the same action are in the same information set. 
the histories after player $1$ opens the envelope form singleton
%lior22
%information for player 2; 
information sets for the other players; 
all histories after player 1 destroys the envelope
%lior22
%and 2 initially played $j$ are in the same information set for player
%2, for $j \in \{0,\ldots, \ell-1\}$.
and 2 initially played $j$ are in the same information set for the players  
other than 1, for $j \in \{0,\ldots, \ell-1\}$.  
Then $G$
%joe20; again, this is hard to follow
%is played by each player (in some order) taking an action without
%knowing what the previous players played (all the histories that share
%ethe same preplay phase are in the same information sets). The payoff
%of $G_{corr}$ are according to the payoffs in $G$. 
is played.  (Note that $G$ might involve many players other than 1 and
2, but 1 and 2 are the only players who play in the initial part of
$G_{\corr}$.) 
%lior22: the game plot is much easier to parse this way. I don't want
%to change it 
%The players move sequentially: first player 1 moves, then player 2
%moves (without knowing player 1's move), then player 3 moves (with
%knowing 1 and 2's moves), and so on.
The players move sequentially: first player 2 moves, then player 1
%joe22
%moves (without knowing player 2's move), then player 3 moves (with
moves (without knowing player 2's move), then player 3 moves (without
knowing 1 and 2's moves), and so on.
The payoffs
of $G_{\corr}$ depend only on the players' moves when playing the $G$
component of $G_{\corr}$, and are the same as the payoffs in $G$.
See Figure~\ref{fig:corrGame} for a game $G_{corr}$ when $\ell$ is $2$ and
$G$ is a coordination game: that is, in $G$, each player moves
either left or right, and each gets a payoff of 1 if they make the
same move, and -1 if they make different moves.

%joe20: in the figure, you refer to players I and II; in the text,
%hyou call them players 1 and 2.  You should be consistent.
\begin{figure}
                \centering
                \includegraphics[width=0.6\textwidth]{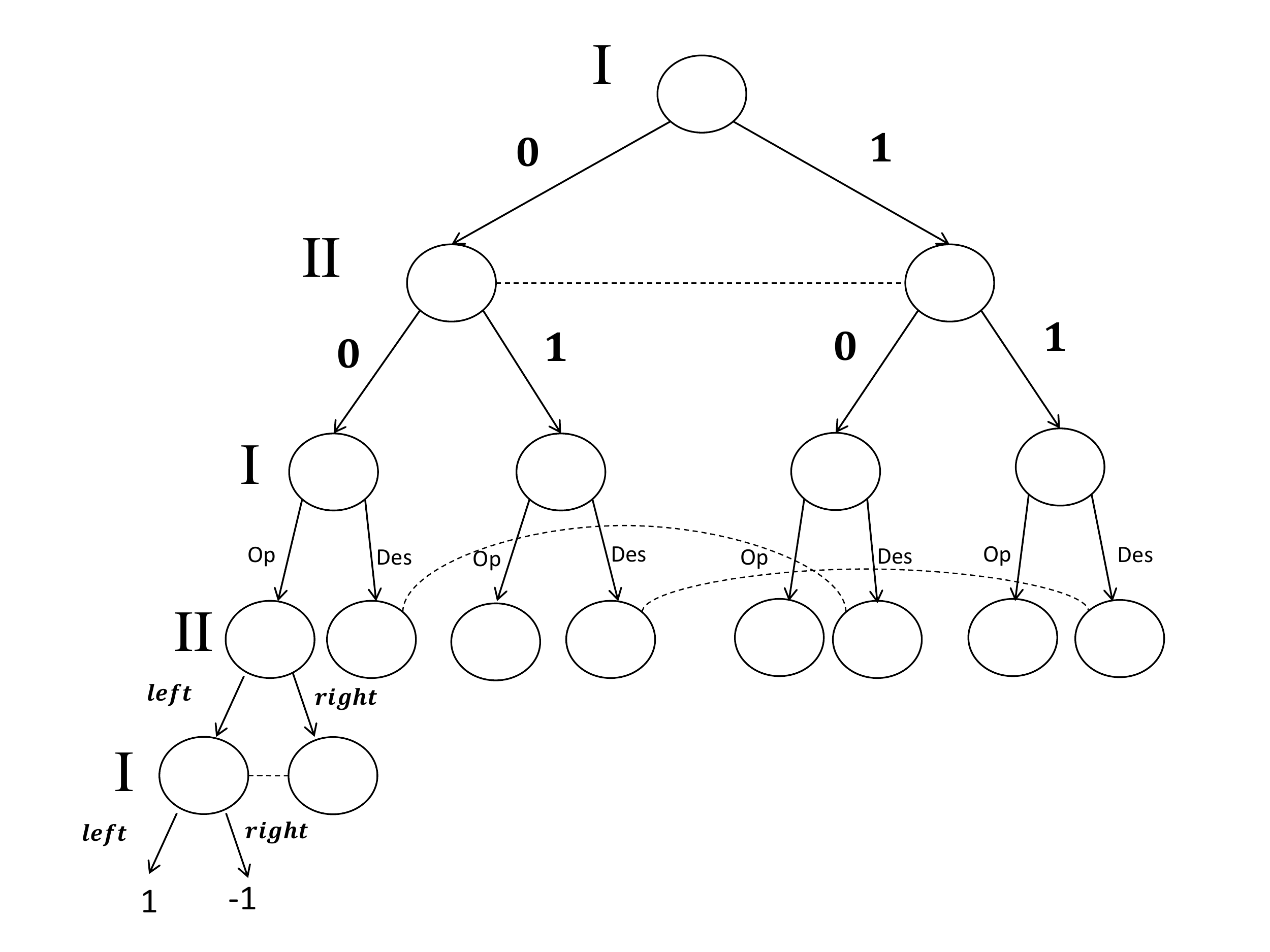}
 	\caption{An example of the game $G_{corr}$ where $\ell=2$ and $G$ is a coordination game}
	\label{fig:corrGame}
\end{figure}

%joe20: it may not be unique
%Let $\sigma$ be the NE in which player $1$ payoff is the worst of all NEs
Let $\sigma$ be a NE in $G$ in which player $1$'s payoff is no
better than it is in any other NE
in $G$. 
%joe20: moved this up from below, and rewrote
%lior22:removed par break
%
Now consider the following simple strategies for the 
%joe20
%players: At the empty history player $1$ selects an action $a$
players in $G_{\corr}$.
%joe20: moved up from below, and rewrote
%lior25
%Intuitively, the players start by picking an NE in the support of
Intuitively, the players start by picking a NE in the support of
$\pi$ to play,  with probability proportional to its coefficient in
$\pi$.  To this end, 
fix an ordering of length $\ell$ of the NEs in the support of $\pi$,
where each NE appears a number of times proportional to its weight in
the convex combination that makes up $\pi$.
At the empty history, player $1$ selects
an action $a$ 
%joe20
%uniformly at random from \hbox{$\{0, \ldots ,\ell-1 \}$}. Then player
uniformly at random from \hbox{$\{0, \ldots ,\ell-1 \}$} and puts it
in the envelope. Then player
$2$ also selects an action $b$ uniformly at random from \hbox{$\{0,
  \ldots ,\ell-1 \}$}. Then player $1$ opens the envelope. 
The players
then play the NE in place \hbox{($a+b \mod \ell$)} in 
%joe20
%some ordering of
%length $\ell$ of the NEs in the support of the CCNE where each NE
%appears a number of times proportional to its weight in the CCNE
%convex combination. 
%lior22
the ordering
of NEs.
%joe20
%If player $1$ does not open the players play
If player $1$ does not open, the players play
%joe20*: you should make this a proposition and prove it.
%lior22: This is really trivial. In the only info. set not played it doesn't matter what you believe since a NE is 
%played next. I don't see a reason to say more. 
%joe22: at least say that!
according to $\sigma$.  Call the resulting strategy profile $\vec{\sigma}_{\pi}$.
It is not hard to verify that $\vec{\sigma}_{\pi}$ 
%joe20: let's be a bit more precise.
%implement the CCNE and that there exists beliefs for which these
%strategies with those beliefs froms a sequential equilibrium of the game.
%lior25
%implements the CCNE, and that there exists beliefs $\mu$ such that
%joe24
%implements $\pi$, and that there exists beliefs $\mu$ such that
implements $\pi$, and that there exists a probability measure $\mu$ such that
$(\vec{\sigma}_{\pi},\mu)$ is a sequential equilibrium of $G_{\corr}$. 
%lior23:
%joe25
%The beliefs $\mu$ are trivial: The only information sets not played
Defining $\mu$ is easy: the only information sets not reached with
positive probability (and hence $\mu$ is determined)
%joe23
%are the one where ``destroy" is played, and the players strategies
%there are to play $\sigma$ so it does not really matter how exactly
%they believe they got there. 
are the one where ``destroy" is played. At that point, the players'
play $\sigma$, so they are best responding to each other, no matter
what their beliefs are.

%joe19
%So now all we have to show a sequence $\G_{\corr}$ that implements
%$G_{corr}$ where the games of the sequence use cryptography instead of
%an envelope for the first part of the game (the second place
%information sets are used is just modeling the simultaneous play of
%$G$ and thus we do not need to deal with it). Let $d$ be such
So now all we have to provide is a computational game $\G_{\corr}$ that represents
$G_{corr}$, where the games in $\G_{\corr}$ use cryptography instead
an envelope for the first part of the game.
Let $d$ be such
that $2^{d-1}\leq \ell <2^d$. Let $\G_{corr}$ be the sequence where
%joe20
%$G_n$ is the game where at the empty history player $1$ commits to a
%$d$ bit string by using $d$ commitments in parallel each with key
%length $n-1$ and outputs its commitment strings as his action. Player
%$2$ then plays a bitstring of length $d$ which is a binary
$G_n$ is the game where, at the empty history, player $1$ commits to a
$d$-bit string by using $d$ commitments in parallel, each with key
length $n-1$ and outputs the $d$ commitment strings as his action. Player
$2$ then plays a bitstring of length $d$ that can be viewed as a binary
representation of a number in $\{0, \ldots ,\ell-1 \}$. Player $1$
then plays a string that is intended to be the commitment keys of the
$d$ commitments. Then the players play a string representing their
action in $G$ (again using its binary representation). 
%lior25
%joe25
%The utilities are then according to their utilities in $G$.
The utility are then given by the utility functions in $G$.

%lior24
We now claim that $\G_{corr}$ represents $G_{corr}$. 

\begin{theorem}
%joe20
\label{thm:Gcorr}
$\G_{corr}$ represents $G_{corr}$.
\end{theorem}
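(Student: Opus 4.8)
The plan is to verify the four conditions of Definition~\ref{def:represents} (UG1--UG4) for the pair $(\G_{\corr}, G_{\corr})$, following essentially the same strategy as in the proof of Lemma~\ref{lem:coinFlipRep}, but with the extra bookkeeping forced by the $d$ parallel commitments, player~2 moving first, and the presence of the normal-form game $G$ played at the end.

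\textbf{The computable uniform sequence and the structural conditions.} First I would check that $\G_{\corr}$ is a computable uniform sequence: all games involve the same set of players (those of $G$, with players~1 and~2 distinguished), actions in $G_n$ are bounded in length by a polynomial in $n$ (each of the $d$ commitment strings has length $n$, and the actions in the $G$ part have constant length), membership in $H_n$ and the player-to-move function $P$ are polynomial-time computable, and the utilities are polynomial-time computable via the receiver algorithm $R$ of the commitment scheme (to decide whether player~1's purported keys are the correct ones, hence whether the envelope is ``opened'' or ``destroyed''). Next I would define the history maps $f_n$: the $d$ commitment strings $C_1(1^n, b_j, r_j)$ output by player~1 at the empty history map to the integer in $\{0,\dots,\ell-1\}$ whose binary representation is $b_1\cdots b_d$ (if $b_1\cdots b_d \geq \ell$, map it to, say, $0$ --- or we can simply assume player~1's honest strategy never produces such strings and handle off-path histories arbitrarily subject to UG2); player~2's bitstring maps to the corresponding number in $\{0,\dots,\ell-1\}$; player~1 supplying the $d$ correct keys maps to ``open'', and any other string maps to ``destroy''; the action strings in the $G$ part map to the corresponding actions of $G$. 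One then checks UG2(a)--(e): lengths are preserved (same number of actions), the same player moves, subhistories map to subhistories, and --- crucially --- histories in the same information set of $G_n$ map to histories in the same information set of $G_{\corr}$; since $\G_{\corr}$'s games are perfect-information except that player~2 moves before seeing anything, and after ``destroy'' the $G$-component histories that differ only in player~1's and player~2's first moves must be identified, one verifies the identifications built into $G_{\corr}$ are coarsenings of those in $G_n$. UG2(e) holds because the last action of $f_n(h\|a)$ depends only on $a$ together with enough of $h$ to know which commitment keys are ``correct'', and that data is constant on information sets of $G_n$. UG3 (utilities agree) is immediate from the definition of the utilities in $\G_{\corr}$.

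\textbf{The strategy map $\F$ and condition UG4.} I would define $\F$ on deterministic strategies and extend to mixed strategies as the excerpt explains. For player~1: to ``put $a$ in the envelope'' in $G_n$, player~1 commits to the $d$-bit binary representation of $a$ using $d$ parallel runs of $C_1$ with key length $n-1$, storing all randomness in its state; to ``open'', it recomputes and outputs the $d$ commitment keys $C_2(1^n,\cdot,\cdot)$; to ``destroy'', it outputs some string differing from the correct key tuple. For player~2 and the other players: they just play the binary encodings of their $G_{\corr}$-actions; note player~2's first move cannot depend on player~1's commitment in $G_{\corr}$ exactly as in $G_{\corr}$ (same information set). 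UG4(a) (correspondence, i.e. statistical closeness of the induced distributions on terminal histories of $G_{\corr}$) is immediate since by perfect binding the honest opening always succeeds, so the induced outcome distribution in $G_n$ is \emph{identical} to that of $\vec\sigma$ in $G_{\corr}$ (error zero, hence negligible). UG4(b) (the strategy ``knows'' its underlying action) holds because $M^\sigma$ can, from the view and randomness, recompute which action $\F(\sigma)$ plays and read off the corresponding $G_{\corr}$-action. UG4(c) is the real content: given any polynomial-time deviation $M'_i$ and the honest profile on the others, I need a sequence $\sigma'_1,\sigma'_2,\dots$ of $G_{\corr}$-strategies with $\{\phi^{G_n}_{(M'_i,\F(\vec\sigma_{-i}))}\}_n$ computationally indistinguishable from $\{\rho^{G}_{(\sigma'_n,\vec\sigma_{-i})}\}_n$.

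\textbf{The main obstacle: UG4(c) for a deviating player~2 (and players~3,\ldots).} For a deviating player~1 this is easy, essentially as in Lemma~\ref{lem:coinFlipRep}: from $M'_1$ one reads off, for each $n$, the induced distribution over (which integer is ``committed'', whether it is later ``opened'' correctly, and the subsequent $G$-play), and lets $\sigma'_n$ be the $G_{\corr}$-strategy playing that distribution; the match is exact. The delicate case is a deviating player~2 (or a later player), because player~2's information set in $G_n$ is \emph{finer} than in $G_{\corr}$: in $G_n$ player~2 sees the $d$ actual commitment strings, whereas in $G_{\corr}$ it sees nothing about player~1's move. Here I would let $\sigma'_n$ be the $G_{\corr}$-strategy for player~2 that plays, at its first move, the distribution $P_n$ over $\{0,\dots,\ell-1\}$ that $M'_2$ induces when facing $(\F(\sigma_1),M'_2)$ in $G_n$, and plays $M'_2$'s induced distribution at the later ($G$-component) moves; the claim is that $\{\phi^{G_n}_{(\F(\sigma_1),M'_2)}\}_n$ is indistinguishable from $\{\rho^{G}_{(\sigma_1,\sigma'_n)}\}_n$. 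If not, then for infinitely many $n$, player~2's output distribution differs non-negligibly depending on which integer player~1 committed to --- but by a standard hybrid argument over the $d$ coordinates, and using that player~1 plays the honest commitment $\F(\sigma_1)$ with truly random keys, this yields a PPT distinguisher breaking the hiding property of (one of the $d$ parallel instances of) the commitment scheme, a contradiction. The hybrid step over the $d$ coordinates is exactly where the single-bit argument of Lemma~\ref{lem:coinFlipRep} has to be generalized, and is the step I expect to require the most care; the presence of the final normal-form game $G$ adds no difficulty, since those moves are in singleton information sets and are handled coordinatewise. Assembling UG1--UG4 then gives $\G_{\corr}$ represents $G_{\corr}$.
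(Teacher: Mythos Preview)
Your proposal is correct and follows essentially the same route as the paper: you verify the computable-uniform-sequence conditions, define the natural history map $f_n$ and strategy map $\F$ via the parallel commitments, and obtain UG4(c) for player~2 (and the other non-committing players) by reducing a non-negligible difference in $M'_2$'s output to a distinguisher against the hiding property of the $d$ parallel commitments. The only cosmetic differences are that the paper maps a committed $d$-bit string $s$ to $s \bmod \ell$ (and correspondingly has $\F(\sigma_1)$ pick uniformly among the at most two $s$ with $s \equiv a \pmod{\ell}$) rather than your ``map over-large strings to $0$'' convention, and the paper is slightly more explicit about conditioning player~2's later-move distributions on whether the commitment was opened and on player~2's own first move; neither affects the argument.
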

\begin{proof}
It is obvious that $\G_{corr}$ is a computable uniform sequence. We
now show that it represents $G_{corr}$. The mappings $\vec{f}$ of
%joe20
%histories maps commitments to a string $s$ to the action $s \mod \ell$
histories maps player 1's commitments to a string $s$ to the action $s
\mod \ell$. 
%joe20
%(Notice that the fact we used a few commitments in parallel does not
(Notice that the fact we used $d$ commitments in parallel does not
change the fact that the commitments are perfectly binding and thus
%joe20
%this is well defined).  
this is well defined.)  
Actions of player $2$ are mapped to an action $s \mod \ell$ according
%joe20
%to their binary representation, $d$ valid keys are mapped to ``open"
%and any action that has any invalid key is mapped to ``destroy". The
to their binary representation; if player 1 reveals $d$ valid keys in
$h$, then in $f_n(h)$ he plays ``open", and otherwise he plays
``destroy"; the
actions of $G$ are mapped in the obvious way.  

To show that UG4 holds, we proceed as follows: The mapping $\F$ for
%joe22
%any player other than $1$ and $2$ is obvious:
a player $j$ other than $1$ and $2$ is obvious:
%joe20
%lior22
%he does the same thing in $G$, $G_{\corr}$, and $\G_{\corr}$.
%joe20*: What are you trying to do in the next few sentences?  Prove
%UG4(b)?  That's what I thought you were trying to do but wasn't sure.
%In any case, it's better to do UG4(a).  So I just cut the following material.
%lior22: No. I am describing the mapping. For example, we need to explain how player 3 plays in each history. If in G when player 1 plays destroy he plays something, we need to show that he can know that destroy was played. I rewrote it. Does it make sense now?
%\commentout{
It is easy to compute using the TM $R$ of the commitment scheme if the
%joe22
%commitments were opened successfully or not and thus compute in which
%information set of $G_{corr}$ they are at, and play the binary
%representation of the action that the strategy plays in that
commitments were opened successfully or not, so $j$ can compute at which
information set of $G_{corr}$ he is at (given his view), and play the binary
representation of the action that the strategy plays at that
information set. 
%lior22
%}
%joe2
%Player $2$'s first action in $G_{corr}$ can't depend on player 1's action,
For player 2, note that player $2$'s first action in $G_{\corr}$ can't
depend on player 1's action, 
since player 2's information set contains all the histories. Thus, a 
deterministic strategy $\sigma_2$ for player $2$ in $G_{\corr}$   
just plays an action in $\{0,\ldots,\ell-1\}$;   
%joe20
%the corresponding strategy $\F(\sigma_2)$ just plays the corresponding
%string there.  
$\F(\sigma_2)$ just plays the same action at player 2's first
information set in $\G_{\corr}$.  
%joe2
%Similarly to the other players $\F(\sigma_2)$ can also play the same
%as $\Sigma_2$ when called to play again in $G$.
%lior22  
Similarly to the other players,
$\F(\sigma_2)$ also plays the same action in $G$ as $\sigma_2$ when
player 2 is called upon to play again.
%joe20
Given a deterministic strategy $\sigma_1$ for player 1,
if $\sigma_1$ plays $a$ at the first step in $\G_{\corr}$,  
%joe20
%$\F(\sigma_1)$ to play an action $a$ at the empty history in $G_n$,
%chooses uniformly at random one of the $d$ long bit strings such that
%$s = a \mod \ell$ (there are at most $2$ such strings) and plays the
$\F(\sigma_1)$ chooses uniformly at random one of the $d$-bit strings such that
$s = a \mod \ell$ (there are at most $2$ such strings), and plays the
commitments strings 
$C_1(1^n,s_1,r_1),\ldots,C_d(1^n,s_d,r_d)$, 
%joe20*: what does the next line mean?   I would have thought that r
%is the relevant prefix of it's (infinite) random tape, but you've
%changed the model, it seems.
%lior22
%where $r=r_1||\ldots||r_d$ is the random string used.
where $r=r_1||\ldots||r_d$ is the prefix of the random tape 
%joe22
%of appropriate length. 
representing the randomness used to compute the commitment strings.
%joe20
%To play the action ``open'' it  computes $k_i= C_2(1^n,s_i,r_i)$ and
%play $k_1||\ldots||k_d$, and to play ``destroy''  
%string different than the right keys). Again, it is obvious how player
To play the action ``open'', it  computes $k_i= C_2(1^n,s_i,r_i)$ and
play $k_1||\ldots||k_d$; to play ``destroy'',  
it plays $k_1||\ldots||k_d\oplus1$ (a   
string other than the right keys). Again, it is obvious how player
$1$ plays in $G$. 
It is easy to see that $\F(\vec{\sigma})$ corresponds to
%joe20
%$\vec{\sigma}$ and thus UG4(a) holds. UG4(b) holds 
$\vec{\sigma}$, so UG4(a) holds. UG4(b) holds 
for all players trivially given these strategies. 

It is also obvious that UG4(c) holds for player $1$. Since the
information structure it faces at $\G_{corr}$ and $G_{corr}$ is 
%joe20
%the same anything it can do in $\G$ can be done by a strategy in
essentially the same, anything it can do in $\G_{\corr}$ can be done
by a strategy in 
$G_{corr}$ by just looking at the distribution of actions in histories
that map to each information set.  

%joe20
%For the other players that is not the situation as they get to see the
%commitment strings in $\G_{corr}$ and thus commitments representing
%different choices in $G_{corr}$ are not in the same information set in
%$\G_{corr}$.  We discuss only player $2$, as for the rest of the
%players the arguments are similar and simpler. Let $\sigma_i$ for
%$i\neq 2$ be a strategy of player $i$ in $G_{corr}$ and let
The other players have different information structures in
$\G_{\corr}$ and $G_{\corr}$, since they see the commitment strings in 
$\G_{corr}$.  We discuss UG4(c) for player 2 here; the argument in
the case of the others is similar (and simpler). Let $\sigma_i$ for
$i\neq 2$ be a strategy for player $i$ in $G_{corr}$, and let
$M_{i}=\F(\sigma_i)$. Let $M'$ be an arbitrary polynomial time
strategy for player $2$ in $\G_{corr}$, and let $D^n_1$ be the
%joe20: is this what you meant?
%distribution over actions of $M'$ at the its first action when
%$(M_1,M',\ldots,M_c)$ is played in $G_n$ (independent of what the
%commitment string is), $D^n_{j,w}$ be the distribution over action of
%lior22
%distribution over the actions of $M'$ at player 2's first information
%set in $G_n$ (which must be (independent of  the
%joe22*: I can't make sense of the next line.  What was wrong with what
%I wrote?  I tried to rewrite it, but perhaps I still don't understand
%distribution over the actions of $M'$ first action in $G_n$
%(independent of  the 
%lior23: It doesn't have to be independent. I am trying to say that I
%looking at the overall distribution and not conditioning on player
%1's actions in any way. Even if the strategy does condition in some
%way.  
distribution $M'$'s first action in $G_n$%
%lior23: just removed this.
%(which must be independent of the 
%commitment string)
; let $D^n_{j,w}$ be the distribution over the actions of
$M'$ in $G$ given that the commitment was opened successfully, player
%joe20
%$1$ committed to $j$ and player $2$ played $w$ and $D^n_w$ be the
%distribution over actions of $M'$ in $G$ when the commitment was not
$1$ committed to $j$, and player $2$'s first move was $w$; and
let $D^n_w$ be the distribution over the actions of $M'$ in $G_n$ if
the commitment is not opened successfully and player $2$'s first move was $w$. 
Let $\sigma'_n$ be a strategy in $G_{\corr}$ for player $2$ that plays
according to these distributions. 
%joe20
%We now claim that $\{\phi^{G_n}_{(M_1,M',\ldots,M_c)}\}_n$ is
We claim that $\{\phi^{G_n}_{(M_1,M',\ldots,M_c)}\}_n$ is
indistinguishable from
%joe20
%$\{\rho^{G_{corr}}_{\sigma_1,\sigma'_n,\ldots,\sigma_c}\}_n$.  
$\{\rho^{G_{corr}}_{(\sigma_1,\sigma'_n,\ldots,\sigma_c)}\}_n$.  

Let $\phi^{G_n,1}_{(M_1,M',\ldots,M_c)}$ be the distribution over
histories ending at the first action of player $2$ when
$(M_1,M',\ldots,M_c)$ is played in $G_n$ and mapped using $f_n$ to
%joe20
%histories of $G_{corr}$ and let
%$\rho^{G_{corr},1}_{\sigma_1,\sigma'_n,\ldots,\sigma_c}$ be the
histories of $G_{corr}$, and let
$\rho^{G_{corr},1}_{(\sigma_1,\sigma'_n,\ldots,\sigma_c)}$ be the
distribution over partial histories ending at the first action of
player $2$ when $(\sigma_1,\sigma'_n,\ldots,\sigma_c)$ is played in
$G_{corr}$. We first claim that
$\{\phi^{G_n,1}_{(M_1,M',\ldots,M_c)}\}_n$ is indistinguishable from
%joe20
%$\{\rho^{G_{corr},1}_{\sigma_1,\sigma'_n,\ldots,\sigma_c}\}_n$. 
%Assume, by way of contradiction, that it does not. 
$\{\rho^{G_{corr},1}_{(\sigma_1,\sigma'_n,\ldots,\sigma_c)}\}_n$. 
Assume, by way of contradiction, that it is not. 
This can happen only if, for
infinitely many $n$, $M'$ plays some action $a$ with 
%joe20*: this is what you meant, right?
%non-negligibly different probabilities, depending on if it is faced
%with commitments to different strings $s$ and $s'$. 
probabilities that differ non-negligibly, depending on whether it is
faced with a commitment to different strings $s$ or $s'$.
But that means that for infinitely many $n$, it can distinguish those
two events with 
non-negligible probability. This contradicts the assumption that the
%joe20*: what does it mean for a commitment to be ``hiding''. In any
%case, is this known to be true?  
%commitment scheme is secure (Note that it is easy to show that because
%a single commitment is hiding then even when $d$ such commitments are
%run in parallel no polynomial time TM should be able to distinguish
%between commitments to $s$ and $s'$). 
%lior22: added to the commitment definition which of the properties is
%called hiding. This is known. 
commitment scheme is secure. (Note that it is easy to show that, because
%joe22: it still doesn't sound right
%a single commitment is hiding, then even when $d$ such commitments are
a single commitment has the hiding property, then even when $d$ such commitments are
run in parallel, no polynomial-time TM should be able to distinguish
between commitments to $s$ and $s'$.) 

It is easy to see that this also means that the distribution over
partial histories just before player $2$ plays again are also
%joe20
%indistinguishable. Now if the commitment was opened successfully then
%the information structure player $2$ faces here is the same as in $G$
indistinguishable. Now if the commitment is opened successfully, then
the information structure player $2$ faces in $\G_{\corr}$ is the
same as in $G_{\corr}$, 
and thus the statement is obviously true. If the commitments were not
%joe20
%opened, than by using a similar argument to player 2's first action,
opened, than by using a argument similar to that used for player 2's
first action, 
we can argue that if the distributions over partial histories just
%joe20
%after player $2$ plays again are not indistinguishable then again we
after player $2$ plays again are not indistinguishable, then again we
can use that as a distinguisher for the commitment scheme. 
%joe20: no need; we've already said this.
%The arguments for the other players is identical to the arguments for
%player $2$ in this paragraph. 
\end{proof}

%joe20
%Given this Theorem and our previous results we have that the mapped
%strategies introduced in the proof of the Theorem implement the CCNE
%and form a sequential equilibrium of $\G$.  
By Theorems~\ref{seqeqrep} and \ref{thm:Gcorr}, since
$\vec{\sigma}_{\mu}$ (with the appropriate beliefs) is a sequential
equilibrium of $G_{\corr}$, $\F(\vec{\sigma}_{\mu})$ is a
computational sequential equilibrium of $\G_{\corr}$. 

\section{Conclusion}
The model introduced in this paper is a first step towards a better
%joe8
%understanding of polynomial bounded players playing finite games,
%that already allows us to analyze some interesting problems and gain
%useful insights as to the strategic behavior of such players. For
understanding of polynomially bounded players playing finite games.
%lior24
%joe24
%We introduce conditions under which computational NE exists, and give
We defined a sense in which a sequence $\G$ of games represents a
single underlying game $G$, 
gave a novel definition of a computational sequential equilibrium, and
%lior25
%provided conditions under a computational sequential equilibrium (and
%hence also a computational NE) exists in $\G$, 
provided conditions under which a computational sequential equilibrium (and
hence also a computational NE) exists in $\G$.
Moreover, the model allows us to separate the cryptographic
analysis from the strategic analysis.
%lior26
%joe27
%We show how to use our model and definitions to analyze complex
%protocol, and capture our intuitions about the rational behavior of
%the players in those protocols, thus showing that our definitions
%indeed capture these intuitions. 
We show how we can use our model and definitions to analyze complex
cryptographic protocols in a way that captures our intuitions about
the rational behavior of the players in those protocols.

%joe8
%An important next step is to change the model so it can capture more
An important next step is to refine the model so it can capture more
complex cryptographic protocols. For example, some cryptographic
protocols do not have a unique map between histories and actions 
%joe24
%(for example, a computationally binding commitment can map a string to
(e.g., a computationally binding commitment can map a string to
both  
%joe8*: As I said, I don't think this will be a problem if we define
%history right.  So I'm not sure that I would say exactly this.  Let's
%discuss.  
$0$ or $1$ depending on the key). 
%joe8*: so why is this a problem? We need to explain the issue here
%They also might have abstract action
They also might have abstract actions
that are hard to compute (for instance, there might be strings that
are not valid commitments at all but it might be hard to compute
them), or require a few implementation steps to implement one abstract
step. One possible direction is to map a history \emph{and} the TMs'
%lior25
%memory states into histories in the game. While this might solve some
views into histories in the game. While this might solve some
%joe8*: what new challenges?
%of the issues raised it also introduces new challenges, and we intend
%to investigate those. 
%joe20*: see previous joe8*
%lior22: I don't want to get into that. It is too complicated to
%explain exactly. For example, if you do that an unbounded player can
%suddenly play actions which a bounded player can't which means you
%need to model action which are no available to polynomially bounded
%players and it is unclear how to get the rest of our results if this
%is the case. 
%joe22: this is too mysterious.  If we can't explain it, we should cut
%it.  Specifically, I would cut the ``One possible direction''
%sentence.  Instead perhaps say ``All these extensions raise new
%subtleties, which we plan to investigate.'' 
of the issues raised, it also introduces new challenges, which we intend
to investigate.

%lior26: added. I would have liked to discuss this earlier (perhaps in
%the intro?) but could not find a good way to add that there. 
%joe27
%While in this paper we only focus on computationally bounded players
%represented by polynomial-time TMs (which are most suitable for the
While in this paper we focus only on computationally bounded players
represented by polynomial-time TMs (which seems most appropriate for the
cryptographic applications we have in mind), we believe that our
approach of relating a sequence of games to a single underlying game,
and capturing computational indisitingushability with the information
%joe27
%structure of this game (as well as other general ideas in this paper)
structure of this game can be applied to other models of computations with the appropriate
adaptation of computational indisitingushability.

% Bibliography
\bibliographystyle{plain}
\bibliography{z,joe}

% Appendix
%\appendix
%\section*{APPENDIX}
%\setcounter{section}{1}

\end{document}